\DeclareMathOperator{\tr}{Tr}
\DeclareMathOperator{\spn}{span}
\DeclareMathOperator{\poly}{poly}
\DeclareMathOperator{\polylog}{polylog}
\DeclareMathOperator{\diam}{diam}
\newcommand{\q}{\mathbf{q}}
\newcommand{\Q}{\mathbf{Q}}
\newcommand{\SigmaOp}{\mathbf{\Sigma}}
\newcommand{\F}{\mathbf{F}}
\newcommand{\kernel}{\mathbf{k}}
\newcommand{\U}{\mathbf{U}}
\newcommand{\A}{\mathbf{A}}
\newcommand{\B}{\mathbf{B}}
\newcommand{\Lam}{\mathbf{\Lambda}}
\newcommand{\R}{\mathbf{R}}
\theoremstyle{plain}
\newtheorem{theorem}{Theorem}
\newtheorem{lemma}{Lemma}
\newtheorem{proposition}{Proposition}
\theoremstyle{definition}
\theoremstyle{remark}
\newtheorem{conjecture}{Conjecture}
\newtheorem{remark}[conjecture]{Remark}
\title{Learning with Optimized Random Features: Exponential Speedup by Quantum Machine Learning without Sparsity and Low-Rank Assumptions}
\author{%
  Hayata Yamasaki\\
  The University of Tokyo,
  Austrian Academy of Sciences\\
  \texttt{hayata.yamasaki@gmail.com}
  \And%
  Sathyawageeswar Subramanian\\
  University of Cambridge,\\
  University of Warwick
  \AND%
  Sho Sonoda\\
  RIKEN AIP
  \And%
  Masato Koashi\\
  The University of Tokyo

}
\begin{document}

\maketitle

\begin{abstract}
Kernel methods augmented with random features give scalable algorithms for learning from big data.
But it has been computationally hard to sample random features according to a probability distribution that is optimized for the data, so as to minimize the required number of features for achieving the learning to  a desired accuracy.
Here, we develop a quantum algorithm for sampling from this optimized distribution over features, in runtime $O(D)$ that is linear in the dimension $D$ of the input data.
Our algorithm achieves an exponential speedup in $D$ compared to any known classical algorithm for this sampling task.
In contrast to existing quantum machine learning algorithms,
our algorithm circumvents sparsity and low-rank assumptions and thus has wide applicability.
We also show that the sampled features can be combined with regression by stochastic gradient descent to achieve the learning without canceling out our exponential speedup.
Our algorithm based on sampling optimized random features leads to an accelerated framework for machine learning that takes advantage of quantum computers.
\end{abstract}

\section{\label{sec:intro}Introduction}

Random features~\cite{R2,R3} provide a powerful technique for scaling up kernel methods~\cite{S5} applicable to various machine learning tasks, such as ridge regression~\cite{NIPS2017_6914}, kernel learning~\cite{NIPS2016_6180}, and principle component analysis~\cite{NIPS2018_7961}. Recently,~\citet{B1} has shown an optimized probability distribution of random features, and sampling features from this optimized distribution would drastically improve runtime of learning algorithms based on the random features.
However, this sampling task has been computationally ``hard in practice''~\cite{B1} due to inversion of a high-dimensional operator.
In contrast, the power of quantum computers to process data in quantum superposition attracts growing attention towards accelerating learning tasks, opening a new field: quantum machine learning (QML)~\cite{biamonte2017quantum,doi:10.1098/rspa.2017.0551,dunjko2018machine}.
In this work, we develop a framework of QML that accelerates a supervised learning task, by constructing an efficient quantum algorithm for sampling from this optimized distribution.

\textbf{Learning with random features:} Supervised learning deals with the problem of estimating an unknown function $y=f(x)$. We will consider $D$-dimensional input $x\in\mathbb{R}^D$ and real-valued output $y\in\mathbb{R}$.
Given $N$ input-output pairs of examples, we want to learn $f$ to a desired accuracy ${\epsilon}>0$.
To model $f$, kernel methods use reproducing kernel Hilbert space (RKHS) associated with a symmetric, positive semidefinite function $k(x^\prime,x)$, \emph{the kernel}~\cite{S5}.
Typical kernel methods that compute an $N\times N$ Gram matrix may not be scalable as $N$ gets large,
but random features~\cite{R2,R3}, along with other techniques via low-rank matrix approximation~\cite{10.5555/645529.657980,10.5555/3008751.3008847,10.5555/944790.944812}, enable scalable kernel-based algorithms.

Algorithms using random features are based on the fact that we can represent any translation-invariant kernel $k$ as expectation of a feature map $\varphi(v,x)=\mathrm{e}^{-2\pi\mathrm{i}v\cdot x}$ over a probability measure $d\tau(v)$ corresponding to the kernel.
Conventional algorithms using random features~\cite{R2,R3} sample $M$ $D$-dimensional parameters $v_0,\ldots,v_{M-1}\in\mathbb{R}^D$ from the distribution $d\tau(v)$ to determine $M$ features $\varphi(v_m,\cdot)$ in time $O(MD)$.
For a class of kernels such as Gaussian, this runtime may be reduced to $O(M\log D)$~\cite{L1,NIPS2016_6246}.
We learn the function $f$ using a linear combination of the $M$ features, i.e.,
\begin{equation}
  \label{eq:estimate}
  f(x)\approx \sum_{m=0}^{M-1}\alpha_m\varphi(v_m,x) \eqqcolon \hat{f}_{M,v_m,\alpha_m}(x).
\end{equation}
To achieve the learning to accuracy $O({\epsilon})$, we need to sample a sufficiently large number $M$ of features.
Once we fix $M$ features, we calculate coefficients $\alpha_m$ by linear (or ridge) regression to minimize an error between $f$ and $\hat{f}_{M,v_m,\alpha_m}$
using the $N$ given examples~\cite{carratino2018learning,R3,NIPS2017_6914}.
The sampling of features and the regression of coefficients can be performed simultaneously via doubly stochastic gradients~\cite{NIPS2014_5238}.

\textbf{Problem:} These conventional algorithms using random features sampled from the data-independent distribution $d\tau(v)$ require a \textit{large} number $M$ of features to learn the function $f$,
which \textit{slows down} the decision of all $M$ features and the regression over $M$ coefficients.
To improve this,
we aim to minimize $M$ required for the learning.
Rather than sampling from $d\tau(v)$,
we will sample features from a probability distribution that puts greater weight on \emph{important features optimized for the data} via a probability density function $q(v)$ for $d\tau(v)$.
To minimize $M$ achieving the accuracy $O({\epsilon})$,
\citet{B1} provides an optimized probability density function $q_{\epsilon}^\ast(v)$ for $d\tau(v)$ (see~\eqref{eq:q}, Sec.~\ref{sec:learning}).
This optimized $q_{\epsilon}^\ast(v)$ achieves minimal $M$ up to a logarithmic gap among all algorithms using random features for accuracy $\epsilon$~\cite{B1}.
It \textit{significantly improves} $M$ compared to sampling from $d\tau(v)$~\cite{B1,NIPS2017_6914,NIPS2018_7598}; e.g., to achieve learning with the Gaussian kernel from data given according to a sub-Gaussian distribution, compared to sampling from the data-independent distribution $d\tau(v)$ in Refs.~\cite{R2,R3}, the required number $M$ of features sampled from the optimized distribution $q_{\epsilon}^\ast(v)d\tau(v)$ can be \textit{exponentially small} in $\epsilon$~\cite{B1}.
We call features sampled from $q_{\epsilon}^\ast(v)d\tau(v)$ \textit{optimized random features}.

However, the sampling from $q_{\epsilon}^\ast(v)d\tau(v)$ has been ``hard in practice''~\cite{B1} for two reasons.
First, the definition~\eqref{eq:q} of $q_{\epsilon}^\ast(v)$ includes an \textit{infinite-dimensional operator} ${(\Sigma+{\epsilon}\mathbbm{1})}{}^{-1}$ on the space of functions $f:\mathbb{R}^D\to\mathbb{R}$ with $D$-dimensional input data, which is intractable to calculate by computer without approximation.
Second, even if we approximate ${\Sigma+{\epsilon}\mathbbm{1}}$ by an operator on a finite-dimensional space, the \textit{inverse operator} approximating ${(\Sigma+{\epsilon}\mathbbm{1})}{}^{-1}$ is still hard to calculate; in particular, for achieving a desired accuracy in the approximation, the required dimension of this finite-dimensional space can be exponentially large in $D$, i.e., $O(\exp(D))$~\cite{NIPS2018_7598,Shahrampour2019}, and \textit{no known algorithm} can calculate the inverse of the $O(\exp(D))$-dimensional operator in general \textit{within sub-exponential time} in $D$.

Note that Refs.~\cite{pmlr-v70-avron17a,Liu2019} propose probability density functions similar to $q_{\epsilon}^\ast(v)$,
from which the samples can be obtained in polynomial time~\cite{pmlr-v70-avron17a,pmlr-v97-li19k,Liu2019};
however,
in contrast to sampling from $q_{\epsilon}^\ast(v)d\tau(v)$,
sampling from the distributions in Refs.~\cite{pmlr-v70-avron17a,Liu2019} does not necessarily minimize the required number $M$ of features for the learning.
In particular, $q_{\epsilon}^\ast(v)d\tau(v)$ in Ref.~\cite{B1} and the distribution in Ref.~\cite{pmlr-v70-avron17a} are different in that the former is defined using an integral operator as shown in~\eqref{eq:q}, but the latter is defined using the Gram matrix; even if we discretize the integral operator, we do not obtain the Gram matrix. The distribution in Ref.~\cite{Liu2019} does not use the integral operator either. \citet{B1} proves optimality of $q_{\epsilon}^\ast(v)d\tau(v)$ in minimizing $M$ required for approximating function $f$, but this proof of the optimality is not applicable to the distributions in Refs.~\cite{pmlr-v70-avron17a,Liu2019}.
Similarly, whereas sampling from an importance-weighted distribution may also be used in column sampling for scaling-up kernel methods via low-rank matrix approximation, algorithms in the setting of the column sampling~\cite{pmlr-v30-Bach13,NIPS2015_5716,NIPS2018_7810} are not applicable to our setting of random features, as discussed in Ref.~\cite{B1}.
Quasi-Monte Carlo techniques~\cite{10.5555/2946645.3007073,10.5555/3172077.3172095} also improve $M$, but it is unknown whether they can achieve minimal $M$.

\textbf{Our contributions:}
As discussed above, the bottleneck in using random features sampled from the optimized distribution $q_{\epsilon}^\ast(v)d\tau(v)$ is each sampling step that works with inversion of $O(\exp(D))$-dimensional matrices for $D$-dimensional input data. To overcome this bottleneck and the difficulties in sampling from $q_{\epsilon}^\ast(v)d\tau(v)$,
we discover that we can use a \textit{quantum algorithm}, rather than conventional classical algorithms that run on existing computers.
Our contributions are as follows.

\begin{itemize}
  \item (Theorem~\ref{thm:sampling}) We construct a quantum algorithm for \textbf{sampling a feature from the \textit{data-optimized} distribution $q_{\epsilon}^\ast(v)d\tau(v)$ in as fast as linear runtime $O(D)$ in the data dimension} $D$.
    The best existing classical algorithm for sampling each single feature from \textit{data-optimized} $q_{\epsilon}^\ast(v)d\tau(v)$ requires exponential runtime $O(\exp(D))$~\cite{B1,NIPS2018_7598,Shahrampour2019}. In contrast, our quantum algorithm can sample the feature from $q_{\epsilon}^\ast(v)d\tau(v)$ in runtime $O(D)$, which is as fast as the conventional algorithms using random features~\cite{R2,R3}. We emphasize that the conventional algorithms perform an easier task, i.e., sampling from a \textit{data-independent} distribution $d\tau(v)$. Advantageously over the conventional algorithms sampling from $d\tau(v)$, we can use our algorithm sampling from $q_{\epsilon}^\ast(v)d\tau(v)$ to achieve learning with a significantly small number $M$ of features, which is proven to be \textit{minimal} up to a logarithmic gap~\cite{B1}.
    Remarkably, we achieve this without assuming sparsity or low rank of relevant operators.
  \item (Theorem~\ref{thm:complexity_all}) We show that we can combine $M$ features sampled by our algorithm with regression by stochastic gradient descent \textbf{to achieve supervised learning in time $O(MD)$, i.e., \textit{without canceling out our exponential speedup}}. This $M$ is minimal up to a logarithmic gap~\cite{B1} since we use optimized random features. Thus, by improving the computational bottleneck faced by classical algorithms for sampling optimized random features, we provide a promising framework of quantum machine learning that leverages our $O(D)$ sampling algorithm to achieve the optimal $M$ among all algorithms using random features.
\end{itemize}

\textbf{Comparison with previous works on quantum machine learning (QML):}
The novelty of our contributions is that we construct a QML algorithm that is \textit{exponentially faster} than any existing classical algorithm sampling from $q_{\epsilon}^\ast(v)d\tau(v)$~\cite{B1,NIPS2018_7598,Shahrampour2019}, yet is still \textit{free from sparsity and low-rank assumptions} on operators.
Despite major efforts to apply QML to kernel methods~\cite{Mengoni2019},
super-polynomial speedups like Shor's algorithm for prime factoring~\cite{10.1137/S0097539795293172} are rare in QML\@.
In fact, it has been challenging to find applications of quantum algorithms with super-polynomial speedups for practical problems~\cite{montanaro2016quantum}.
Typical QML algorithms such as Refs.~\cite{PhysRevLett.103.150502,PhysRevLett.109.050505,lloyd2016quantum,Z1} may achieve exponential speedups over classical algorithms only if matrices involved in the algorithms are sparse; in particular, $n\times n$ matrices can have only $\polylog(n)$ nonzero elements in each row and column. Another class of QML algorithms such as Refs.~\cite{lloyd2014quantum,PhysRevLett.113.130503,kerenidis_et_al:LIPIcs:2017:8154,PhysRevLett.120.050502} do not require sparsity but may attain large speedups only if the $n\times n$ matrices have low rank $\polylog(n)$. This class of quantum algorithms are polynomially faster than recent ``quantum-inspired'' classical algorithms such as Refs.~\cite{10.1145/3313276.3316310,arXiv:1910.05699,arXiv:1910.06151}, which also assume low rank. Quantum singular value transformation (QSVT)~\cite{G1} has recently emerged as a fundamental subroutine to implement these quantum algorithms in a unified way.
However, power and applicability of these QML algorithms are restricted by the extreme assumptions on sparsity and low rank~\cite{aaronson2015read}.

Our key technical contribution is to develop an approach for circumventing the sparsity and low-rank assumptions in our QML algorithm, broadening the applicability of QML\@.
We achieve this by combining the QSVT with another fundamental subroutine, quantum Fourier transform (QFT)~\cite{C5,H2}.
QFT and QSVT are commonly used in quantum computation~\cite{N4,arXiv:1907.09415};
however, it is nontrivial to use these subroutines for developing a QML algorithm that exponentially outperforms existing classical algorithms under widely applicable assumptions.
To achieve the speedup, our technique decomposes the $O(\exp(D))$-dimensional \textit{non-sparse and full-rank} operator representing ${\Sigma+{\epsilon}\mathbbm{1}}$ in the definition~\eqref{eq:q} of $q_{\epsilon}^\ast(v)$ into diagonal (i.e., sparse) operators using Fourier transform.
QSVT and QFT may make our algorithm hard to simulate by classical computation, and hard to perform even on near-term quantum devices~\cite{havlivcek2019supervised,arute2019quantum} that cannot implement universal quantum computation due to noise.
For this reason, this paper does not include numerical simulation, and we analytically prove the runtime of our algorithm.
In contrast to heuristic QML algorithms for noisy quantum devices such as Ref.~\cite{havlivcek2019supervised} where no proof bounds its runtime, our QML algorithm aims at applications on large scales; to achieve this aim, our proof shows the exponential advantage of our quantum algorithm over the existing classical algorithms in terms of the runtime.
The wide applicability of our QML algorithm makes it a promising candidate for ``killer applications'' of universal quantum computers in the long run; after all, large-scale machine learning will be eventually needed in practice.

Also remarkably, since we exploit quantum computation for the sampling problem of $q_{\epsilon}^\ast(v)d\tau(v)$, our QML algorithm avoids overhead of repeating preparation of quantum states many times for estimating expectation values from the states.
The classical algorithm~\cite{B1,NIPS2018_7598,Shahrampour2019} calculates $q_{\epsilon}^\ast(v)d\tau(v)$ by matrix inversion and then performs sampling;
in contrast, our quantum algorithm never estimates classical description of $q_{\epsilon}^\ast(v)d\tau(v)$, which is represented by amplitude of a quantum state, since the overhead of such estimation would cancel out the speedup~\cite{aaronson2015read}.
Instead, our exponential quantum speedup is achieved by performing a quantum measurement of this state to sample a feature efficiently per single preparation and measurement of the state.
Our algorithm combined with stochastic gradient descent provides classical description of an estimate of $f$ to be learned, rather than a quantum state.
In this way, our results discover an application of sampling problems to designing fast QML algorithms.

\section{\label{sec:assumptions}Setting of learning with optimized random features}

\subsection{Notation on quantum computation}

In Supplementary Material, we summarize basic notions and notations of quantum computation to describe our quantum algorithms,
referring to Refs.~\cite{N4,arXiv:1907.09415} for more detail.
An $m$-qubit quantum register is represented as a $2^m$-dimensional Hilbert space $\mathcal{H}={(\mathbb{C}^2)}^{\otimes m}$. Following the conventional bra-ket notation, we represent a quantum state on the quantum register as a ket (i.e., a vector) $\Ket{\psi}\in\mathcal{H}$.

\subsection{\label{sec:learning}Supervised learning with optimized random features}

We introduce the supervised learning setting that we focus on in this paper,
and we will formulate an approximate version of it in Sec.~\ref{sec:coarse_grained}.
Suppose that $N$ input-output pairs of examples are given by
$(x_0,y_0),\ldots,(x_{N-1},y_{N-1})\in\mathcal{X}\times\mathcal{Y}$,
where $y_n=f(x_n)$, $f:\mathcal{X}\to\mathcal{Y}$ is an unknown function to be learned, $\mathcal{X}=\mathbb{R}^D$ is the domain for $D$-dimensional input data, $\mathcal{Y}=\mathbb{R}$ is the range for output data.
Each $x_n$ is an observation of an independently and identically distributed (IID) random variable on $\mathcal{X}$ equipped with a probability measure $d\rho(x)=q^{(\rho)}(x)dx$, on which we will pose a mild assumption later in Sec.~\ref{sec:coarse_grained}.
We choose a translation-invariant kernel, and such a kernel can be represented as~\cite{R2}
\begin{equation}
  \label{eq:kernel}
k(x^\prime,x)= \int d\tau(v)\overline{\varphi(v,x^\prime)}\varphi(v,x),\,\big(\text{normalized by $k(x,x)=k(0,0)=\int_\mathcal{V}d\tau\left(v\right)=1$}\big),
\end{equation}
where $\overline{\,\cdot\,}$ is complex conjugation,
$\varphi:\mathcal{V}\times\mathcal{X}\to\mathbb{C}$ is a feature map $\varphi(v,x)=\mathrm{e}^{-2\pi\mathrm{i}v\cdot x}$, $\mathcal{V}=\mathbb{R}^D$ is a parameter space equipped with a probability measure $d\tau(v)=q^{(\tau)}(v)dv$,
and $d\tau(v)$ is given by the Fourier transform of $k$.
To specify a model of $f$,
we use the RKHS $\mathcal{F}$ associated with the kernel $k$~\cite{S5}.
Following Ref.~\cite{B1},
we assume that the norm $\|f\|_\mathcal{F}$ of $f$ in the RKHS is bounded, in particular, $\|f\|_\mathcal{F}\leqq 1$.
We aim to learn an estimate $\hat{f}$ of $f$ from the $N$ given examples of data, so that the generalization error $\int d\rho(x)|\hat{f}(x)-f(x)|^2$ can be bounded to a desired learning accuracy ${\epsilon}>0$.

To achieve the learning to accuracy $O({\epsilon})$ with the minimal number $M$ of features,
instead of sampling from $d\tau$,
\citet{B1} proposes to sample features from an optimized probability density $q_{\epsilon}^\ast$ for $d\tau$
\begin{align}
  \label{eq:q}
  q_{\epsilon}^\ast\left(v\right)\propto{\braket{\varphi\left(v,\cdot\right)|{\left(\Sigma+{\epsilon}\mathbbm{1}\right)}^{-1}\varphi\left(v,\cdot\right)}_{L_2(d\rho)}},\,\big(\text{normalized by $\int_\mathcal{V}q_{\epsilon}^\ast\left(v\right)d\tau(v)=1$}\big),
\end{align}
where $\Braket{f|g}_{L_2(d\rho)}\coloneqq\int_{\mathcal{X}}d\rho(x)\overline{f(x)}g(x)$, $\mathbbm{1}$ is the identity operator,
and $\Sigma:L_2\left(d\rho\right)\to L_2\left(d\rho\right)$ is the integral operator $\left(\Sigma f\right)\left(x^\prime\right)\coloneqq\int_\mathcal{X}d\rho\left(x\right)\,k\left(x^\prime,x\right)f\left(x\right)$~\cite{C2}.
The function $q_{\epsilon}^\ast(v)$ is called a leverage score.
Then,
it suffices to sample $M$ features
from $q_{\epsilon}^\ast(v)d\tau(v)$ with $M$ bounded by~\cite{B1}
\begin{equation}
  \label{eq:M_bound}
  M=O(d\left({\epsilon}\right)\log\left(\nicefrac{d\left({\epsilon}\right)}{\delta}\right)),
\end{equation}
so as to achieve the learning to accuracy $O(\epsilon)$ with high probability greater than $1-\delta$ for any $f$ satisfying $\|f\|_\mathcal{F}\leqq 1$,
in formula,
$\min_{\alpha_m}\{\int d\rho(x)|\hat{f}_{M,v_m,\alpha_m}(x)-f(x)|^2\}\leqq 4\epsilon$,
where
$\hat{f}_{M,v_m,\alpha_m}$ is the estimate~\eqref{eq:estimate} of $f$,
and $d\left({\epsilon}\right)\coloneqq\tr\Sigma{\left(\Sigma+{\epsilon}\mathbbm{1}\right)}^{-1}$ is \textit{the degree of freedom} representing effective dimension of data.
In this paper,
features sampled from $q_{\epsilon}^\ast(v)d\tau(v)$ up to approximation are called \textit{optimized random features}, which achieve the learning with minimal $M$
up to a logarithmic gap~\cite{B1}.
In kernel methods,
kernel $k$ should be chosen suitably to learn from the data given according to the distribution $d\rho$; otherwise, it is impossible for the kernel methods to achieve the learning with reasonable runtime and accuracy.
In the case of random features, $f$ must have a polynomial-size description in terms of the features, i.e., $M=O(\poly(D,\nicefrac{1}{\epsilon}))$. To guarantee this, the degree of freedom $d\left({\epsilon}\right)$ must satisfy
\begin{equation}
  \label{eq:effective_dim_bound}
  d\left({\epsilon}\right)=O(\poly(D,\nicefrac{1}{\epsilon})),
\end{equation}
where $d\left({\epsilon}\right)$ depends on $\Sigma$ and hence on both $d\rho$ and $k$ that is to be chosen suitably to satisfy~\eqref{eq:effective_dim_bound}.

\subsection{\label{sec:coarse_grained}Discretized representation of real number}

\begin{table}[t]
  \vskip -4mm
  \begin{minipage}[t]{.40\textwidth}
    \caption{\label{table:rescaling}Rescaling data by $r>1$.}
    \begin{center}
      \begin{small}
        \renewcommand{\arraystretch}{1.1}
        \begin{tabular}{llccr}
          \toprule
          Original & Rescaled by $r>1$ \\
          \midrule
          $G$ of interval $[0,G]$ & $G_r=rG$\\
          Kernel $\tilde{k}(x^\prime,x)$ & $\tilde{k}_r(rx^\prime,rx)\coloneqq \tilde{k}(x^\prime,x)$\\
          Input $x$ & $rx$ \\
          Output $y=f(x)$ & $y=f_r(rx)\coloneqq f(x)$ \\
          $q^{(\rho)}(x)dx=d\rho(x)$ & $q^{(\rho)}_r(rx)\coloneqq q^{(\rho)}(x)/r$ \\
          $f(x)$'s LC $L^{(f)}$ & $L_r^{(f_r)}=\nicefrac{L^{(f)}}{r}$\\
          $q^{(\rho)}(x)$'s LC $L^{(q^{(\rho)})}$ & $L_r^{(q_r^{(\rho)})}=\nicefrac{L^{(q^{(\rho)})}}{r^2}$\\
          \bottomrule
        \end{tabular}
      \end{small}
    \end{center}
  \end{minipage}
  \hfill
  \begin{minipage}[t]{.50\textwidth}
    \caption{\label{table:coarse_graining}Discretized representation ($\tilde{x}^\prime,\tilde{x}\in\tilde{\mathcal{X}}$).}
    \begin{center}
      \begin{small}
        \renewcommand{\arraystretch}{1.2}
        \begin{tabular}{llccr}
          \toprule
          Function / operator on $\mathcal{X}$ & Vector / operator on $\mathcal{H}^X$ \\
          \midrule
          $f:\mathcal{X}\to\mathbb{C}$ & $\Ket{f}\coloneqq\textstyle\sum_{\tilde{x}}f(\tilde{x})\Ket{\tilde{x}}$ \\
          $\varphi(v,\cdot):\mathcal{X}\to\mathbb{C}$ & $\Ket{\varphi(v,\cdot)}\coloneqq\sum_{\tilde{x}}\varphi(v,\tilde{x})\Ket{\tilde{x}}$ \\
          $\tilde{k}:\mathcal{X}\times\mathcal{X}\to\mathbb{R}$ & $\kernel\coloneqq\sum_{\tilde{x}^\prime,\tilde{x}}\tilde{k}(\tilde{x}^\prime,\tilde{x})\Ket{\tilde{x}^\prime}\Bra{\tilde{x}}$ \\
          $q^{(\rho)}:\mathcal{X}\to\mathbb{R}$ & $\q^{(\rho)}\coloneqq\sum_{\tilde{x}}q^{(\rho)}(\tilde{x})\Ket{\tilde{x}}\Bra{\tilde{x}}$ \\
          $\Sigma$ acting on $f:\mathcal{X}\to\mathbb{C}$ & $\SigmaOp\coloneqq \kernel \q^{(\rho)}$ \\
          $\Sigma f:\mathcal{X}\to\mathbb{C}$ & $\SigmaOp\Ket{f}$\\
          $\hat{q}^{(\rho)}:\tilde{\mathcal{X}}\to\mathbb{R}$ (Sec.~\ref{sec:data}) & $\hat{\q}^{(\rho)}\coloneqq\sum_{\tilde{x}}\hat{q}^{(\rho)}(\tilde{x})\Ket{\tilde{x}}\Bra{\tilde{x}}$ \\
          \bottomrule
        \end{tabular}
      \end{small}
    \end{center}
  \end{minipage}
  \vskip -6mm
\end{table}

To clarify our setting of digital quantum computation, we explain discretized representation of real number used in our quantum algorithm.
We assume that the input data domain is bounded;
in particular, the data distribution $d\rho\left(x\right)$ is nonzero only on a bounded domain ${[0,x_{\max}]}^D$ ($x_{\max}>0$).
If the kernel $k(x^\prime,x)$, such as Gaussian, decays to $0$ sufficiently fast as $x^\prime$ and $x$ deviate from $0$, then we can approximate $k(x^\prime,x)$  using a periodic function $\tilde{k}$ with a sufficiently large period $G\gg x_{\max}$
\begin{equation}
\label{eq:kernel_approximation}
k(x^\prime,x)\approx\sum_{n\in\mathbb{Z}^D}k(x^\prime,x+Gn)\eqqcolon \tilde{k}(x^\prime,x),\quad\forall x^\prime,x\in{[0,x_{\max}]}^D.
\end{equation}
We will use $\tilde{k}$ as a kernel in place of $k$.
In computation, it is usual to represent a real number using a finite number of bits; e.g., fixed-point number representation with small precision $\Delta>0$ uses a finite set $\{0,\Delta,2\Delta,\ldots,{G}-\Delta\}$ to represent a real interval $[0,G]$.
Equivalently, to simplify the presentation, we use the fixed-point number representation rescaled by a parameter $r=\nicefrac{1}{\Delta}$ as shown in Table~\ref{table:rescaling}, so that we can use a set of integers $\mathcal{I}=\{0,1,\ldots,G_r-1\}$ to discretize the interval.
We represent the data domain $\mathcal{X}=\mathbb{R}^D$ as
$\tilde{\mathcal{X}}=\mathcal{I}^D$.
Discretization of the data range $\mathcal{Y}$ is unnecessary in this paper.
For any real-valued point $x\in\mathcal{X}$,
we write its closest grid point as $\tilde{x}\in\tilde{\mathcal{X}}$,
and let $\Delta_x\subset\mathbb{R}^D$ denote a $D$-dimensional unit hypercube whose center is the closest grid point $\tilde{x}$ to $x$.

To justify this discretization, we assume that functions in the learning, such as the function $f$ to be learned and the probability density $q^{(\rho)}(x)$ of input data, are $L$-Lipschitz continuous for some Lipschitz constant (LC) $L$.\footnote{%
For any $x,x^\prime\in\mathcal{X}$, a function $q:\mathcal{X}\to\mathbb{C}$ is $L$-Lipschitz continuous if $|q\left(x\right)-q\left(x^\prime\right)|\leqq L\|x-x^\prime\|_2$.
}
Then, errors caused by the discretization, i.e., $|f(x)-f(\tilde{x})|$ and $|q^{(\rho)}(x)-q^{(\rho)}(\tilde{x})|$, are negligible in the limit of small (but still nonzero) $L$, in particular, $L\sqrt{D}\to 0$.
As the data dimension $D$ gets large,
to reduce $L\sqrt{D}$ to a fixed error threshold,
we rescale the data to a larger domain (see Table~\ref{table:rescaling});
in particular, we rescale $G$ representing the interval $[0,G]$ to ${G}_r=rG$ with $r=\Omega(L\sqrt{D})$.
The rescaling in Table~\ref{table:rescaling} keeps the accuracy and the model in the learning \textit{invariant}.

We focus on asymptotic runtime analysis of our algorithm as ${G}_r$ gets larger, i.e., $G_r\to\infty$, which reduces the errors in the discretization. We henceforth omit the subscript $r$ and write ${G}_r$ as ${G}$ for brevity.
An error analysis of discretization for finite ${G}$ is out of the scope of this paper; for such an analysis, we refer to established procedures in signal processing~\cite{proakis2001digital}.

As we can represent $\tilde{\mathcal{X}}$ using $D\lceil\log_2 {G}\rceil$ bits,
where $\lceil x \rceil$ is the least integer greater than or equal to $x$,
we similarly represent $\tilde{\mathcal{X}}$ using a quantum register $\mathcal{H}^{X}\coloneqq\spn\{\Ket{\tilde{x}}:\tilde{x}\in\tilde{\mathcal{X}}\}$ of $D\lceil\log_2 {G}\rceil$ qubits.
This quantum register is composed of $D$ sub-registers, i.e.,
$\mathcal{H}^{X}={\left(\mathcal{H}_{\mathcal{I}}\right)}^{\otimes D}$,
where each $\lceil\log_2 {G}\rceil$-qubit sub-register $\mathcal{H}_{\mathcal{I}}={\left(\mathbb{C}^{2}\right)}{}^{\otimes \left\lceil\log_2 {G}\right\rceil}$ corresponds to $\mathcal{I}$.
To represent $\tilde{x}={(\tilde{x}^{\left(1\right)},\ldots,\tilde{x}^{\left(D\right)})}{}^\mathrm{T}\in\tilde{\mathcal{X}}$,
we use a quantum state
$\Ket{\tilde{x}}^X=\bigotimes_{d=1}^D\Ket{\tilde{x}^{\left(d\right)}}\in\mathcal{H}^X$,
where $\Ket{\tilde{x}^{\left(d\right)}}\in\mathcal{H}_{\mathcal{I}}$.

We represent a function on the continuous space $\mathcal{X}$ as a vector on finite-dimensional $\mathcal{H}^X$, and an operator acting on functions on $\mathcal{X}$ as a matrix on $\mathcal{H}^X$, as shown in Table~\ref{table:coarse_graining}.
Under our assumption that the rescaling makes the Lipschitz constants sufficiently small,
we can make an approximation
\begin{equation}
  \braket{f|\q^{(\rho)}|g}\approx\int_\mathcal{X}d\rho(x)\,\overline{f\left(x\right)}g\left(x\right),\; q^{(\rho)}(x)dx=d\rho(x).
\end{equation}
With this discretization, we can represent the optimized probability density function $q_{\epsilon}^\ast$ in~\eqref{eq:q}
as
\begin{equation}
  \label{eq:tilde_q_lambda_ast}
  \tilde{q}_{\epsilon}^\ast\left(v\right)\propto\Bra{\varphi\left(v,\cdot\right)}\q^{\left(\rho\right)}{\left(\SigmaOp+{\epsilon}\mathbbm{1}\right)}^{-1}\Ket{\varphi\left(v,\cdot\right)},\,\big(\text{normalized by $\int_\mathcal{V}\tilde{q}_{\epsilon}^\ast\left(v\right)d\tau(v)=1$}\big).
\end{equation}

\subsection{\label{sec:data}Data in discretized representation}

To represent real-valued input data $x_n\in\mathcal{X}$ that is IID sampled according to the probability measure $d\rho(x)$, we use discretization.
We represent $x_n$ using its closest (i.e., rounded) grid point $\tilde{x}_n\in\tilde{\mathcal{X}}$, IID sampled with probability
$\int_{\Delta_{x_n}}d\rho\left(x\right)$,
where $\Delta_{x_n}$ is the $D$-dimensional unit hypercube centered at $x_n$.
This rounding may cause some error in learning but does not significantly ruin the performance of our QML algorithm; after all, any implementation of kernel methods by computer with bits requires rounding, and in our setting, a cluster of points that would be represented as the same grid point after the rounding are resolved by rescaling, which is equivalent to increasing precision of rounding without rescaling. Then under standard assumptions in signal processing~\cite{proakis2001digital} where such implementation works well, it should be straightforward to show our algorithm also works well.
In the following, the $N$ given examples are $(\tilde{x}_0,y_0),\ldots,(\tilde{x}_{N-1},y_{N-1})\in\tilde{\mathcal{X}}\times\mathcal{Y}$, where $y_n=f(\tilde{x}_{n})$.

The true probability distribution $d\rho$ of the input data is unknown in our setting,
and our algorithm uses the $N$ given examples of data to approximate $d\rho(x)=q^{(\rho)}(x)dx$ up to a statistical error.
For any $\tilde{x}\in\tilde{\mathcal{X}}$,
we approximate the distribution $d\rho$ near $\tilde{x}$ by an empirical distribution counting the data: $\hat{q}^{\left(\rho\right)}\left(\tilde{x}\right)\coloneqq\nicefrac{n\left(\tilde{x}\right)}{N}$, where $n(\tilde{x})$ denotes the number of given examples of input data that are included in the $D$-dimensional unit hypercube $\Delta_{\tilde{x}}$.
We also represent $\hat{q}^{\left(\rho\right)}$ as an operator $\hat{\q}^{\left(\rho\right)}$ shown in Table~\ref{table:coarse_graining}.
In the same way as $\SigmaOp=\kernel \q^{(\rho)}$ in Table~\ref{table:coarse_graining},
an empirical integral operator is given by
\begin{equation}
\hat{\SigmaOp}\coloneqq \kernel\hat{\q}^{\left(\rho\right)}.
\end{equation}

We aim to analyze the asymptotic runtime of our algorithm when the number $N$ of examples becomes large, as with analyzing the cases of large $G$ in the rescaling.
In the limit of $N\to\infty$,
statistical errors in the empirical distribution caused by the finiteness of $N$ vanish.
Analysis of statistical errors for finite $N$ is out of the scope of this paper; for such an analysis, see Ref.~\cite{B1}.

\section{Learning with optimized random features}

We now describe our efficient quantum algorithm for sampling an optimized random feature in the setting of Sec.~\ref{sec:assumptions}.
As we show in Sec.~\ref{sec:perfect_reconstruction},
the novelty of our algorithm is to achieve this without assuming sparsity and low rank by means of the \textit{perfect reconstruction} of the kernel, which decomposes the kernel by Fourier transform into a finite sum of the feature map $\varphi$ weighted over a finite set of features.
In Sec.~\ref{sec:assumption_kernel},
we clarify assumptions on our quantum algorithm, bound its runtime (Theorem~\ref{thm:sampling}),
and also show that we can achieve the learning as a whole without canceling out our quantum speedup by combining our quantum algorithm with stochastic gradient descent (Theorem~\ref{thm:complexity_all}).

Compared to existing works~\cite{pmlr-v70-avron17a,Liu2019,pmlr-v97-li19k} on sampling random features from weighted probability distributions for acceleration, the significance of our results is that our algorithm in the limit of good approximation (as $N,G\to\infty$) is provably optimal in terms of a gap from a lower bound of the required number of random features for achieving learning~\cite{B1}, and yet its runtime is as fast as linear in $D$ and poly-logarithmic in $G$ (and $N$).\footnote{%
  The runtime shown in Theorems~\ref{thm:sampling} and~\ref{thm:complexity_all} is constant time in $N$ except that classical and quantum oracles that abstract devices for accessing data may have runtime $O(1)$ or $O(\polylog(N))$, as discussed in Sec.~\ref{sec:assumption_kernel}.}
Our algorithm is constructed so as to converge to sampling from the optimized distribution~\eqref{eq:q} in Ref.~\cite{B1} as $N,G\to\infty$ whereas the algorithms in Refs.~\cite{pmlr-v70-avron17a,Liu2019,pmlr-v97-li19k} do not converge to sampling from~\eqref{eq:q} in any limit. Although the algorithms in Refs.~\cite{pmlr-v70-avron17a,Liu2019,pmlr-v97-li19k} can achieve learning, the optimality of Refs.~\cite{pmlr-v70-avron17a,Liu2019,pmlr-v97-li19k} is unknown in general; in contrast, Ref.~\cite{B1} proves the optimality up to a logarithmic gap, and our algorithm based on Ref.~\cite{B1} achieves this optimality in the limit of $N,G\to\infty$.

\subsection{\label{sec:perfect_reconstruction}Main idea of quantum algorithm for sampling an optimized random feature}

The crucial technique in our quantum algorithm is to use the perfect reconstruction of the kernel (See Proposition~1 in Supplementary Material).
In the same way as representing the kernel $k$ as the expectation~\eqref{eq:kernel} of $\varphi(v,x)=\mathrm{e}^{-2\pi\mathrm{i}v\cdot x}$ over the probability distribution $d\tau=q^{(\tau)}(v)dv$,
we represent our kernel $\tilde{k}$ using Shannon's sampling theorem~\cite{shannon1949communication} in signal processing as
\begin{equation}
  \tilde{k}(x^\prime,x)=\sum_{\tilde{v}\in \mathbb{Z}^D}(\nicefrac{q^{(\tau)}(\nicefrac{\tilde{v}}{G})}{G^D})\overline{\varphi(\nicefrac{\tilde{v}}{G},x^\prime)}\varphi(\nicefrac{\tilde{v}}{G},x).
\end{equation}
Moreover, we show that to represent $\tilde{k}$ exactly on our \textit{discrete} data domain $\tilde{\mathcal{X}}$, it suffices to use a \textit{finite} set $\mathcal{V}_{{G}}$ of features
and a distribution function $Q^{(\tau)}$ over the finite set $\mathcal{V}_G$
\begin{equation}
\label{eq:finite_feature_set}
Q^{(\tau)}\left(v_{{G}}\right)\coloneqq\sum_{\tilde{v}^\prime\in\mathbb{Z}^D}q^{(\tau)}\left(v_{{G}}+\tilde{v}^\prime\right),\quad
v_G\in\mathcal{V}_{{G}}\coloneqq{\{0,\nicefrac{1}{{G}},\ldots,1-\nicefrac{1}{{G}}\}}^D,
\end{equation}
where we give examples of $Q^{(\tau)}$ in Table~\ref{table:kernel}.
In particular, for all $\tilde{x}^\prime,\tilde{x}\in\tilde{\mathcal{X}}$,
we show the following perfect reconstruction of our kernel $\tilde{k}(\tilde{x}^\prime,\tilde{x})$ from the function $Q^{(\tau)}$ using $D$-dimensional discrete Fourier transform $\F_D$ and its inverse $\F_D^\dag$\footnote{%
  With $\F$ denoting a unitary operator of (one-dimensional) discrete Fourier transform,
  we define $\F_D\coloneqq \F^{\otimes D}$.
}
\begin{equation}
  \label{eq:perfect_reconstruction}
  \tilde{k}\left(\tilde{x}^\prime,\tilde{x}\right)=\sum_{v_G\in\mathcal{V}_G}(\nicefrac{Q^{(\tau)}(v_G)}{G^D})\overline{\varphi(v_G,\tilde{x}^\prime)}\varphi(v_G,\tilde{x})
  \left(=\Bra{\tilde{x}^\prime}\F_D^\dag \Q^{(\tau)} \F_D\Ket{\tilde{x}}=\Bra{\tilde{x}^\prime}\F_D \Q^{(\tau)} \F_D^\dag\Ket{\tilde{x}}\right),
\end{equation}
where $\Q^{(\tau)}\coloneqq\sum_{\tilde{x}\in\tilde{\mathcal{X}}}Q^{(\tau)}\left(v_G\right)\Ket{\tilde{x}}\Bra{\tilde{x}}$ with $v_G=\nicefrac{\tilde{x}}{G}$ is a diagonal operator representing $Q^{(\tau)}$.

\begin{table}[tb]
  \vskip -4mm
  \caption{\label{table:kernel}Distribution $Q^{(\tau)}(v_G)$ for the Gaussian kernel (top) and the Laplacian kernel (bottom), where $v_{{G}}={(v_{{G}}^{(1)},\ldots,v_{{G}}^{(D)})}^\mathrm{T}$, and $\vartheta\left(u;q\right)\coloneqq 1+2\sum_{n=1}^{\infty}q^{n^2}\cos\left(2nu\right)$ is the theta function.}
  \begin{center}
    \begin{small}
      \renewcommand{\arraystretch}{1.1}
      \begin{tabular}{lllcr}
        \toprule
        $k(x^\prime,x)$ & $Q^{(\tau)}(v_G)$ \\
        \midrule
        Gaussian kernel: $\exp(-\gamma\left\|x^\prime-x\right\|_2^2)$ & $\prod_{d=1}^D\vartheta(\pi v_{{G}}^{(d)};\exp(-\gamma))$\\
        Laplacian kernel: $\exp(-\gamma\left\|x^\prime-x\right\|_1)$ & $\prod_{d=1}^D\nicefrac{\sinh(\gamma)}{(\cosh(\gamma)-\cos(2\pi v_{G}^{(d)}))}$\\
        \bottomrule
      \end{tabular}
    \end{small}
  \end{center}
  \vskip -6mm
\end{table}

Thus, similarly to conventional random features using Fourier transform~\cite{R2},
if we sampled a sufficiently large number $M$ of features in $\mathcal{V}_{{G}}$ from the probability mass function $P^{(\tau)}(v_G)\coloneqq\nicefrac{Q^{(\tau)}(v_G)}{\big(\sum_{v_G^\prime\in\mathcal{V}_G}Q^{(\tau)}(v_G^\prime)\big)}$ corresponding to $d\tau$,
then we could combine the $M$ features with the discrete Fourier transform $\F_D$ to achieve the learning with the kernel $\tilde{k}(\tilde{x}^\prime,\tilde{x})$.
However, $P^{(\tau)}(v_G)$ is not optimized for the data,
and our quantum algorithm aims to minimize $M$ by sampling an optimized random feature.
To achieve this,
in place of the optimized density $\tilde{q}_\epsilon^\ast$ defined as~\eqref{eq:tilde_q_lambda_ast} for $d\tau$ on the set $\mathcal{V}$ of real-valued features,
we define an optimized probability density function $Q_{\epsilon}^\ast(v_{{G}})$ for weighting the probability distribution $P^{(\tau)}(v_{{G}})$ on the finite set $\mathcal{V}_{{G}}$ of our features as
\begin{equation}
  \label{eq:hat_p_lambda_ast}
  Q_{\epsilon}^\ast(v_{{G}})\propto\braket{\varphi(v_{{G}},\cdot) | \hat{\q}^{(\rho)}{({\hat{\SigmaOp}}+{\epsilon}\mathbbm{1})}^{-1} | \varphi(v_{{G}},\cdot)},\,\big(\text{normalized by $\sum_{v_{{G}}\in\mathcal{V}_{G}}Q_{\epsilon}^\ast(v_{{G}})P^{(\tau)}(v_G)=1$}\big).
\end{equation}

To sample from optimized $Q_{\epsilon}^\ast(v_{{G}})P^{(\tau)}(v_{{G}})$,
we show that we can use a quantum state on two registers $\mathcal{H}^X\otimes\mathcal{H}^{X^\prime}$ of the same number of qubits (See Proposition~2 in Supplementary Material)
\begin{equation}
\label{eq:psi}
\Ket{\Psi}^{XX^\prime}\!\!\propto\!\!\sum_{\tilde{x}\in\tilde{\mathcal{X}}}\hat{\SigmaOp}_{\epsilon}^{-\frac{1}{2}}\Ket{\tilde{x}}^X\otimes \sqrt{(\nicefrac{1}{Q_{\max}^{(\tau)}})\Q^{(\tau)}}\F_D^\dag\sqrt{\hat{q}^{\left(\rho\right)}(\tilde{x})}\Ket{\tilde{x}}^{X^\prime},
\end{equation}
where
${Q_{\max}^{(\tau)}}\coloneqq\max\{Q^{(\tau)}\left(v_{{G}}\right):v_{{G}}\in\mathcal{V}_{{G}}\}$
is the maximum of $Q^{(\tau)}(v_G)$,
$\hat{\SigmaOp}_{\epsilon}$ is a positive semidefinite operator $\hat{\SigmaOp}_{\epsilon}\coloneqq(\nicefrac{1}{Q_{\max}^{(\tau)}})\sqrt{\hat{\q}^{\left(\rho\right)}}\kernel\sqrt{\hat{\q}^{\left(\rho\right)}}+(\nicefrac{{\epsilon}}{Q_{\max}^{(\tau)}})\mathbbm{1}$,
and $f(\A)$ for an operator $\A$ denotes an operator given by applying $f$ to the singular values of $\A$ while keeping the singular vectors, e.g., $\sqrt{\hat{\q}^{(\rho)}}=\sum_{\tilde{x}\in\tilde{\mathcal{X}}}\sqrt{\hat{q}^{(\rho)}\left(\tilde{x}\right)}\Ket{\tilde{x}}\Bra{\tilde{x}}$.
We show that if we perform a quantum measurement of the register $\mathcal{H}^{X^\prime}$ for the state $\Ket{\Psi}^{XX^\prime}$ in the computational basis $\{\Ket{\tilde{x}}^{X^\prime}\}$, we obtain a measurement outcome $\tilde{x}$ with probability $Q_{\epsilon}^\ast(\nicefrac{\tilde{x}}{{G}})P^{(\tau)}(\nicefrac{\tilde{x}}{{G}})$.
Our quantum algorithm prepares $\Ket{\Psi}^{XX^\prime}$ efficiently, followed by the measurement to achieve the sampling from $Q_{\epsilon}^\ast(v_{{G}})P^{(\tau)}(v_{{G}})$, where $v_{G}=\nicefrac{\tilde{x}}{G}$.

The difficulty in preparing the state $\Ket{\Psi}^{XX^\prime}$ arises from the fact that $\Ket{\Psi}$ in~\eqref{eq:psi} includes a ${G}^D$-dimensional operator $\hat{\SigmaOp}_{\epsilon}^{-\frac{1}{2}}$, i.e.\ on an \textit{exponentially large} space in $D$, and $\hat{\SigmaOp}_{\epsilon}$ may \textit{not be sparse or of low rank}.
One way to use a linear operator, such as $\hat{\SigmaOp}_{\epsilon}$ and $\hat{\SigmaOp}_{\epsilon}^{-\frac{1}{2}}$, in quantum computation is to use the technique of block encoding~\cite{G1}.
In conventional ways, we can efficiently implement block encodings of \textit{sparse or low-rank} operators~\cite{G1}, such as the diagonal operator $\sqrt{(\nicefrac{1}{Q_{\max}^{(\tau)}})\Q^{(\tau)}}$ in~\eqref{eq:psi}.
If we had an efficient implementation of a block encoding of $\hat{\SigmaOp}_{\epsilon}$, quantum singular value transformation (QSVT)~\cite{G1} would give an efficient way to implement a block encoding of $\hat{\SigmaOp}_{\epsilon}^{-\frac{1}{2}}$ to prepare $\Ket{\Psi}$.
However, it has not been straightforward to discover such an efficient implementation for $\hat{\SigmaOp}_{\epsilon}$ \textit{without sparsity and low rank}.
Recent techniques for ``quantum-inspired'' classical algorithms~\cite{10.1145/3313276.3316310} are not applicable either, since the full-rank operator $\hat{\SigmaOp}_{\epsilon}$ does not have a low-rank approximation.
Remarkably, our technique does not directly use the conventional ways that require sparsity or low rank, yet implements the block encoding of $\hat{\SigmaOp}_{\epsilon}$ efficiently.

Our significant technical contribution is to overcome the above difficulty by exploiting quantum Fourier transform (QFT) for efficient implementation of the block encoding of $\hat{\SigmaOp}_{\epsilon}$.
In our algorithm, QFTs are used for implementing the block encoding of $\SigmaOp_\epsilon$ and for applying $\F_D^\dag$ in preparing $\Ket{\Psi}$ in~\eqref{eq:psi}.
The sparse and low-rank assumptions can be avoided because we explicitly decompose the (non-sparse and full-rank) operator $\SigmaOp_\epsilon$ in~\eqref{eq:psi} into addition and the multiplication of diagonal (i.e., sparse) operators and QFTs\@.
We could efficiently implement $\SigmaOp_\epsilon$ by addition and multiplication of block encodings of these diagonal operators and QFTs, but presentation of these additions and multiplications may become complicated since we have multiple block encodings to be combined. For simplicity of the presentation, we use the block encoding of the POVM operator~\cite{G1} at the technical level to represent how to combine all the block encodings and QFTs as one circuit, as shown in Figs.~1 and~2 of Supplemental Material.
In particular, by the perfect reconstruction~\eqref{eq:perfect_reconstruction}, we decompose $\hat{\SigmaOp}_{\epsilon}$ into diagonal operators $\sqrt{(\nicefrac{1}{Q_{\max}^{(\tau)}})\Q^{(\tau)}}$, $\sqrt{\hat{\q}^{\left(\rho\right)}}$ (whose block encodings are efficiently implementable) and unitary operators $\F_D$, $\F_D^\dag$ representing $D$-dimensional discrete Fourier transform and its inverse.
The QFT provides a quantum circuit implementing $\F_D$ (and $\F_D^\dag$) with precision $\Delta$ within time $O(D\log({G})\log(\nicefrac{\log G}{\Delta}))$~\cite{C5}.
We combine these implementations to obtain a quantum circuit that efficiently implements the block encoding of $\hat{\SigmaOp}_{\epsilon}$.
The QSVT of our block encoding of $\hat{\SigmaOp}_{\epsilon}$ yields a block encoding of $\hat{\SigmaOp}_{\epsilon}^{-\frac{1}{2}}$ with precision $\Delta$, using the block encoding of $\hat{\SigmaOp}_{\epsilon}$ repeatedly $\widetilde{O}((\nicefrac{Q_{\max}^{(\tau)}}{{\epsilon}})\polylog(\nicefrac{1}{\Delta}))$ times~\cite{G1}, where the factor $\nicefrac{Q_{\max}^{(\tau)}}{{\epsilon}}$ is obtained from the condition number of $\hat{\SigmaOp}_{\epsilon}$, and $\widetilde{O}$ may ignore poly-logarithmic factors.
Using these techniques, we achieve the sampling from $Q_{\epsilon}^\ast(v_{{G}})P^{(\tau)}(v_{{G}})$ within a linear runtime in data dimension $D$ under the assumption that we show in the next section.
See Algorithm~1 in Supplementary Material for detail.

\subsection{\label{sec:assumption_kernel}Runtime analysis of learning with optimized random features}

We bound the runtime of learning with optimized random features achieved by our quantum algorithm.
In our runtime analysis, we use the following model of accessing given examples of data.
Abstracting a device implementing random access memory (RAM) in classical computation,
we assume access to the $n$th example of data via oracle functions
$\mathcal{O}_{\tilde{x}}(n)=\tilde{x}_n$ and $\mathcal{O}_y(n)=y_n$ mapping $n\in\left\{0,\ldots,N-1\right\}$ to the examples.
Analogously to sampling $\tilde{x}\in\tilde{\mathcal{X}}$ with probability $\hat{q}(\tilde{x})$,
we allow a quantum computer to use a quantum oracle (i.e., a unitary) $\mathcal{O}_\rho$ to set a quantum register $\mathcal{H}^X$ in a quantum state
$\sum_{\tilde{x}\in\tilde{\mathcal{X}}}\sqrt{\hat{q}^{(\rho)}(\tilde{x})}\Ket{\tilde{x}}$
so that we can sample $\tilde{x}$ with probability $\hat{q}\left(\tilde{x}\right)$ by a measurement of this state in the computational basis $\{\Ket{\tilde{x}}\}$.
This input model $\mathcal{O}_\rho$ is \textit{implementable feasibly  and efficiently using techniques in Refs.~\cite{G3,kerenidis_et_al:LIPIcs:2017:8154}} combined with a quantum device called quantum RAM (QRAM)~\cite{PhysRevLett.100.160501,PhysRevA.78.052310}, as discussed in Supplemental Material.
The oracle $\mathcal{O}_\rho$ is the only black box in our quantum algorithm; putting effort to make our algorithm explicit, we avoid any other use of QRAM\@.
Note that the time required for accessing data is indeed a matter of computational architecture and data structure,
and the focus of this paper is algorithms rather than architectures.
The runtime for each query to $\mathcal{O}_{\tilde{x}}$, $\mathcal{O}_y$, and $\mathcal{O}_\rho$ is denoted by $T_{\tilde{x}}$, $T_y$, and $T_\rho$, respectively.
The runtime of our algorithm does not explicitly depend on the number $N$ of given examples except that the required runtime $T_{\tilde{x}}$, $T_y$, and $T_\rho$ for accessing the data may depend on $N$, which we expect to be $O(1)$ or $O(\polylog(N))$.

Our algorithm can use any translation-invariant kernel $\tilde{k}$ given in the form of~\eqref{eq:perfect_reconstruction},
where $Q^{(\tau)}(v_{{G}})$ can be given by any function efficiently computable by classical computation in a short time denoted by $T_\tau=O(\poly(D))$,
and the maximum ${Q_{\max}^{(\tau)}}$ of $Q^{(\tau)}(v_{{G}})$ in~\eqref{eq:psi} is also assumed to be given.
We assume bounds $\tilde{k}(0,0)=\Omega\left(k(0,0)\right)=\Omega(1)$ and $Q_{\max}^{(\tau)}=O\left(\poly\left(D\right)\right)$,
which mean that the parameters of the kernel function are adjusted appropriately, so that $\tilde{k}(0,0)$ can reasonably approximate $k(0,0)=\int_\mathcal{V}d\tau(v)=1$, and $Q_{\max}^{(\tau)}\left(=\Omega(1)\right)$ may not be too large (e.g., not exponentially large) as $D$ gets large.
Remarkably, representative choices of kernels, such as the Gaussian kernel and the Laplacian kernel in Table~\ref{table:kernel}, satisfy our assumptions in a reasonable parameter region,\footnote{%
For the kernels in Table~\ref{table:kernel}, $Q^{(\tau)}$ is a product of $D$ special functions, computable in time $T_\tau=O(D)$ if each special function is computable in a constant time.
It is immediate to give
$Q_{\max}^{(\tau)}=Q^{(\tau)}\left(0\right)$.
We have $\tilde{k}(0,0)\geqq 1=\Omega\left(1\right)$ for these kernels.
We can also fulfill $Q_{\max}^{(\tau)}=O\left(\poly\left(D\right)\right)$ by reducing the parameter $\gamma$ of the kernels in Table~\ref{table:kernel} as $D$ increases (the reduction of $\gamma$ enlarges the class of learnable functions).
}
and not only these kernels,
we can use any kernel satisfying our assumptions.
Our algorithm \textit{does not impose sparsity or low rank} on $\kernel$ for the kernel and $\hat{\q}^{\left(\rho\right)}$ for the data distribution.
Note that the requirement~\eqref{eq:effective_dim_bound} of the upper bound of the degree of freedom $d(\epsilon)$ \textit{does not imply low rank} of $\kernel$ and $\hat{\q}^{\left(\rho\right)}$ while low-rank $\kernel$ or low-rank $\hat{\q}^{\left(\rho\right)}$ would conversely lead to an upper bound of $d(\epsilon)$.
Hence, our algorithm is widely applicable compared to existing QML algorithms shown in Sec.~\ref{sec:intro}.

We prove that our quantum algorithm achieves the following runtime $T_1$.
Significantly, $T_1$ is as fast as linear in $D$ whereas no existing classical algorithm achieves this sampling in sub-exponential time.
Note that the precision factor $\polylog(\nicefrac{1}{\Delta})$ in $T_1$ of the following theorem is ignorable in practice.\footnote{%
E.g., inner product of $D$-dimensional real vectors is calculable in time $O(D\polylog(\nicefrac{1}{\Delta}))$ with precision $\Delta$ using $O(\log(\nicefrac{1}{\Delta}))$-bit fixed-point number representation, but the factor $\polylog(\nicefrac{1}{\Delta})$ is practically ignored.
}

\begin{theorem}\label{thm:sampling}
  Given $D$-dimensional data discretized by $G>0$,
  for any learning accuracy ${\epsilon}>0$
  and any sampling precision $\Delta>0$,
  the runtime $T_1$ of our quantum algorithm for sampling each optimized random feature
   $v_{{G}}\in\mathcal{V}_{{G}}$ from a distribution $Q(v_G)P^{(\tau)}(v_G)$ close to the optimized distribution $Q_\epsilon^\ast(v_G)P^{(\tau)}(v_G)$ with precision
  $\sum_{v_{{G}}\in\mathcal{V}_{{G}}}{|Q(v_G)P^{(\tau)}(v_G)-Q_\epsilon^\ast(v_G)P^{(\tau)}(v_G)|}\leqq\Delta$ is
  \begin{align*}
    T_1 &=O(D\log({G})\log\log({G})+T_\rho+T_\tau)\times\widetilde{O}((\nicefrac{Q_{\max}^{(\tau)}}{\epsilon})\polylog(\nicefrac{1}{\Delta})).
  \end{align*}
\end{theorem}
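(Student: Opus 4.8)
The plan is to analyze the three-stage algorithm sketched in Section~\ref{sec:perfect_reconstruction}: prepare the state $\Ket{\Psi}^{XX'}$ of~\eqref{eq:psi}, measure register $\mathcal{H}^{X'}$ in the computational basis $\{\Ket{\tilde{x}}^{X'}\}$, and output $v_G=\nicefrac{\tilde{x}}{G}$. Correctness of the sampling is immediate from Proposition~2 (the measurement yields $\tilde{x}$ with probability $Q_\epsilon^\ast(\nicefrac{\tilde{x}}{G})P^{(\tau)}(\nicefrac{\tilde{x}}{G})$), so the entire runtime analysis reduces to bounding the cost of preparing $\Ket{\Psi}$ to sufficient precision. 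First I would observe that $\Ket{\Psi}$ is obtained by applying, to a seed state $\propto\sum_{\tilde{x}}\Ket{\tilde{x}}^X\otimes\Ket{\tilde{x}}^{X'}$ preparable in $O(D\log G)$ gates, the operator $\hat{\SigmaOp}_{\epsilon}^{-\frac{1}{2}}$ on $\mathcal{H}^X$ together with $\sqrt{(\nicefrac{1}{Q_{\max}^{(\tau)}})\Q^{(\tau)}}\,\F_D^\dag\,\sqrt{\hat{\q}^{(\rho)}}$ on $\mathcal{H}^{X'}$, where the weighting $\sqrt{\hat{q}^{(\rho)}}$ is furnished by the input oracle $\mathcal{O}_\rho$ at cost $T_\rho$.

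The core of the runtime bound is implementing a block encoding of the non-sparse, full-rank operator $\hat{\SigmaOp}_{\epsilon}=(\nicefrac{1}{Q_{\max}^{(\tau)}})\sqrt{\hat{\q}^{(\rho)}}\kernel\sqrt{\hat{\q}^{(\rho)}}+(\nicefrac{\epsilon}{Q_{\max}^{(\tau)}})\mathbbm{1}$. Here I would invoke the perfect reconstruction~\eqref{eq:perfect_reconstruction}, $\kernel=\F_D^\dag\Q^{(\tau)}\F_D$, to rewrite $\hat{\SigmaOp}_{\epsilon}$ as a product and sum of the \emph{diagonal} operators $\sqrt{\hat{\q}^{(\rho)}}$ and $\sqrt{(\nicefrac{1}{Q_{\max}^{(\tau)}})\Q^{(\tau)}}$ interleaved with the unitaries $\F_D,\F_D^\dag$. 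Each diagonal operator admits an efficiently implementable block encoding on $O(D\log G)$ qubits, using $\mathcal{O}_\rho$ (cost $T_\rho$) for $\hat{\q}^{(\rho)}$ and the classical evaluation of $Q^{(\tau)}$ (cost $T_\tau$) for $\Q^{(\tau)}$; each $\F_D=\F^{\otimes D}$ is implemented by the approximate QFT in time $O(D\log G\,\log\log G)$ up to the precision factor. Composing these block encodings by the standard product and linear-combination rules and absorbing the constant shift $\nicefrac{\epsilon}{Q_{\max}^{(\tau)}}$, I would obtain a block encoding of $\hat{\SigmaOp}_{\epsilon}$ whose cost per invocation is $O(D\log G\,\log\log G+T_\rho+T_\tau)$, the first factor in $T_1$.

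Next I would apply QSVT to transform the block encoding of $\hat{\SigmaOp}_{\epsilon}$ into one for $\hat{\SigmaOp}_{\epsilon}^{-\frac{1}{2}}$. Since the additive term $(\nicefrac{\epsilon}{Q_{\max}^{(\tau)}})\mathbbm{1}$ lower-bounds the spectrum of $\hat{\SigmaOp}_{\epsilon}$ while its largest singular value is $O(1)$ after subnormalization, the condition number is $O(\nicefrac{Q_{\max}^{(\tau)}}{\epsilon})$; the QSVT polynomial approximating $x^{-\frac{1}{2}}$ on this spectral window then calls the block encoding of $\hat{\SigmaOp}_{\epsilon}$ a total of $\widetilde{O}((\nicefrac{Q_{\max}^{(\tau)}}{\epsilon})\polylog(\nicefrac{1}{\Delta}))$ times, giving the second factor in $T_1$. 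Multiplying the per-invocation cost by the number of invocations, and noting that the remaining one-shot operations on $\mathcal{H}^{X'}$ (one $\F_D^\dag$ and the two diagonal factors) and the seed preparation are subdominant, yields exactly the claimed product form for $T_1$.

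The hard part will be the end-to-end precision analysis establishing the $\ell_1$ bound $\sum_{v_G}|Q(v_G)P^{(\tau)}(v_G)-Q_\epsilon^\ast(v_G)P^{(\tau)}(v_G)|\leqq\Delta$ with only $\polylog(\nicefrac{1}{\Delta})$ overhead. Errors enter from three sources: the approximate QFT, the polynomial approximation of $x^{-\frac{1}{2}}$ in QSVT, and finite-precision evaluation of the diagonal entries. I would bound each in operator norm, track how the QSVT inversion amplifies block-encoding error by a factor governed by the condition number $\nicefrac{Q_{\max}^{(\tau)}}{\epsilon}$, and convert the resulting bound on $\Vert\Ket{\Psi_{\mathrm{approx}}}-\Ket{\Psi}\Vert$ into a trace-distance bound between the corresponding states, which upper-bounds the total-variation distance of the measurement outcomes. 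Choosing each internal precision to be a fixed inverse polynomial in $\Delta$ makes the total error at most $\Delta$ while contributing only $\polylog(\nicefrac{1}{\Delta})$ to the gate count. A secondary point to verify is that extracting $\Ket{\Psi}$ from the flagged block of the final unitary (post-selecting the block-encoding ancilla on $\Ket{0}$) succeeds with constant probability, so that no amplitude-amplification overhead appears; this follows from the normalization built into $\hat{\SigmaOp}_{\epsilon}$ and the seed state, and would be checked directly.
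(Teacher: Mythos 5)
Your overall architecture matches the paper's (prepare $\Ket{\Psi}^{XX'}$, invoke Proposition~2 for correctness, build a block encoding of $\hat{\SigmaOp}_{\epsilon}$ from diagonal operators and QFTs via the perfect reconstruction, then QSVT for the inverse square root), and your LCU/product-composition route to the block encoding of $\hat{\SigmaOp}_{\epsilon}$ is legitimate --- the paper explicitly notes this alternative and only prefers an explicit POVM-style circuit (Lemmas~1 and~2 of the supplement) for presentational reasons. However, your final paragraph contains a genuine gap: the claim that post-selecting the block-encoding ancillas on $\Ket{0}$ ``succeeds with constant probability, so that no amplitude-amplification overhead appears'' is false, and the claimed runtime cannot be reached by simply dropping that step. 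Concretely, after applying the block encoding of $\sqrt{(\nicefrac{1}{Q_{\max}^{(\tau)}})\Q^{(\tau)}}$ the squared norm of the desired branch is
\begin{equation*}
  \Bigl\|\sum_{\tilde{x}}\Ket{\tilde{x}}^X\otimes\sqrt{(\nicefrac{1}{Q_{\max}^{(\tau)}})\Q^{(\tau)}}\,\F_D^\dag\sqrt{\hat{\q}^{(\rho)}}\Ket{\tilde{x}}^{X'}\Bigr\|_2^2
  =\frac{\tr\hat{\SigmaOp}}{Q_{\max}^{(\tau)}}=\frac{\tilde{k}(0,0)}{Q_{\max}^{(\tau)}},
\end{equation*}
which is only $\Theta(\nicefrac{1}{Q_{\max}^{(\tau)}})$, not constant, since the standing assumptions allow $Q_{\max}^{(\tau)}=O(\poly(D))$; the paper therefore runs $O(\sqrt{Q_{\max}^{(\tau)}})$ rounds of amplitude amplification at this stage. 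Worse, the QSVT block encoding of $\hat{\SigmaOp}_{\epsilon}^{-\frac{1}{2}}$ necessarily carries a subnormalization of order $\sqrt{\nicefrac{Q_{\max}^{(\tau)}}{\epsilon}}$ (the encoded operator must have norm at most $1$ while $\|\hat{\SigmaOp}_{\epsilon}^{-\frac{1}{2}}\|_\infty\approx\sqrt{\nicefrac{Q_{\max}^{(\tau)}}{\epsilon}}$), so the post-selection probability at that stage can be as small as $\Theta(\nicefrac{\epsilon}{Q_{\max}^{(\tau)}})=\Theta(\nicefrac{1}{\kappa})$ in the worst case, where $\kappa=\nicefrac{Q_{\max}^{(\tau)}}{\epsilon}$ is the condition number.

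This matters quantitatively: repairing your argument with ordinary amplitude amplification would multiply the $\widetilde{O}(\kappa\polylog(\nicefrac{1}{\Delta}))$ QSVT query count by an additional $O(\sqrt{\kappa})$ amplification rounds, yielding $\widetilde{O}(\kappa^{3/2})=\widetilde{O}({(\nicefrac{Q_{\max}^{(\tau)}}{\epsilon})}^{3/2})$, which overshoots the bound stated in the theorem. The paper closes exactly this hole by invoking \emph{variable-time} amplitude amplification (Ambainis' technique, as in Refs.~\cite{ambainis2012variable,childs2017quantum,chakraborty2018}) together with QSVT, which is what makes the total number of block-encoding calls, amplification included, $\widetilde{O}(\kappa\polylog(\nicefrac{1}{\Delta}))$; your proof needs this ingredient (or an equivalent one) and currently lacks it. A secondary caution: your seed-state description --- starting from $\propto\sum_{\tilde{x}}\Ket{\tilde{x}}^X\otimes\Ket{\tilde{x}}^{X'}$ and then ``applying'' $\sqrt{\hat{\q}^{(\rho)}}$ --- must be realized as the paper does, by using $\mathcal{O}_\rho$ as a state-preparation unitary on $X'$ followed by transversal $\textsc{CNOT}$s (unit success probability); if instead one applied a block encoding of the diagonal operator $\sqrt{\hat{\q}^{(\rho)}}$ to the uniform maximally entangled state, the post-selection probability would be $\tr[\hat{\q}^{(\rho)}]/G^D=G^{-D}$, i.e., exponentially small in $D$.
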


Furthermore, using $M$ optimized random features $v_0,\ldots,v_{M-1}$ sampled efficiently by this quantum algorithm,
we construct an algorithm achieving the learning as a whole (See Algorithm~2 in Supplementary Material), where this $M$ is to be chosen appropriately to satisfy~\eqref{eq:M_bound}.
To achieve the learning,
we need to obtain coefficients $\alpha_0,\ldots,\alpha_{M-1}$ of $\hat{f}_{M,v_m,\alpha_m}=\sum_{m=0}^{M-1}\alpha_m\varphi(v_m,\cdot)\approx f$ that reduce the generalization error to $O(\epsilon)$.
To perform regression for obtaining $\alpha_0,\ldots,\alpha_{M-1}$,
we use stochastic gradient descent (SGD)~\cite{pmlr-v99-jain19a} (Algorithm~3 in Supplementary Material) as in the common practice of machine learning.
Note that the performance of SGD with random features is extensively studied in Ref.~\cite{carratino2018learning}, but our contribution is to clarify its \textit{runtime} by evaluating the runtime per iteration of SGD explicitly.
As discussed in Sec.~\ref{sec:assumptions}, we aim to clarify the runtime of the learning in the large-scale limit;
in particular, we assume that the number $N$ of given examples of data is sufficiently large $N>T$, where $T$ is the number of iterations in the SGD\@.
Then, the sequence of given examples of data $\left(\tilde{x}_0,y_0\right),\left(\tilde{x}_1,y_1\right),\ldots$ provides observations of an IID random variable, and SGD converges to the minimum of the generalization error.
Combining our quantum algorithm with the SGD, we achieve the following runtime $T_2$ of supervised learning with optimized random features, which is as fast as linear in $M$ and $D$, i.e., $T_2=O\left(MD\right)$.
Significantly, the required number $M$ of features for our algorithm using the optimized features is expected to be nearly \textit{minimal}, whereas it has been computationally hard in practice to use the optimized features in classical computation.

\begin{theorem}\label{thm:complexity_all}
  (Informal)
  Overall runtime $T_2$ of learning with optimized random features is
  \begin{equation*}
    T_2=O(MT_1)+O((MD+T_{\tilde{x}}+T_y)\times(\nicefrac{1}{\epsilon^2})),
  \end{equation*}
  where $T_1$ appears in Theorem~\ref{thm:sampling}, the first term is the runtime of sampling $M$ optimized random features by our quantum algorithm, and the second term is the runtime of the SGD\@.
\end{theorem}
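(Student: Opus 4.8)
The plan is to decompose the learning procedure into two phases and bound the runtime of each, so that $T_2$ is their sum. In the first phase I invoke the quantum sampling algorithm of Theorem~\ref{thm:sampling} a total of $M$ times to draw the optimized random features $v_0,\ldots,v_{M-1}$; since each sample costs $T_1$, this phase contributes $O(MT_1)$, which is the first term of $T_2$. Here $M$ is chosen according to~\eqref{eq:M_bound}, so that in the limit $N,G\to\infty$ the optimized features admit coefficients $\alpha_0^\ast,\ldots,\alpha_{M-1}^\ast$ for which the estimate $\hat{f}_{M,v_m,\alpha_m^\ast}$ attains generalization error $O(\epsilon)$; this existence is exactly the guarantee proven in Ref.~\cite{B1}, and it is the target that the second phase converges to.

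In the second phase I fix the $M$ sampled features and perform least-squares (ridge) regression over the coefficients $\alpha_m$ by SGD, so the analysis reduces to the per-iteration cost and the number of iterations. For the per-iteration cost, each step queries one example $(\tilde{x}_n,y_n)$ through $\mathcal{O}_{\tilde{x}}$ and $\mathcal{O}_y$ at cost $T_{\tilde{x}}+T_y$, then evaluates the current estimate $\hat{f}_{M,v_m,\alpha_m}(\tilde{x}_n)=\sum_{m=0}^{M-1}\alpha_m\varphi(v_m,\tilde{x}_n)$; since each feature $\varphi(v_m,\tilde{x}_n)=\mathrm{e}^{-2\pi\mathrm{i}v_m\cdot\tilde{x}_n}$ costs a single $D$-dimensional inner product, evaluating all $M$ features and forming the stochastic gradient and coefficient update costs $O(MD)$. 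Hence one iteration runs in $O(MD+T_{\tilde{x}}+T_y)$.

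For the number of iterations, the objective $\int d\rho(x)\,|\hat{f}_{M,v_m,\alpha_m}(x)-f(x)|^2$ is convex in $\alpha$, and because the given examples are observations of an IID random variable (using $N>T$ so that a fresh example is available at each step), the stochastic gradients are unbiased estimators of the true gradient. Invoking the standard convergence rate of SGD for such convex objectives~\cite{pmlr-v99-jain19a,carratino2018learning}, $T=O(\nicefrac{1}{\epsilon^2})$ iterations suffice to drive the expected suboptimality below $O(\epsilon)$, and hence the generalization error to $O(\epsilon)$ relative to the best $\hat{f}_{M,v_m,\alpha_m^\ast}$ guaranteed by the first phase. Multiplying the per-iteration cost by $T$ gives the second term $O((MD+T_{\tilde{x}}+T_y)\times\nicefrac{1}{\epsilon^2})$, and summing the two phases yields $T_2$.

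The main obstacle is the SGD convergence analysis rather than the bookkeeping of per-iteration costs. To legitimately invoke the $O(\nicefrac{1}{\epsilon^2})$ rate I must verify its regularity hypotheses: that the stochastic gradients have bounded variance, which follows from $\|f\|_{\mathcal{F}}\leqq 1$ together with the boundedness $|\varphi(v_m,\tilde{x})|=1$ of the feature map, and that an appropriate step-size schedule is adopted. I would also confirm that sampling features from a distribution only $\Delta$-close to $Q_\epsilon^\ast P^{(\tau)}$ (as quantified in Theorem~\ref{thm:sampling}) and using the empirical distribution $\hat{\q}^{(\rho)}$ in place of $d\rho$ do not inflate the $O(\epsilon)$ target; this is precisely where the $N,G\to\infty$ limit is used, absorbing the discretization and statistical errors and leaving the SGD rate as the operative bound. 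Chaining these estimates, and using that the rescaling of Table~\ref{table:rescaling} leaves the accuracy and model invariant, completes the proof.
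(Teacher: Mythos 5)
Your decomposition matches the paper's: $M$ invocations of the sampling algorithm give the $O(MT_1)$ term, and your per-iteration cost for SGD, $O(MD+T_{\tilde{x}}+T_y)$, is correct --- indeed your estimator (the full per-sample gradient, computed by caching the $M$ feature values and the residual) attains $O(MD)$ without the paper's additional trick of uniformly sampling a feature index $m$, and is equally unbiased given a fresh example per iteration. The gap is in the iteration count. You invoke the rate $T=O(\nicefrac{1}{\epsilon^2})$ after asserting that bounded gradient variance ``follows from $\|f\|_{\mathcal{F}}\leqq 1$ together with the boundedness $|\varphi(v_m,\tilde{x})|=1$.'' That assertion is false as stated: the stochastic gradient at $\alpha$ has components $2\Re\bigl[\overline{\varphi(v_j,\tilde{x})}\bigl(y-\sum_{m}\alpha_m\varphi(v_m,\tilde{x})\bigr)\bigr]$, so its $\ell_2$ norm is $2\sqrt{M}$ times the residual, and the residual involves $\|\alpha\|_1\leqq\sqrt{M}\|\alpha\|_2$. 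Neither is controlled by the RKHS norm of $f$ or by $|\varphi|=1$; over unconstrained iterates the gradient norm is unbounded, and even on a bounded set it grows with $M$, so the constants in your ``standard rate'' are not dimension-free.

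What the paper's proof actually supplies at this point is the missing constraint and a cancellation in $M$. It runs \emph{projected} SGD on a ball $\mathcal{W}$ of radius $O(\nicefrac{1}{\sqrt{MQ_{\min}}})$, where $Q_{\min}=\min_m Q(v_m)$ and the radius comes from Ref.~\cite{B1}'s quantitative guarantee that coefficients achieving accuracy $O(\epsilon)$ can be taken with $\sum_{m}Q(v_m)\alpha_m^2=O(\nicefrac{1}{M})$, hence $\|\alpha\|_2^2=O(\nicefrac{1}{MQ_{\min}})$. This gives diameter $d=O(\nicefrac{1}{\sqrt{MQ_{\min}}})$ and gradient bound $L=O(\sqrt{\nicefrac{M}{Q_{\min}}})$, and the crucial point is that the product $dL=O(\nicefrac{1}{Q_{\min}})$ is independent of $M$; the high-probability bound $\epsilon=O\bigl(dL\sqrt{\nicefrac{\log(\nicefrac{1}{\delta})}{T}}\bigr)$ of Ref.~\cite{pmlr-v99-jain19a} then yields $T=O\bigl(\nicefrac{\log(\nicefrac{1}{\delta})}{(\epsilon^2Q_{\min}^2)}\bigr)$, which the informal statement abbreviates by suppressing $Q_{\min}$ and $\delta$. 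Without importing the coefficient-norm bound from Ref.~\cite{B1} to define $\mathcal{W}$ and verifying this $M$-cancellation, your $O(\nicefrac{1}{\epsilon^2})$ iteration bound --- and hence the entire second term of $T_2$ --- is unsupported; that step, not the per-iteration bookkeeping, is the substantive content of the paper's proof.
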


\section{Conclusion}
We have constructed a quantum algorithm for sampling an \textit{optimized random feature} within a linear time $O(D)$ in data dimension $D$,
achieving an exponential speedup in $D$ compared to the existing classical algorithm~\cite{B1,NIPS2018_7598,Shahrampour2019} for this sampling task.
Combining $M$ features sampled by this quantum algorithm with stochastic gradient descent, we can achieve supervised learning in time $O(MD)$ without canceling out the exponential speedup, where this $M$ is expected to be nearly minimal since we use the optimized random features. As for future work, it is open to prove hardness of sampling an optimized random feature for \textit{any} possible classical algorithm under complexity-theoretical assumptions.
It is also interesting to investigate whether we can reduce the runtime to $O(M\log D)$, as in Refs.~\cite{L1,NIPS2016_6246} but using the optimized random features to achieve minimal $M$.
Since our quantum algorithm does not impose sparsity or low-rank assumptions,
our results open a route to a widely applicable framework of kernel-based quantum machine learning with an exponential speedup.

\section*{Broader Impact}

Quantum computation has recently been attracting growing attentions owing to its potential for achieving computational speedups compared to any conventional classical computation that runs on existing computers, opening the new field of accelerating machine learning tasks via quantum computation: \textit{quantum machine learning}.
To attain a large quantum speedup, however,
existing algorithms for quantum machine learning require extreme assumptions on sparsity and low rank of matrices used in the algorithms, which limit applicability of the quantum computation to machine learning tasks.
In contrast, the novelty of this research is to achieve an exponential speedup in quantum machine learning without the sparsity and low-rank assumptions, broadening the applicability of quantum machine learning.

Advantageously, our quantum algorithm eliminates the computational bottleneck faced by a class of existing classical algorithms for scaling up kernel-based learning algorithms by means of random features.
In particular, using this quantum algorithm, we can achieve the learning with the nearly \textit{optimal} number of features, whereas this optimization has been hard to realize due to the bottleneck in the existing classical algorithms.
A drawback of our quantum algorithm may arise from the fact that we use powerful quantum subroutines for achieving the large speedup, and these subroutines are hard to implement on existing or near-term quantum devices that cannot achieve universal quantum computation due to noise.
At the same time, these subroutines make our quantum algorithm hard to simulate by classical computation, from which stems the computational advantage of our quantum algorithm over the existing classical algorithms.
Thus, our results open a route to a widely applicable framework of kernel-based quantum machine learning with an exponential speedup, leading to a promising candidate of ``killer applications'' of universal quantum computers.

\begin{ack}
This work was supported by CREST (Japan Science and Technology Agency) JPMJCR1671, Cross-ministerial Strategic Innovation Promotion Program (SIP) (Council for Science, Technologyand Innovation (CSTI)), JSPS Overseas Research Fellowships,
a Cambridge-India Ramanujan scholarship from the Cambridge Trust and the SERB (Govt.\ of India),
and JSPS KAKENHI 18K18113.
\end{ack}

\newpage
\appendix
\setcounter{theorem}{0}

\section*{Supplementary Material}

In Supplementary Material, after summarizing basic notions of quantum computation, we provide proofs of theorems and propositions mentioned in the main text.
In Sec.~\ref{sec:quantum_suplementary}, the basic notions of quantum computation are summarized.
In Sec.~\ref{sec:oracles},
the feasibility of implementing a quantum oracle that we use in our quantum algorithm is summarized.
In Sec.~\ref{sec:1},
we show Proposition~\ref{sprp:perfect_reconstruction} on the perfect reconstruction of the kernel, which is a crucial technique in our quantum algorithm.
In Sec.~\ref{sec:2},
we show Proposition~\ref{sprp:state} on a quantum state that we use in  our quantum algorithm for sampling an optimized random feature.
In Sec.~\ref{sec:3},
we show our quantum algorithm (Algorithm~\ref{salg:random_feature_revised}) for sampling the optimized random feature,
and prove Theorem~\ref{sthm:sampling} on the runtime of Algorithm~\ref{salg:random_feature_revised}.
In Sec.~\ref{sec:4},
we show the overall algorithm (Algorithm~\ref{salg:data_approximation}) for learning with the optimized random features by combining Algorithm~\ref{salg:random_feature_revised} with stochastic gradient descent (Algorithm~\ref{salg:sgd}),
and prove Theorem~\ref{sthm:runtime_all} on the runtime of Algorithm~\ref{salg:data_approximation}.
Note that lemmas that we show for the runtime analysis of our quantum algorithm are presented in Sec.~\ref{sec:3}, and the proofs in the other sections do not require these lemmas on quantum computation.
The notations used in Supplementary Material is the same as those in the main text.

\section{\label{sec:quantum_suplementary}Quantum computation}

In this section, we summarize basic notions of quantum computation, referring to Refs.~\cite{N4,arXiv:1907.09415} for more detail.

Analogously to a bit $\{0,1\}$ in classical computation,
the unit of quantum computation is a quantum bit (qubit), mathematically represented by $\mathbb{C}^2$, i.e., a $2$-dimensional complex Hilbert space.
A fixed orthonormal basis of a qubit $\mathbb{C}^2$ is denoted by $\left\{\Ket{0}\coloneqq\left(\begin{smallmatrix}1\\0\end{smallmatrix}\right),\Ket{1}\coloneqq\left(\begin{smallmatrix}0\\1\end{smallmatrix}\right)\right\}$.
Similarly to a bit taking a state $b\in\{0,1\}$, a qubit takes a quantum state $\Ket{\psi}=\alpha_0\Ket{0}+\alpha_1\Ket{1}=\left(\begin{smallmatrix}\alpha_0\\\alpha_1\end{smallmatrix}\right)\in\mathbb{C}^2$.
While a register of $m$ bits takes values in ${\left\{0,1\right\}}^m$, a quantum register of $m$ qubits is represented by the tensor-product space ${\left(\mathbb{C}^2\right)}^{\otimes m}\cong\mathbb{C}^{2^m}$, i.e., a $2^m$-dimensional Hilbert space.
We may use $=$ rather than $\cong$ to represent isomorphism for brevity.
We let $\mathcal{H}$ denote a finite-dimensional Hilbert space representing a quantum register; that is, an $m$-qubit register is $\mathcal{H}=\mathbb{C}^{2^m}$.
A fixed orthonormal basis $\{\Ket{x}: x\in\{0,\ldots,2^m-1\}\}$ labeled by $m$-bit strings, or the corresponding integers, is called the \textit{computational basis} of $\mathcal{H}$.
A state of $\mathcal{H}$ can be denoted by $\Ket{\psi} = \sum_{x=0}^{2^m-1}\alpha_x\Ket{x}\in\mathcal{H}$.
Any quantum state $\Ket{\psi}$ requires an $L_2$ normalization condition $\left\|\Ket{\psi}\right\|_2=1$, and for any $\theta\in\mathbb{R}$, $\Ket{\psi}$ is identified with $\mathrm{e}^{\mathrm{i}\theta}\Ket{\psi}$.

In the bra-ket notation,
the conjugate transpose of the column vector $\Ket{\psi}$ is a row vector denoted by $\Bra{\psi}$, where $\Bra{\psi}$ and $\Ket{\psi}$ may be called a bra and a ket, respectively. The inner product of $\Ket{\psi}$ and $\Ket{\phi}$ is denoted by $\Braket{\psi|\phi}$, while their outer product $\Ket{\psi}\Bra{\phi}$ is a matrix.
The conjugate transpose of an operator $\A$ is denoted by $\A^\dag$, and the transpose of $\A$ with respect to the computational basis is denoted by $\A^\mathrm{T}$.

A measurement of a quantum state $\Ket{\psi}$ is a sampling process that returns a randomly chosen bit string from the quantum state.
An $m$-qubit state $\Ket{\psi}=\sum_{x=0}^{2^m-1}\alpha_x\Ket{x}$ is said to be in a superposition of the basis states $\Ket{x}$s.
A measurement of $\Ket{\psi}$ in the computational basis $\{\Ket{x}\}$ provides a random $m$-bit string $x\in{\{0,1\}}^m$ as outcome, with probability $p(x)={|\alpha_x|}^2$.
After the measurement, the state changes from $\Ket{\psi}$ to $\Ket{x}$ corresponding to the obtained outcome $x$, and loses the randomness in $\Ket{\psi}$; that is, to iterate the same sampling as this measurement, we need to prepare $\Ket{\psi}$ repeatedly for each iteration.
For two registers $\mathcal{H}^A\otimes\mathcal{H}^B$ and their state $\Ket{\phi}^{AB}=\sum_{x,x}\alpha_{x,x^\prime}\Ket{x}^A\otimes\Ket{x^\prime}^B\in\mathcal{H}^A\otimes\mathcal{H}^B$, a measurement of the register $\mathcal{H}^B$ for $\Ket{\phi}^{AB}$ in the computational basis $\{\Ket{x^\prime}^B\}$ of $\mathcal{H}^B$ yields an outcome $x^\prime$ with probability $p(x^\prime)=\sum_{x}p(x,x^\prime)$, where $p(x,x^\prime)={|\alpha_{x,x^\prime}|}^2$.
The superscripts of a state or an operator represent which register the state or the operator belongs to, while we may omit the superscripts if it is clear from the context.

A quantum algorithm starts by initializing $m$ qubits in a fixed state $\Ket{0}^{\otimes m}$, which we may write as $\Ket{0}$ if $m$ is clear from the context.
Then, we apply a $2^m$-dimensional unitary operator $\U$ to $\Ket{0}^{\otimes m}$, to prepare a state $\U\Ket{0}^{\otimes m}$.
Finally, a measurement of $\U\Ket{0}^{\otimes n}$ is performed to sample an $m$-bit string from a probability distribution given by $\U\Ket{0}^{\otimes m}$.
Analogously to classical logic-gate circuits, $\U$ is represented by a quantum circuit composed of sequential applications of unitaries acting at most two qubits at a time.
Each of these unitaries is called an elementary quantum gate.
The runtime of a quantum algorithm represented by a quantum circuit is determined by the number of applications of elementary quantum gates in the circuit.

With techniques shown in Refs.~\cite{Subramanian_2019,chakraborty2018,G1},
non-unitary operators can also be used in quantum computation.
In particular,
to apply a non-unitary operator $\A$ in quantum computation, we use the technique of \textit{block encoding}~\cite{G1}, as summarized in the following.
A block encoding of $\A$ is a unitary operator $\U=\left(\begin{smallmatrix}\A&\cdot\\\cdot&\cdot\end{smallmatrix}\right)$ that encodes $\A$ in its left-top (or $\Ket{0}\Bra{0}$) subspace (up to numerical precision).
Note that we have
\begin{equation}
  \label{seq:block_encoding}
  \U=\left(\begin{smallmatrix}\mathbf{A}&\mathbf{B}\\\mathbf{C}&\mathbf{D}\end{smallmatrix}\right)=\Ket{0}\Bra{0}\otimes \mathbf{A}+\Ket{0}\Bra{1}\otimes \mathbf{B}+\Ket{1}\Bra{0}\otimes \mathbf{C}+\Ket{1}\Bra{1}\otimes \mathbf{D},
\end{equation}
if $\mathbf{A}$, $\mathbf{B}$, $\mathbf{C}$, and $\mathbf{D}$ are on the Hilbert space of the same dimension.
Consider a state $\Ket{0}\otimes\Ket{\psi}=\left(\begin{smallmatrix}\Ket{\psi}\\ \mathbf{0}\end{smallmatrix}\right)$ in the top-left (or $\Ket{0}\Bra{0}$) subspace of $\U$, where $\mathbf{0}$ is a zero column vector, and $\Ket{0}\in\mathbb{C}^d$ for some $d$.
Applying $\U$ to the state $\Ket{0}\otimes\Ket{\psi}$, we would obtain
\begin{equation}
  \U\left(\Ket{0}\otimes\Ket{\psi}\right)=\sqrt{p}\Ket{0}\otimes \frac{\A\Ket{\psi}}{\left\|\A\Ket{\psi}\right\|_2}+\sqrt{1-p}\Ket{\perp},
\end{equation}
where $p=\left\|\A\Ket{\psi}\right\|_2^2$, and $\Ket{\perp}$ is a state of no interest satisfying $(\Ket{0}\Bra{0}\otimes\mathbbm{1})\Ket{\perp}$.
Then, we can prepare the state to which $\A$ is applied, i.e.,
\begin{equation}
  \frac{\A\Ket{\psi}}{\left\|\A\Ket{\psi}\right\|_2}
\end{equation}
using this process for preparing $\U\left(\Ket{0}\otimes\Ket{\psi}\right)$ and its inverse process repeatedly $O(\frac{1}{\sqrt{p}})$ times, by means of amplitude amplification~\cite{BrassardHoyer}.
Note that given a quantum circuit, its inverse can be implemented by replacing each gate in the circuit with its inverse gate; that is, the circuit and its inverse circuit have the same runtime since they are composed of the same number of gates.
In Sec.~\ref{sec:3},
we will use the following more precise definition of block encoding to take the precision $\Delta$ into account.
For any operator $\A$ on $s$ qubits, i.e., on $\mathbb{C}^{2^s}$, a unitary operator $\U$ on $(s+a)$ qubits, i.e., on $\mathbb{C}^{2^{s+a}}$, is called an $(\alpha,a,\Delta)$-block encoding of $\A$ if it holds that
\begin{equation}
  \label{seq:block_encoding_definition}
  \left\|\A-\alpha\left(\mathbbm{1}\otimes\Bra{0}^{\otimes a}\right)\U\left(\mathbbm{1}\otimes\Ket{0}^{\otimes a}\right)\right\|_\infty\leqq \Delta,
\end{equation}
where $\|\cdot\|_\infty$ is the operator norm.
Note that since any unitary operator $\U$ satisfies $\left\|\U\right\|_\infty\leqq 1$, it is necessary that $\left\|\A\right\|_\infty\leqq\alpha+\Delta$.

\section{\label{sec:oracles}Feasibility of implementing quantum oracle}

In this section, we summarize the feasibility of implementing a quantum oracle that we use in our quantum algorithm.

The quantum oracles are mathematically represented by unitary operators.
As shown in the main text,
to access given examples of data in our quantum algorithm,
we use a quantum oracle $\mathcal{O}_\rho$ acting as
\begin{equation}
\label{seq:oracle_rho_definition}
\mathcal{O}_\rho(\Ket{0})=\sum_{\tilde{x}\in\tilde{\mathcal{X}}}\sqrt{\hat{q}^{(\rho)}(\tilde{x})}\Ket{\tilde{x}}
=\sqrt{\hat{\q}^{(\rho)}}\sum_{\tilde{x}\in\tilde{\mathcal{X}}}\Ket{\tilde{x}},
\end{equation}
where we write
\begin{equation}
    \hat{\q}^{(\rho)}=\sum_{\tilde{x}\in\tilde{\mathcal{X}}}\hat{q}^{(\rho)}\left(\tilde{x}\right)\Ket{\tilde{x}}\Bra{\tilde{x}}.
\end{equation}

We can efficiently implement the quantum oracle $\mathcal{O}_\rho$ with an acceptable preprocessing overhead
using the $N$ given examples of input data $\tilde{x}_0,\ldots,\tilde{x}_{N-1}$.
From these examples,
we can prepare a data structure proposed in Ref.~\cite{kerenidis_et_al:LIPIcs:2017:8154} in $O(N{(D\log{G})}^2)$ time using $O(N{(D\log{G})}^2)$ bits of memory,
while collecting and storing the $N$ data points
requires at least $\Theta(ND\log{G})$ time and $\Theta(ND\log{G})$ bits of memory.
Note that this data structure is also used in ``quantum-inspired'' classical algorithms~\cite{10.1145/3313276.3316310,arXiv:1910.05699,arXiv:1910.06151}.
Then, we can implement $\mathcal{O}_\rho$ by a quantum circuit combined with a quantum random access memory (QRAM)~\cite{PhysRevA.78.052310,PhysRevLett.100.160501}, which can load data from this data structure into qubits in quantum superposition (i.e.\ linear combinations of quantum states).
With $T_\mathrm{Q}$ denoting runtime of this QRAM per query,
it is known that this implementation of $\mathcal{O}_\rho$ with precision $\Delta$ has runtime
\begin{equation}
T_\rho=O(D\log({G})\polylog(\nicefrac{1}{\Delta})\times T_\mathrm{Q})
\end{equation}
per query~\cite{kerenidis_et_al:LIPIcs:2017:8154,G3}.
The runtime $T_\mathrm{Q}$ of this QRAM may scale poly-logarithmically in $N$ depending on how we implement the QRAM, but such an implementation suffices to meet our expectation in the main text that $T_\rho$ should be $O(1)$ or $O(\polylog(N))$ as $N$ increases.
Note that the inverse $\mathcal{O}_\rho^\dag$ of $\mathcal{O}_\rho$ has the same runtime $T_\rho$ since $\mathcal{O}_\rho^\dag$ can be implemented by replacing each quantum gate in the circuit for $\mathcal{O}_\rho$ with its inverse.

Thus, if both the quantum computer and the QRAM are available, we can implement $\mathcal{O}_\rho$ feasibly and efficiently.
Similarly to the quantum computer assumed to be available in this paper,
QRAM is actively under development towards its physical realization; e.g., see Refs.~\cite{jiang2019experimental,PhysRevLett.123.250501} on recent progress towards realizing QRAM\@.
The use of QRAM is a common assumption in quantum machine learning (QML) especially to deal with a large amount of data;
however, even with QRAM, achieving quantum speedup is nontrivial.
Note that we do not include the time for collecting the data or preparing the above data structure in runtime of our learning algorithm, but even if we took them into account, an exponential speedup from $O(\exp(D))$ to $O(\poly(D))$ would not be canceled out.
Since we exploit $\mathcal{O}_\rho$ for constructing a widely applicable QML framework achieving the exponential speedup without sparsity and low-rank assumptions, our results motivate further technological development towards realizing the QRAM as well as the quantum computer.

\section{\label{sec:1}Perfect reconstruction of kernel}

In this section, we show the following perfect reconstruction of the kernel that we use in our quantum algorithm.

\begin{proposition}[\label{sprp:perfect_reconstruction}Perfect reconstruction of kernel]
  Given any periodic translation-invariant kernel $\tilde{k}$,
  we exactly have for each $\tilde{x}^\prime,\tilde{x}\in\tilde{\mathcal{X}}$
  \begin{align*}
    \tilde{k}\left(\tilde{x}^\prime,\tilde{x}\right)&=\sum_{v_G\in\mathcal{V}_G}\frac{Q^{(\tau)}(v_G)}{G^D}\overline{\varphi(v_G,\tilde{x}^\prime)}\varphi(v_G,\tilde{x})\nonumber\\
    &=\Bra{\tilde{x}^\prime}\F_D^\dag \Q^{(\tau)} \F_D\Ket{\tilde{x}}=\Bra{\tilde{x}^\prime}\F_D \Q^{(\tau)} \F_D^\dag\Ket{\tilde{x}}.
  \end{align*}
\end{proposition}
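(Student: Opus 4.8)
The plan is to start from the Shannon-sampling representation of $\tilde{k}$ already displayed above the proposition,
\begin{equation*}
\tilde{k}(x^\prime,x)=\sum_{\tilde{v}\in\mathbb{Z}^D}\frac{q^{(\tau)}(\nicefrac{\tilde{v}}{G})}{G^D}\overline{\varphi(\nicefrac{\tilde{v}}{G},x^\prime)}\varphi(\nicefrac{\tilde{v}}{G},x),
\end{equation*}
and to collapse the infinite sum over $\tilde{v}\in\mathbb{Z}^D$ to the finite sum over $\mathcal{V}_G$ by exploiting the periodicity of the feature map on the integer grid. The key observation is that, for $\tilde{x}\in\tilde{\mathcal{X}}=\{0,\ldots,G-1\}^D$ and any $\tilde{v}^\prime\in\mathbb{Z}^D$, one has $\varphi(\nicefrac{(\tilde{v}+G\tilde{v}^\prime)}{G},\tilde{x})=\mathrm{e}^{-2\pi\mathrm{i}\tilde{v}\cdot\tilde{x}/G}\,\mathrm{e}^{-2\pi\mathrm{i}\tilde{v}^\prime\cdot\tilde{x}}=\varphi(\nicefrac{\tilde{v}}{G},\tilde{x})$, since $\tilde{v}^\prime\cdot\tilde{x}\in\mathbb{Z}$ makes the second exponential equal to $1$. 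First I would split $\mathbb{Z}^D$ into residue classes by writing $\tilde{v}=\tilde{v}_0+G\tilde{v}^\prime$ with $\tilde{v}_0\in\{0,\ldots,G-1\}^D$ and $\tilde{v}^\prime\in\mathbb{Z}^D$, so that $\nicefrac{\tilde{v}}{G}=v_G+\tilde{v}^\prime$ with $v_G=\nicefrac{\tilde{v}_0}{G}\in\mathcal{V}_G$; the feature maps then depend only on $v_G$ and can be pulled out of the inner sum, which leaves precisely $\sum_{\tilde{v}^\prime}q^{(\tau)}(v_G+\tilde{v}^\prime)=Q^{(\tau)}(v_G)$ by definition~\eqref{eq:finite_feature_set}. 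This yields the first claimed equality.

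For the operator form I would recognize the normalized feature map as a matrix element of the $D$-dimensional discrete Fourier transform: with the convention $\Braket{\tilde{w}|\F_D|\tilde{x}}=\nicefrac{1}{\sqrt{G^D}}\,\mathrm{e}^{-2\pi\mathrm{i}\tilde{w}\cdot\tilde{x}/G}=\nicefrac{1}{\sqrt{G^D}}\,\varphi(\nicefrac{\tilde{w}}{G},\tilde{x})$, and hence $\Braket{\tilde{x}^\prime|\F_D^\dag|\tilde{w}}=\nicefrac{1}{\sqrt{G^D}}\,\overline{\varphi(\nicefrac{\tilde{w}}{G},\tilde{x}^\prime)}$. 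Inserting the spectral decomposition $\Q^{(\tau)}=\sum_{\tilde{w}}Q^{(\tau)}(\nicefrac{\tilde{w}}{G})\Ket{\tilde{w}}\Bra{\tilde{w}}$ between $\F_D^\dag$ and $\F_D$ and substituting these matrix elements reproduces exactly the finite sum $\sum_{v_G}\nicefrac{Q^{(\tau)}(v_G)}{G^D}\,\overline{\varphi(v_G,\tilde{x}^\prime)}\varphi(v_G,\tilde{x})$, where summing $\tilde{w}$ over $\tilde{\mathcal{X}}$ is identified with summing $v_G=\nicefrac{\tilde{w}}{G}$ over $\mathcal{V}_G$. This gives the middle equality $\tilde{k}(\tilde{x}^\prime,\tilde{x})=\Braket{\tilde{x}^\prime|\F_D^\dag\Q^{(\tau)}\F_D|\tilde{x}}$.

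Finally, for the third equality I would note that swapping $\F_D$ and $\F_D^\dag$ only flips the sign of the exponent, giving $\Braket{\tilde{x}^\prime|\F_D\Q^{(\tau)}\F_D^\dag|\tilde{x}}=\nicefrac{1}{G^D}\sum_{\tilde{w}}Q^{(\tau)}(\nicefrac{\tilde{w}}{G})\,\mathrm{e}^{-2\pi\mathrm{i}\tilde{w}\cdot(\tilde{x}^\prime-\tilde{x})/G}$, which is the reindexing $\tilde{w}\mapsto-\tilde{w}\ (\mathrm{mod}\ G)$ of the middle expression. Thus the two operator forms coincide provided $Q^{(\tau)}$ is even, i.e.\ invariant under $v_G\mapsto -v_G\ (\mathrm{mod}\ 1)$; this is exactly where I expect the only real subtlety to lie. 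I would justify the evenness from the fact that $\tilde{k}$ is a real, symmetric kernel, so that its spectral density $q^{(\tau)}$ is real and even by Bochner's theorem, a property preserved by the periodization defining $Q^{(\tau)}$; equivalently, one can simply conjugate the middle form and use that $\Q^{(\tau)}$ is diagonal with real entries together with the reality of $\tilde{k}$ to land on the third form. The remaining work --- the residue-class bookkeeping in Step~1 and the bookkeeping of the $D$-fold tensor-product DFT matrix elements in Step~2 --- is routine, so the evenness argument in Step~3 is the only step requiring genuine care.
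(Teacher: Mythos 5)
Your proposal is correct and follows essentially the same route as the paper's proof: the Shannon-sampling representation, periodization of $q^{(\tau)}$ into $Q^{(\tau)}$ by aliasing, identification of $\varphi(v_G,\tilde{x})$ with matrix elements of $\F_D$, and the reality/symmetry of $\tilde{k}$ (equivalently, the evenness of $Q^{(\tau)}$, or conjugating the middle form) for the final equality. The only cosmetic differences are that you collapse the infinite sum directly by residue classes modulo $G$ using $\varphi(v_G+\tilde{v}^\prime,\tilde{x})=\varphi(v_G,\tilde{x})$ for integer $\tilde{x}$, whereas the paper reaches the same periodization via a discrete Fourier transform of $\tilde{k}_\mathrm{TI}$ on the grid and character orthogonality, and that the paper re-derives from Shannon's theorem the sampling representation that you take as your starting point.
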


\begin{proof}
  To show the perfect reconstruction of the kernel $\tilde{k}$,
   we crucially use the assumption given in the main text that the data domain is finite due to the discretized representation
  \begin{equation}
    \tilde{\mathcal{X}}={\left\{0,1,\ldots,{G}-1\right\}}^D.
  \end{equation}
As summarized in the main text,
recall that we approximate a translation-invariant (but not necessarily periodic) kernel $k\left(x^\prime,x\right)$ by
\begin{equation}
  \label{seq:kernel_approximation}
  \tilde{k}\left(x^\prime,x\right)=\sum_{n\in\mathbb{Z}^D}k\left(x^\prime,x+Gn\right).
\end{equation}
To represent the translation-invariant kernel functions,
we may write
\begin{align}
  k_\mathrm{TI}\left(x^\prime-x\right)&\coloneqq k\left(x^\prime,x\right),\\
  \tilde{k}_\mathrm{TI}\left(x^\prime-x\right)&\coloneqq \tilde{k}\left(x^\prime,x\right).
\end{align}
The function $\tilde{k}$ is periodic by definition; in particular, we have for any $n^\prime\in\mathbb{Z}^D$
\begin{equation}
  \label{seq:periodicity}
  \tilde{k}\left(x^\prime,x\right)=\tilde{k}\left(x^\prime+Gn^\prime,x\right)=\tilde{k}\left(x^\prime,x+Gn^\prime\right)=\tilde{k}_\mathrm{TI}(x^\prime-x+Gn^\prime).
\end{equation}
Recall that the translation-invariant kernel $k:\mathcal{X}\times\mathcal{X}\to\mathbb{R}$ can be written as
\begin{equation}
  \label{seq:kernel_in_expectation}
  k\left(x^\prime,x\right)=\int_\mathcal{V}d\tau\left(v\right)\overline{\varphi\left(v,x^\prime\right)}\varphi\left(v,x\right),
\end{equation}
where $\varphi(v,x)\coloneqq\mathrm{e}^{-2\pi\mathrm{i}v\cdot x}$, and $d\tau$ is given by the Fourier transform of the kernel, in particular,~\cite{R2}
\begin{equation}
  \label{seq:dtau}
  d\tau(v)=q^{(\tau)}(v)dv=\left[\int_\mathcal{X}dx\,\mathrm{e}^{-2\pi\mathrm{i}v\cdot x}k_\mathrm{TI}\left(x\right)\right] dv.
\end{equation}
Similarly to~\eqref{seq:dtau}, our proof will expand $\tilde{k}$ using the Fourier transform.

To expand $\tilde{k}$,
we first consider the case of $D=1$, and will later consider $D\geqq 1$ in general.
In the case of $D=1$,
Shannon's sampling theorem~\cite{shannon1949communication} in signal processing~\cite{proakis2001digital} shows that we can perfectly reconstruct the kernel function $\tilde{k}_\mathrm{TI}$ on a \textit{continuous} domain $\left[-\frac{G}{2},\frac{G}{2}\right]$ from \textit{discrete} frequencies of its Fourier transform.
In the one-dimensional case, the Fourier transform of $\tilde{k}_\mathrm{TI}$ on $\left[-\frac{G}{2},\frac{G}{2}\right]$ is
\begin{equation}
  \int_{-\frac{G}{2}}^{\frac{G}{2}}dx\,\tilde{k}_\mathrm{TI}(x)\mathrm{e}^{-2\pi\mathrm{i}vx}=\int_{-\infty}^{\infty}dx\,k_\mathrm{TI}(x)\mathrm{e}^{-2\pi\mathrm{i}vx}=q^{(\tau)}(v).
\end{equation}
Then, for any $x\in\left[-\frac{G}{2},\frac{G}{2}\right]$, using the discrete frequencies $\tilde{v}\in\mathbb{Z}$ for $q^{(\tau)}\left(\tilde{v}\right)$,
we exactly obtain from the sampling theorem
\begin{equation}
  \label{seq:one_dim}
  \tilde{k}_\mathrm{TI}\left(x\right)=\frac{1}{G}\sum_{\tilde{v}=-\infty}^{\infty}q^{(\tau)}\left(\frac{\tilde{v}}{G}\right)\mathrm{e}^{2\pi\mathrm{i}\left(\frac{\tilde{v}}{G}\right)x}=\frac{1}{G}\sum_{\tilde{v}=-\infty}^{\infty}q^{(\tau)}\left(\frac{\tilde{v}}{G}\right)\mathrm{e}^{\frac{2\pi\mathrm{i}\tilde{v}x}{G}}.
\end{equation}
Due to the periodicity~\eqref{seq:periodicity} of $\tilde{k}_\mathrm{TI}$,~\eqref{seq:one_dim} indeed holds for any $x\in\mathbb{R}$.
In the same way, for any $D\geqq 1$, we have for any $x\in\mathbb{R}^D$
\begin{equation}
  \label{seq:D_dim}
  \tilde{k}_\mathrm{TI}\left(x\right)=\frac{1}{{G}^D}\sum_{\tilde{v}\in\mathbb{Z}^D}q^{(\tau)}\left(\frac{\tilde{v}}{G}\right)\mathrm{e}^{\frac{2\pi\mathrm{i}\tilde{v}\cdot x}{G}}.
\end{equation}

In addition, since $\tilde{\mathcal{X}}$ is a \textit{discrete} domain spaced at intervals $1$,
we can achieve the perfect reconstruction of the kernel $\tilde{k}_\mathrm{TI}$ on $\tilde{\mathcal{X}}$ by the $D$-dimensional discrete Fourier transform of $\tilde{k}_\mathrm{TI}$, using a \textit{finite} set of discrete frequencies for $q^{(\tau)}$.
In particular, for each $\tilde{v}\in\tilde{\mathcal{X}}$, the discrete Fourier transform of $\tilde{k}_\mathrm{TI}$ yields
\begin{align}
  \label{seq:discrete_fourier}
  \frac{1}{\sqrt{{G}^D}}\sum_{\tilde{x}\in\tilde{\mathcal{X}}}\tilde{k}_\mathrm{TI}\left(\tilde{x}\right)\mathrm{e}^{\frac{-2\pi\mathrm{i}\tilde{v}\cdot \tilde{x}}{G}}&=\frac{1}{\sqrt{{G}^D}}\sum_{\tilde{x}\in\tilde{\mathcal{X}}}\left(\frac{1}{{G}^D}\sum_{\tilde{v}^{\prime\prime}\in\mathbb{Z}^D}q^{(\tau)}\left(\frac{\tilde{v}^{\prime\prime}}{G}\right)\mathrm{e}^{\frac{2\pi\mathrm{i}\tilde{v}^{\prime\prime}\cdot \tilde{x}}{G}}\right)\mathrm{e}^{\frac{-2\pi\mathrm{i}\tilde{v}\cdot \tilde{x}}{G}}\nonumber\\
  &=\frac{1}{\sqrt{{G}^D}}\sum_{\tilde{v}^\prime\in\mathbb{Z}^D}q^{(\tau)}\left(\frac{\tilde{v}}{G}+\tilde{v}^\prime\right),
\end{align}
where the sum over $\tilde{x}$ in the first line is nonzero if $\tilde{v}^{\prime\prime}=\tilde{v}+G\tilde{v}^\prime$ for any $\tilde{v}^\prime\in\mathbb{Z}^D$.
Thus for the perfect reconstruction of the kernel $\tilde{k}$ on this domain $\tilde{\mathcal{X}}$, it suffices to use feature points $v_{G}=\frac{\tilde{v}}{G}$ for each $\tilde{v}\in\tilde{\mathcal{X}}$, which yields a finite set $\mathcal{V}_{G}$ of features
\begin{equation}
  \label{seq:V_G}
  v_{G}=\left(\begin{matrix}
      v_{G}^{\left(1\right)}\\
      \vdots\\
      v_{G}^{\left(D\right)}
  \end{matrix}\right)
      \in\mathcal{V}_{G}\coloneqq{\left\{0,\frac{1}{G},\ldots,1-\frac{1}{G}\right\}}^D.
\end{equation}
We use the one-to-one correspondence between $v_{G}\in\mathcal{V}_{G}$ and $\tilde{x}\in\tilde{\mathcal{X}}$ satisfying
\begin{equation}
  \label{seq:correspondence_v}
  v_{G}=\frac{\tilde{x}}{G},
\end{equation}
which we may also write using $\tilde{v}=\tilde{x}$ as
\begin{equation}
  v_{G}=\frac{\tilde{v}}{G}.
\end{equation}
In the same way as the main text, we let $Q^{(\tau)}:\mathcal{V}_{G}\to\mathbb{R}$ denote the function in~\eqref{seq:discrete_fourier}
\begin{equation}
  \label{seq:q_tau_g_function}
  Q^{(\tau)}\left(v_{G}\right)\coloneqq\sum_{\tilde{v}^\prime\in\mathbb{Z}^D}q^{(\tau)}\left(v_{G}+\tilde{v}^\prime\right).
\end{equation}

Therefore, from the $D$-dimensional discrete Fourier transform of~\eqref{seq:discrete_fourier},
we obtain the perfect reconstruction of the kernel $\tilde{k}_\mathrm{TI}$ on the domain $\tilde{\mathcal{X}}$ using the feature points in $\mathcal{V}_{G}$ and the function $Q^{(\tau)}$ as
\begin{align}
  \label{seq:perfect_reconstruction}
  &\tilde{k}\left(\tilde{x}^\prime,\tilde{x}\right)=\tilde{k}_\mathrm{TI}\left(\tilde{x}^\prime-\tilde{x}\right)\nonumber\\
  &=\frac{1}{\sqrt{{G}^D}}\sum_{\tilde{v}\in\tilde{\mathcal{X}}}\left(\frac{1}{\sqrt{{G}^D}}\sum_{\tilde{v}^\prime\in\mathbb{Z}^D}q^{(\tau)}\left(\frac{\tilde{v}}{G}+\tilde{v}^\prime\right)\right)\mathrm{e}^{\frac{2\pi\mathrm{i}\tilde{v}\cdot \left(\tilde{x}^\prime-\tilde{x}\right)}{G}}\nonumber\\
  &=\sum_{v_{G}\in\mathcal{V}_{G}}\frac{Q^{(\tau)}\left(v_{G}\right)}{{G}^{D}}\overline{\varphi\left(v_{G},\tilde{x}^\prime\right)}\varphi\left(v_{G},\tilde{x}\right),\quad\forall \tilde{x}^\prime,\tilde{x}\in\tilde{\mathcal{X}},
\end{align}
which shows the first equality in Proposition~\ref{sprp:perfect_reconstruction}.
Note that this equality also leads to a lower bound of $Q_{\max}^{(\tau)}$, that is, the maximum of $Q^{(\tau)}\left(v_{G}\right)$, as shown in Remark~\ref{rem:perfect} after this proof.

To show the second equality in Proposition~\ref{sprp:perfect_reconstruction}, recall that we write a diagonal operator corresponding to $Q^{(\tau)}\left(v_{G}\right)$ as
\begin{equation}
  \label{seq:q_tau_g}
  \Q^{(\tau)}\coloneqq\sum_{\tilde{v}\in\tilde{\mathcal{X}}}Q^{(\tau)}\left(\frac{\tilde{v}}{G}\right)\Ket{\tilde{v}}\Bra{\tilde{v}}.
\end{equation}
Note that we write $\Ket{\tilde{v}}=\Ket{\tilde{x}}$ for $\tilde{v}=\tilde{x}\in\tilde{\mathcal{X}}$ for clarity of the presentation.
In addition, let $\F$ denote a unitary operator representing (one-dimensional) discrete Fourier transform
\begin{equation}
  \label{seq:F}
  \F\coloneqq\sum_{\tilde{x}=0}^{{G}-1}\left(\frac{1}{\sqrt{G}}\sum_{\tilde{v}=0}^{{G}-1}\mathrm{e}^{-\frac{2\pi\mathrm{i}\tilde{v}\tilde{x}}{G}}\Ket{\tilde{v}}\right)\Bra{\tilde{x}},
\end{equation}
and $\F_{D}$ denote a unitary operator representing $D$-dimensional discrete Fourier transform
\begin{equation}
  \label{seq:F_D}
  \F_{D}\coloneqq\F^{\otimes D}=\sum_{\tilde{x}\in\tilde{\mathcal{X}}}\left(\frac{1}{\sqrt{{G}^D}}\sum_{\tilde{v}\in\tilde{\mathcal{X}}}\mathrm{e}^{-\frac{2\pi\mathrm{i}\tilde{v}\cdot\tilde{x}}{G}}\Ket{\tilde{v}}\right)\Bra{\tilde{x}}.
\end{equation}
The feature map can be written in terms of $\F_{D}$ as
\begin{equation}
  \label{seq:feature}
  \varphi\left(v_{G},\tilde{x}\right)=\mathrm{e}^{-2\pi\mathrm{i}v_{G}\cdot\tilde{x}}=\sqrt{{G}^D}\Braket{\tilde{v}|\F_{D}|\tilde{x}}=\sqrt{{G}^D}\Braket{\tilde{x}|\F_{D}|\tilde{v}},
\end{equation}
where $v_{G}=\frac{\tilde{v}}{G}$, and the last equality follows from the invariance of $\F_{D}$ under the transpose with respect to the computational basis.
From~\eqref{seq:perfect_reconstruction},~\eqref{seq:F_D}, and~\eqref{seq:feature}, by linear algebraic calculation, we obtain the conclusion for any $\tilde{x}^\prime,\tilde{x}\in\tilde{\mathcal{X}}$
\begin{equation}
  \label{seq:kernel_decomposition}
  \tilde{k}\left(\tilde{x}^\prime,\tilde{x}\right)=\Braket{\tilde{x}^\prime|\F_{D}^\dag \Q^{(\tau)} \F_{D}|\tilde{x}}=\Braket{\tilde{x}^\prime|\F_{D} \Q^{(\tau)} \F_{D}^\dag|\tilde{x}},
\end{equation}
where the last equality follows from the fact that the kernel function $\tilde{k}$ is symmetric and real, i.e., $\tilde{k}(x^\prime,x)=\tilde{k}(x,x^\prime)$ and $\overline{\tilde{k}(x^\prime,x)}=\tilde{k}(x^\prime,x)$.
\end{proof}

\begin{remark}[\label{rem:perfect}A lower bound of $Q_{\max}^{(\tau)}$]
  Equality~\eqref{seq:perfect_reconstruction} has the following implication on a lower bound of the maximum of $Q^{(\tau)}\left(v_{G}\right)$
  \begin{equation}
\label{seq:q_tau_g_max}
{Q_{\max}^{(\tau)}}=\max\left\{Q^{(\tau)}\left(v_{G}\right):v_{G}\in\mathcal{V}_{G}\right\}.
  \end{equation}
Recall that we let $P^{(\tau)}$ denote a probability mass function on $\mathcal{V}_G$ proportional to $Q^{(\tau)}$
\begin{equation}
  \label{seq:normalization_p_tau}
  P^{(\tau)}\left(v_G\right)\coloneqq\frac{Q^{(\tau)}\left(v_G\right)}{\sum_{v_G^\prime\in\mathcal{V}_G}Q^{(\tau)}\left(v_G^\prime\right)},
\end{equation}
which by definition satisfies the normalization condition
\begin{equation}
  \sum_{v_G\in\mathcal{V}_G}P^{(\tau)}\left(v_G\right)=1.
\end{equation}
We obtain from~\eqref{seq:perfect_reconstruction}
\begin{equation}
  \tilde{k}(0,0)=\sum_{v_{G}\in\mathcal{V}_{G}}\frac{Q^{(\tau)}\left(v_{G}\right)}{{G}^{D}},
\end{equation}
and hence, we can regard $\tilde{k}(0,0)$ as a normalization factor in
\begin{equation}
  P^{(\tau)}\left(v_G\right)=\frac{1}{\tilde{k}(0,0)} \frac{Q^{(\tau)}\left(v_G\right)}{G^D}.
\end{equation}
The normalization of $P^{(\tau)}$ yields a lower bound of $Q_{\max}^{(\tau)}$
\begin{equation}
  Q_{\max}^{(\tau)}=G^D\times\frac{Q_{\max}^{(\tau)}}{G^D}\geqq \sum_{v_G\in\mathcal{V}_G}\frac{Q^{(\tau)}\left(v_G\right)}{G^D}=\tilde{k}(0,0)\sum_{v_G\in\mathcal{V}_G}P^{(\tau)}\left(v_G\right)=\tilde{k}(0,0)=\Omega(1),
\end{equation}
where we use the assumption $\tilde{k}(0,0)=\Omega(k(0,0))=\Omega(1)$.
\end{remark}

\section{\label{sec:2}Quantum state for sampling an optimized random feature}

In this section, we show a quantum state that we use in our quantum algorithm for sampling an optimized random feature.
In particular, as shown in the main text, recall a quantum state on two quantum registers $\mathcal{H}^X\otimes\mathcal{H}^{X^\prime}$
\begin{equation}
  \label{seq:state_sampling}
  \Ket{\Psi}^{XX^\prime}\propto\sum_{\tilde{x}\in\tilde{\mathcal{X}}}\hat{\SigmaOp}_{\epsilon}^{-\frac{1}{2}}\Ket{\tilde{x}}^X\otimes \sqrt{\frac{1}{Q_{\max}^{\left(\tau\right)}}\Q^{\left(\tau\right)}}\F_D^\dag\sqrt{\hat{q}^{\left(\rho\right)}\left(\tilde{x}\right)}\Ket{\tilde{x}}^{X^\prime},
\end{equation}
where $X$ and $X^\prime$ have the same number of qubits.
Then, we show the following proposition.

\begin{proposition}[\label{sprp:state}Quantum state for sampling an optimized random feature]
  If we perform a measurement of the quantum register $X^\prime$ on the state $\Ket{\Psi}^{XX^\prime}$ defined as~\eqref{seq:state_sampling} in the computational basis $\{\Ket{\tilde{x}}^{X^\prime}:\tilde{x}\in\tilde{\mathcal{X}}\}$, then we obtain a measurement outcome $\tilde{x}$ with probability $Q_{\epsilon}^\ast\left(\frac{\tilde{x}}{{G}}\right)P^{\left(\tau\right)}\left(\frac{\tilde{x}}{{G}}\right)$.
\end{proposition}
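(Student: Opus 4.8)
The plan is to compute directly the probability that a computational-basis measurement of register $X'$ returns an outcome $\tilde{x}'$, and then to match it termwise against the definition~\eqref{eq:hat_p_lambda_ast} of $Q_\epsilon^\ast$. Let $\Ket{\tilde\Psi}$ denote the unnormalized version of $\Ket{\Psi}$ in~\eqref{seq:state_sampling}, so that $p(\tilde{x}')=\|(\mathbbm{1}^X\otimes\Bra{\tilde{x}'}^{X'})\Ket{\tilde\Psi}\|_2^2/\Braket{\tilde\Psi|\tilde\Psi}$. First I would project $X'$ onto $\Ket{\tilde{x}'}$. Using the matrix element $\Braket{\tilde{x}'|\F_D^\dag|\tilde{x}}=\tfrac{1}{\sqrt{G^D}}\,\overline{\varphi(v_G',\tilde{x})}$ with $v_G'=\tilde{x}'/G$ (immediate from~\eqref{seq:feature} and the definition~\eqref{seq:F_D} of $\F_D$) together with the fact that $\sqrt{\tfrac{1}{Q_{\max}^{(\tau)}}\Q^{(\tau)}}$ is diagonal, the projected (unnormalized) vector on $X$ collapses to
\[
  (\mathbbm{1}^X\otimes\Bra{\tilde{x}'}^{X'})\Ket{\tilde\Psi}
  =\sqrt{\tfrac{Q^{(\tau)}(v_G')}{Q_{\max}^{(\tau)}G^D}}\;\hat{\SigmaOp}_{\epsilon}^{-1/2}\sqrt{\hat{\q}^{(\rho)}}\,\Ket{\bar\varphi'},
\]
where $\Ket{\bar\varphi'}\coloneqq\sum_{\tilde{x}}\overline{\varphi(v_G',\tilde{x})}\Ket{\tilde{x}}$ is the entrywise complex conjugate of $\Ket{\varphi'}\coloneqq\Ket{\varphi(v_G',\cdot)}$.

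Taking the squared norm and using that $\hat{\SigmaOp}_{\epsilon}$ is positive semidefinite (so $\hat{\SigmaOp}_{\epsilon}^{-1/2}$ is self-adjoint and $(\hat{\SigmaOp}_{\epsilon}^{-1/2})^2=\hat{\SigmaOp}_{\epsilon}^{-1}$), the key algebraic step is to rewrite the \emph{symmetrized} operator $\hat{\SigmaOp}_{\epsilon}=\tfrac{1}{Q_{\max}^{(\tau)}}(\sqrt{\hat{\q}^{(\rho)}}\kernel\sqrt{\hat{\q}^{(\rho)}}+\epsilon\mathbbm{1})$ in terms of the \emph{unsymmetrized} operator $\hat{\SigmaOp}=\kernel\hat{\q}^{(\rho)}$ of~\eqref{eq:hat_p_lambda_ast}. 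Here I would use the diagonal similarity $\sqrt{\hat{\q}^{(\rho)}}\kernel\sqrt{\hat{\q}^{(\rho)}}+\epsilon\mathbbm{1}=\sqrt{\hat{\q}^{(\rho)}}\,(\hat{\SigmaOp}+\epsilon\mathbbm{1})\,\sqrt{\hat{\q}^{(\rho)}}{}^{-1}$, obtained by factoring $\sqrt{\hat{\q}^{(\rho)}}$ out of the resolvent, which yields
\[
  \sqrt{\hat{\q}^{(\rho)}}\,\hat{\SigmaOp}_{\epsilon}^{-1}\,\sqrt{\hat{\q}^{(\rho)}}
  =Q_{\max}^{(\tau)}\,\hat{\q}^{(\rho)}\,(\hat{\SigmaOp}+\epsilon\mathbbm{1})^{-1},
\]
so that $p(\tilde{x}')\propto\tfrac{Q^{(\tau)}(v_G')}{G^D}\,\Bra{\bar\varphi'}\hat{\q}^{(\rho)}(\hat{\SigmaOp}+\epsilon\mathbbm{1})^{-1}\Ket{\bar\varphi'}$.

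The one place requiring care — and what I expect to be the main obstacle — is the complex conjugation introduced by $\F_D^\dag$: the quadratic form above uses $\Ket{\bar\varphi'}$, whereas $Q_\epsilon^\ast$ in~\eqref{eq:hat_p_lambda_ast} is defined with $\Ket{\varphi'}$. Since $\Bra{\bar\varphi'}\mathbf{M}\Ket{\bar\varphi'}=\Braket{\varphi'|\mathbf{M}^{\mathrm T}|\varphi'}$ for any matrix $\mathbf{M}$, I must show that $\mathbf{M}=\hat{\q}^{(\rho)}(\hat{\SigmaOp}+\epsilon\mathbbm{1})^{-1}$ is symmetric. This holds because $\kernel$ is real symmetric (the kernel obeys $\tilde{k}(\tilde{x}',\tilde{x})=\tilde{k}(\tilde{x},\tilde{x}')$) and $\hat{\q}^{(\rho)}$ is diagonal, giving the commutation identity $\hat{\q}^{(\rho)}(\kernel\hat{\q}^{(\rho)}+\epsilon\mathbbm{1})^{-1}=(\hat{\q}^{(\rho)}\kernel+\epsilon\mathbbm{1})^{-1}\hat{\q}^{(\rho)}=\mathbf{M}^{\mathrm T}$, which after clearing the two resolvents reduces to the trivial equality $\hat{\q}^{(\rho)}\kernel\hat{\q}^{(\rho)}+\epsilon\hat{\q}^{(\rho)}=\hat{\q}^{(\rho)}\kernel\hat{\q}^{(\rho)}+\epsilon\hat{\q}^{(\rho)}$. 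Hence $\Bra{\bar\varphi'}\mathbf{M}\Ket{\bar\varphi'}=\Braket{\varphi(v_G',\cdot)|\hat{\q}^{(\rho)}(\hat{\SigmaOp}+\epsilon\mathbbm{1})^{-1}|\varphi(v_G',\cdot)}$, and therefore $p(\tilde{x}')\propto Q^{(\tau)}(v_G')\,Q_\epsilon^\ast(v_G')\propto Q_\epsilon^\ast(v_G')P^{(\tau)}(v_G')$ by the definition of $P^{(\tau)}$.

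Finally, to upgrade proportionality to equality I would invoke normalization: the measured distribution $\{p(\tilde{x}')\}$ sums to $1$, the target obeys $\sum_{v_G}Q_\epsilon^\ast(v_G)P^{(\tau)}(v_G)=1$ by the normalization in~\eqref{eq:hat_p_lambda_ast}, and $v_G'=\tilde{x}'/G$ is a bijection between $\tilde{\mathcal{X}}$ and $\mathcal{V}_G$; two proportional probability vectors of equal total mass coincide, so $p(\tilde{x}')=Q_\epsilon^\ast(\tilde{x}'/G)P^{(\tau)}(\tilde{x}'/G)$ exactly. A minor technical point is that the diagonal similarity uses $\sqrt{\hat{\q}^{(\rho)}}{}^{-1}$, so one should restrict to the support of $\hat{\q}^{(\rho)}$ (grid points of nonzero empirical mass) or argue by continuity; this does not affect the stated outcome probabilities, since outcomes with $\hat{q}^{(\rho)}(\tilde{x}')=0$ receive weight zero on both sides.
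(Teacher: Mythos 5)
Your proof is correct and takes essentially the same route as the paper's. Both arguments rest on the same ingredients: the identification of the feature vector with a column of $\F_D$, the symmetrization identity relating $\hat{\q}^{(\rho)}{(\hat{\SigmaOp}+\epsilon\mathbbm{1})}^{-1}$ to $\sqrt{\hat{\q}^{(\rho)}}\hat{\SigmaOp}_{\epsilon}^{-1}\sqrt{\hat{\q}^{(\rho)}}$, a transpose-symmetry step to absorb the complex conjugation, and the final normalization argument; the only organizational difference is that the paper moves $\hat{\SigmaOp}_{\epsilon}^{-\frac{1}{2}}$ between registers via the identity $\sum_{\tilde{x}}\A\Ket{\tilde{x}}\otimes\B\Ket{\tilde{x}}=\sum_{\tilde{x}}\Ket{\tilde{x}}\otimes\B\A^{\mathrm{T}}\Ket{\tilde{x}}$ with $\hat{\SigmaOp}_{\epsilon}^{\mathrm{T}}=\hat{\SigmaOp}_{\epsilon}$ and then marginalizes by completeness, whereas you project $X^\prime$ onto the outcome first and place the transpose argument on $\hat{\q}^{(\rho)}{(\hat{\SigmaOp}+\epsilon\mathbbm{1})}^{-1}$ at the end --- the same computation run in opposite directions.

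One side remark is wrong, although the conclusion it aims at is true. It is not the case that outcomes with $\hat{q}^{(\rho)}(\tilde{x}^\prime)=0$ ``receive weight zero on both sides'': after $\F_D^\dag$, the measured register indexes \emph{features} $v_G^\prime=\nicefrac{\tilde{x}^\prime}{G}$, not data points, and $p(\tilde{x}^\prime)\propto Q^{(\tau)}(v_G^\prime)\Braket{\varphi(v_G^\prime,\cdot)|\hat{\q}^{(\rho)}{(\hat{\SigmaOp}+\epsilon\mathbbm{1})}^{-1}|\varphi(v_G^\prime,\cdot)}$ is generically nonzero at such $\tilde{x}^\prime$, since the plane wave $\Ket{\varphi(v_G^\prime,\cdot)}$ overlaps the whole support of $\hat{\q}^{(\rho)}$. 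The correct resolution of the support caveat --- in effect the one the paper adopts with the pseudoinverse and the projector $\Pi^{(\rho)}$ --- is that no inverse of $\sqrt{\hat{\q}^{(\rho)}}$ is ever needed: the intertwining relation $\left(\sqrt{\hat{\q}^{(\rho)}}\kernel\sqrt{\hat{\q}^{(\rho)}}+\epsilon\mathbbm{1}\right)\sqrt{\hat{\q}^{(\rho)}}=\sqrt{\hat{\q}^{(\rho)}}\left(\kernel\hat{\q}^{(\rho)}+\epsilon\mathbbm{1}\right)$ holds for singular $\hat{\q}^{(\rho)}$ as well, and since both operators in parentheses are invertible it gives $\sqrt{\hat{\q}^{(\rho)}}{\left(\kernel\hat{\q}^{(\rho)}+\epsilon\mathbbm{1}\right)}^{-1}={\left(\sqrt{\hat{\q}^{(\rho)}}\kernel\sqrt{\hat{\q}^{(\rho)}}+\epsilon\mathbbm{1}\right)}^{-1}\sqrt{\hat{\q}^{(\rho)}}$; multiplying on the left by $\sqrt{\hat{\q}^{(\rho)}}$ yields your key identity $\sqrt{\hat{\q}^{(\rho)}}\hat{\SigmaOp}_{\epsilon}^{-1}\sqrt{\hat{\q}^{(\rho)}}=Q_{\max}^{(\tau)}\hat{\q}^{(\rho)}{(\hat{\SigmaOp}+\epsilon\mathbbm{1})}^{-1}$ exactly, with no restriction or continuity argument required.
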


\begin{proof}
  The proof is given by linear algebraic calculation.
  Note that the normalization $\left\|\Ket{\Psi}^{XX^\prime}\right\|_2=1$ of a quantum state always yields the normalization $\sum_{\tilde{x}^\prime\in\tilde{\mathcal{X}}}p\left(\tilde{x}^\prime\right)=1$ of a probability distribution obtained from the measurement of $\mathcal{H}^{X^\prime}$ in the computational basis $\left\{\Ket{\tilde{x}^\prime}^{X^\prime}\right\}$, and hence, we may omit the normalization constant in the following calculation for simplicity of the presentation.

  Recall the definition of the optimized probability distribution $Q_{\epsilon}^\ast\left(v_G\right)P^{\left(\tau\right)}\left(v_G\right)$
  \begin{equation}
    \label{seq:state_1}
    Q_{\epsilon}^\ast\left(v_G\right)P^{\left(\tau\right)}\left(v_G\right)=
    \frac{\Braket{\varphi\left(v_{G},\cdot\right) | \hat{\q}^{\left(\rho\right)}{\left({\hat{\SigmaOp}}+\epsilon\mathbbm{1}\right)}^{-1} | \varphi\left(v_{G},\cdot\right)}Q^{(\tau)}\left(v_{G}\right)}{\sum_{v_{G}^\prime\in\mathcal{V}_{G}}\Braket{\varphi\left(v_{G}^\prime,\cdot\right) | \hat{\q}^{\left(\rho\right)}{\left({\hat{\SigmaOp}}+\epsilon\mathbbm{1}\right)}^{-1} | \varphi\left(v_{G}^\prime,\cdot\right)}Q^{(\tau)}\left(v_{G}^\prime\right)},
  \end{equation}
  where we write
  \begin{align}
    \hat{\SigmaOp}&=\kernel \hat{\q}^{(\rho)},\\
    \kernel&=\sum_{\tilde{x}^\prime,\tilde{x}\in\tilde{\mathcal{X}}}\tilde{k}\left(\tilde{x}^\prime,\tilde{x}\right)\Ket{\tilde{x}^\prime}\Bra{\tilde{x}}.
  \end{align}
For $v_{G}=\frac{\tilde{x}}{G}$, it follows from~\eqref{seq:feature} that
\begin{equation}
  \Ket{\varphi\left(v_{G},\cdot\right)}=\sqrt{{G}^D}\F_{D}\Ket{\tilde{x}}.
\end{equation}
  Then, we have
  \begin{align}
    \label{seq:state_2}\eqref{seq:state_1}
    &=\frac{\Braket{\varphi\left(v_{G},\cdot\right) | \hat{\q}^{\left(\rho\right)}{\left({\hat{\SigmaOp}}+\epsilon\mathbbm{1}\right)}^{-1} | \varphi\left(v_{G},\cdot\right)}}{\sum_{v_{G}^\prime\in\mathcal{V}_{G}}\frac{Q^{(\tau)}\left(v_{G}^\prime\right)}{{G}^D}\Braket{\varphi\left(v_{G}^\prime,\cdot\right) | \hat{\q}^{\left(\rho\right)}{\left({\hat{\SigmaOp}}+\epsilon\mathbbm{1}\right)}^{-1} | \varphi\left(v_{G}^\prime,\cdot\right)}}\frac{Q^{(\tau)}\left(v_{G}\right)}{{G}^D} \nonumber\\
    &=\frac{\Braket{\tilde{x}|\F_{D}^\dag \hat{\q}^{\left(\rho\right)}{\left({\hat{\SigmaOp}}+\epsilon\mathbbm{1}\right)}^{-1} \F_{D} | \tilde{x}}}{\sum_{\tilde{x}^\prime\in\tilde{\mathcal{X}}}Q^{(\tau)}\left(\frac{\tilde{x}^\prime}{G}\right)\Braket{\tilde{x}^\prime|\F_{D}^\dag \hat{\q}^{\left(\rho\right)}{\left({\hat{\SigmaOp}}+\epsilon\mathbbm{1}\right)}^{-1} \F_{D} | \tilde{x}^\prime}}Q^{(\tau)}\left(\frac{\tilde{x}}{G}\right).
  \end{align}
  Then, using~\eqref{seq:q_tau_g}, we obtain
  \begin{equation}\eqref{seq:state_2}
      =\frac{\Braket{\tilde{x}|\sqrt{\Q^{(\tau)}}\F_{D}^\dag \hat{\q}^{\left(\rho\right)}{\left({\hat{\SigmaOp}}+\epsilon\mathbbm{1}\right)}^{-1} \F_{D} \sqrt{\Q^{(\tau)}}| \tilde{x}}}{\sum_{\tilde{x}^\prime\in\tilde{\mathcal{X}}}\Braket{\tilde{x}^\prime|\sqrt{\Q^{(\tau)}}\F_{D}^\dag \hat{\q}^{\left(\rho\right)}{\left({\hat{\SigmaOp}}+\epsilon\mathbbm{1}\right)}^{-1} \F_{D} \sqrt{\Q^{(\tau)}}| \tilde{x}^\prime}}.
  \end{equation}
  Therefore, it holds that
  \begin{align}
    \label{seq:hat_p_lambda}
    Q_{\epsilon}^\ast\left(\frac{\tilde{x}}{G}\right)P^{\left(\tau\right)}\left(\frac{\tilde{x}}{G}\right)
    &\propto{\Braket{\tilde{x}|\sqrt{\Q^{(\tau)}}\F_{D}^\dag \hat{\q}^{\left(\rho\right)}{\left({\hat{\SigmaOp}}+\epsilon\mathbbm{1}\right)}^{-1} \F_{D} \sqrt{\Q^{(\tau)}}| \tilde{x}}}\nonumber\\
    &={\Braket{\tilde{x}|\sqrt{\frac{1}{Q_{\max}^{(\tau)}}\Q^{(\tau)}}\F_{D}^\dag \hat{\q}^{\left(\rho\right)}{\left({\frac{1}{Q_{\max}^{(\tau)}}\hat{\SigmaOp}}+\frac{\epsilon}{Q_{\max}^{(\tau)}}\mathbbm{1}\right)}^{-1} \F_{D} \sqrt{\frac{1}{Q_{\max}^{(\tau)}}\Q^{(\tau)}}| \tilde{x}}}.
  \end{align}

  To simplify the form of~\eqref{seq:hat_p_lambda}, define a positive semidefinite operator on the support of $\hat{\q}^{\left(\rho\right)}$
\begin{align}
  \hat{\SigmaOp}_\epsilon^{\left(\rho\right)}&\coloneqq \sqrt{\hat{\q}^{\left(\rho\right)}}\left({\frac{1}{Q_{\max}^{(\tau)}}\hat{\SigmaOp}}+\frac{\epsilon}{Q_{\max}^{(\tau)}}\mathbbm{1}\right){\left(\hat{\q}^{\left(\rho\right)}\right)}^{-\frac{1}{2}}\nonumber\\
                                    &=\frac{1}{Q_{\max}^{(\tau)}}\sqrt{\hat{\q}^{\left(\rho\right)}}\kernel\sqrt{\hat{\q}^{\left(\rho\right)}}+\frac{\epsilon}{Q_{\max}^{(\tau)}}\Pi^{\left(\rho\right)},
\end{align}
where we use $\hat{\SigmaOp}=\kernel\hat{\q}^{(\rho)}$, and $\Pi^{\left(\rho\right)}$ is a projector onto the support of $\hat{\q}^{\left(\rho\right)}$.
In case $\hat{\q}^{\left(\rho\right)}$ does not have full rank, ${\left(\hat{\q}^{\left(\rho\right)}\right)}^{-\frac{1}{2}}$ denotes $\sqrt{{\left(\hat{\q}^{\left(\rho\right)}\right)}^{-1}}$, where ${\left(\hat{\q}^{\left(\rho\right)}\right)}^{-1}$ in this case is the Moore-Penrose pseudoinverse of $\hat{\q}^{\left(\rho\right)}$.
We have by definition
\begin{equation}
  {\left(\hat{\SigmaOp}_\epsilon^{\left(\rho\right)}\right)}^{-1}= \sqrt{\hat{\q}^{\left(\rho\right)}}{\left({\frac{1}{Q_{\max}^{(\tau)}}\hat{\SigmaOp}}+\frac{\epsilon}{Q_{\max}^{(\tau)}}\mathbbm{1}\right)}^{-1}{\left(\hat{\q}^{\left(\rho\right)}\right)}^{-\frac{1}{2}}.
\end{equation}
Correspondingly, in the same way as the main text, we let $\hat{\SigmaOp}_\epsilon$ denote a positive definite operator that has the full support on $\mathcal{H}^X$, and coincides with $\hat{\SigmaOp}_\epsilon^{\left(\rho\right)}$ if projected on the support of $\hat{\q}^{\left(\rho\right)}$
\begin{equation}
  \label{seq:hat_Sigma_lambda}
  \hat{\SigmaOp}_\epsilon\coloneqq\frac{1}{Q_{\max}^{(\tau)}}\sqrt{\hat{\q}^{\left(\rho\right)}}\kernel\sqrt{\hat{\q}^{\left(\rho\right)}}+\frac{\epsilon}{Q_{\max}^{(\tau)}}\mathbbm{1}.
\end{equation}
Note that since $\hat{\q}^{\left(\rho\right)}$ is diagonal and $\kernel$ is symmetric,
we have
\begin{equation}
  \label{seq:symmetric_Sigma}
  \hat{\SigmaOp}_\epsilon=\hat{\SigmaOp}_\epsilon^{\mathrm{T}},
\end{equation}
where the right-hand side represents the transpose with respect to the computational basis.
Then, we can rewrite the last line of~\eqref{seq:hat_p_lambda} as
\begin{align}
  \label{seq:p_lambda_ast_simplified}
  Q_{\epsilon}^\ast\left(\frac{\tilde{x}}{G}\right)P^{\left(\tau\right)}\left(\frac{\tilde{x}}{G}\right)
  &\propto{\Braket{\tilde{x}|\sqrt{\frac{1}{Q_{\max}^{(\tau)}}\Q^{(\tau)}}\F_{D}^\dag \hat{\q}^{\left(\rho\right)}{\left({\frac{1}{Q_{\max}^{(\tau)}}\hat{\SigmaOp}}+\frac{\epsilon}{Q_{\max}^{(\tau)}}\mathbbm{1}\right)}^{-1} \F_{D} \sqrt{\frac{1}{Q_{\max}^{(\tau)}}\Q^{(\tau)}}| \tilde{x}}}\nonumber\\
  &={\Braket{\tilde{x}|\sqrt{\frac{1}{Q_{\max}^{(\tau)}}\Q^{(\tau)}}\F_{D}^\dag \sqrt{\hat{\q}^{\left(\rho\right)}}{\left(\hat{\SigmaOp}_\epsilon^{\left(\rho\right)}\right)}^{-1} \sqrt{\hat{\q}^{\left(\rho\right)}} \F_{D} \sqrt{\frac{1}{Q_{\max}^{(\tau)}}\Q^{(\tau)}}| \tilde{x}}}\nonumber\\
  &={\Braket{\tilde{x}|\sqrt{\frac{1}{Q_{\max}^{(\tau)}}\Q^{(\tau)}}\F_{D}^\dag \sqrt{\hat{\q}^{\left(\rho\right)}}\hat{\SigmaOp}_\epsilon^{-1} \sqrt{\hat{\q}^{\left(\rho\right)}} \F_{D} \sqrt{\frac{1}{Q_{\max}^{(\tau)}}\Q^{(\tau)}}| \tilde{x}}},
\end{align}
where this probability distribution is normalized by $\sum_{\tilde{x}\in\tilde{\mathcal{X}}}Q_{\epsilon}^\ast\left(\frac{\tilde{x}}{G}\right)P^{\left(\tau\right)}\left(\frac{\tilde{x}}{G}\right)=1$.

To prove the proposition, we analyze the probability distribution obtained from the measurement of the quantum state
\begin{equation}
  \label{seq:state}
  \Ket{\Psi}^{XX^\prime}\propto\sum_{\tilde{x}\in\tilde{\mathcal{X}}}\hat{\SigmaOp}_\epsilon^{-\frac{1}{2}}\Ket{\tilde{x}}^X\otimes \sqrt{\frac{1}{Q_{\max}^{(\tau)}}\Q^{(\tau)}}\F_{D}^\dag\sqrt{\hat{\q}^{\left(\rho\right)}}\Ket{\tilde{x}}^{X^\prime}\in\mathcal{H}^X\otimes\mathcal{H}^{X^\prime}.
\end{equation}
where $\sqrt{\hat{\q}^{\left(\rho\right)}}\Ket{\tilde{x}}^{X^\prime}=\sqrt{\hat{q}^{\left(\rho\right)}\left(\tilde{x}\right)}\Ket{\tilde{x}}^{X^\prime}$.
For any operators $\A$ on $\mathcal{H}^X$ and $\B$ on $\mathcal{H}^{X^\prime}$ where the dimensions of these Hilbert spaces are the same
\begin{equation}
  \dim\mathcal{H}^X=\dim\mathcal{H}^{X^\prime},
\end{equation}
a straightforward linear algebraic calculation shows~\cite{N4}
\begin{equation}
  \sum_{\tilde{x}\in\tilde{\mathcal{X}}}\A\Ket{\tilde{x}}^X\otimes \B\Ket{\tilde{x}}^{X^\prime}=\sum_{\tilde{x}\in\tilde{\mathcal{X}}}\Ket{\tilde{x}}^X\otimes \B\A^\mathrm{T}\Ket{\tilde{x}}^{X^\prime}.
\end{equation}
Applying this equality to~\eqref{seq:state}, we have
\begin{align}
  \Ket{\Psi}^{XX^\prime}&\propto\sum_{\tilde{x}\in\tilde{\mathcal{X}}}\hat{\SigmaOp}_\epsilon^{-\frac{1}{2}}\Ket{\tilde{x}}^X\otimes \sqrt{\frac{1}{Q_{\max}^{(\tau)}}\Q^{(\tau)}}\F_{D}^\dag\sqrt{\hat{\q}^{\left(\rho\right)}}\Ket{\tilde{x}}^{X^\prime}\nonumber\\
  &=\sum_{\tilde{x}\in\tilde{\mathcal{X}}}\Ket{\tilde{x}}^X\otimes \sqrt{\frac{1}{Q_{\max}^{(\tau)}}\Q^{(\tau)}}\F_{D}^\dag\sqrt{\hat{\q}^{\left(\rho\right)}}{\left(\hat{\SigmaOp}_\epsilon^\mathrm{T}\right)}^{-\frac{1}{2}}\Ket{\tilde{x}}^{X^\prime}\nonumber\\
  &=\sum_{\tilde{x}\in\tilde{\mathcal{X}}}\Ket{\tilde{x}}^X\otimes \sqrt{\frac{1}{Q_{\max}^{(\tau)}}\Q^{(\tau)}}\F_{D}^\dag\sqrt{\hat{\q}^{\left(\rho\right)}}\hat{\SigmaOp}_\epsilon^{-\frac{1}{2}}\Ket{\tilde{x}}^{X^\prime},
\end{align}
where the last line follows from~\eqref{seq:symmetric_Sigma}.
If we performed a measurement of $\Ket{\Psi}^{XX^\prime}$ in the computational basis $\left\{\Ket{\tilde{x}}^X\otimes\Ket{\tilde{x}^\prime}^{X^\prime}\right\}$, the probability distribution of measurement outcomes, i.e., the square of the amplitude as summarized in Sec.~\ref{sec:quantum_suplementary}, would be
\begin{align}
  p(\tilde{x},\tilde{x}^\prime)&=\left|\left(\Bra{\tilde{x}}^X\otimes\Bra{\tilde{x}^\prime}^{X^\prime}\right)\Ket{\Psi}^{XX^\prime}\right|^2\nonumber\\
               &\propto\left|\Braket{\tilde{x}^\prime|\sqrt{\frac{1}{Q_{\max}^{(\tau)}}\Q^{(\tau)}}\F_{D}^\dag\sqrt{\hat{\q}^{\left(\rho\right)}}\hat{\SigmaOp}_\epsilon^{-\frac{1}{2}}|{\tilde{x}}}\right|^2\nonumber\\
               &=\Braket{\tilde{x}^\prime|\sqrt{\frac{1}{Q_{\max}^{(\tau)}}\Q^{(\tau)}}\F_{D}^\dag\sqrt{\hat{\q}^{\left(\rho\right)}}\hat{\SigmaOp}_\epsilon^{-\frac{1}{2}}|\tilde{x}}\Braket{\tilde{x}|\hat{\SigmaOp}_\epsilon^{-\frac{1}{2}}\sqrt{\hat{\q}^{\left(\rho\right)}}\F_{D}\sqrt{\frac{1}{Q_{\max}^{(\tau)}}\Q^{(\tau)}}|\tilde{x}^\prime}.
\end{align}
Since a measurement of the register $X^\prime$ yields an outcome $\tilde{x}^\prime$ with probability $p\left(\tilde{x}^\prime\right)=\sum_{\tilde{x}\in\tilde{\mathcal{X}}}p\left(\tilde{x},\tilde{x}^\prime\right)$ as summarized in Sec.~\ref{sec:quantum_suplementary},
we obtain
\begin{align}
  \label{seq:p_tilde_x}
  p\left(\tilde{x}^\prime\right)&\propto\sum_{\tilde{x}\in\tilde{\mathcal{X}}}\Braket{\tilde{x}^\prime|\sqrt{\frac{1}{Q_{\max}^{(\tau)}}\Q^{(\tau)}}\F_{D}^\dag\sqrt{\hat{\q}^{\left(\rho\right)}}\hat{\SigmaOp}_\epsilon^{-\frac{1}{2}}|\tilde{x}}\Braket{\tilde{x}|\hat{\SigmaOp}_\epsilon^{-\frac{1}{2}}\sqrt{\hat{\q}^{\left(\rho\right)}}\F_{D}\sqrt{\frac{1}{Q_{\max}^{(\tau)}}\Q^{(\tau)}}|\tilde{x}^\prime}\nonumber\\
                                &=\Braket{\tilde{x}^\prime|\sqrt{\frac{1}{Q_{\max}^{(\tau)}}\Q^{(\tau)}}\F_{D}^\dag\sqrt{\hat{\q}^{\left(\rho\right)}}\hat{\SigmaOp}_\epsilon^{-\frac{1}{2}}\left(\sum_{\tilde{x}\in\tilde{\mathcal{X}}}\Ket{\tilde{x}}\Bra{\tilde{x}}\right)\hat{\SigmaOp}_\epsilon^{-\frac{1}{2}}\sqrt{\hat{\q}^{\left(\rho\right)}}\F_{D}\sqrt{\frac{1}{Q_{\max}^{(\tau)}}\Q^{(\tau)}}|\tilde{x}^\prime}\nonumber\\
                                &=\Braket{\tilde{x}^\prime|\sqrt{\frac{1}{Q_{\max}^{(\tau)}}\Q^{(\tau)}}\F_{D}^\dag\sqrt{\hat{\q}^{\left(\rho\right)}}\hat{\SigmaOp}_\epsilon^{-\frac{1}{2}}\mathbbm{1}\hat{\SigmaOp}_\epsilon^{-\frac{1}{2}}\sqrt{\hat{\q}^{\left(\rho\right)}}\F_{D}\sqrt{\frac{1}{Q_{\max}^{(\tau)}}\Q^{(\tau)}}|\tilde{x}^\prime}\nonumber\\
                                &=\Braket{\tilde{x}^\prime|\sqrt{\frac{1}{Q_{\max}^{(\tau)}}\Q^{(\tau)}}\F_{D}^\dag\sqrt{\hat{\q}^{\left(\rho\right)}}\hat{\SigmaOp}_\epsilon^{-1}\sqrt{\hat{\q}^{\left(\rho\right)}}\F_{D}\sqrt{\frac{1}{Q_{\max}^{(\tau)}}\Q^{(\tau)}}|\tilde{x}^\prime}.
\end{align}
Recall that the normalization $\left\|\Ket{\Psi}^{XX^\prime}\right\|_2=1$ of the quantum state yields $\sum_{\tilde{x}^\prime\in\tilde{\mathcal{X}}}p\left(\tilde{x}^\prime\right)=1$.
Therefore,~\eqref{seq:p_lambda_ast_simplified} and~\eqref{seq:p_tilde_x} yield
\begin{equation}
  p\left(\tilde{x}\right)=
  Q_{\epsilon}^\ast\left(\frac{\tilde{x}}{G}\right)P^{\left(\tau\right)}\left(\frac{\tilde{x}}{G}\right),
\end{equation}
which shows the conclusion.
\end{proof}

\section{\label{sec:3}Quantum algorithm for sampling an optimized random feature}

In this section, we show our quantum algorithm for sampling an optimized random feature and bound its runtime.

Algorithm~\ref{salg:random_feature_revised} shows our quantum algorithm.
Note that each line of Algorithm~\ref{salg:random_feature_revised} is performed approximately with a sufficiently small precision to achieve the overall sampling precision $\Delta>0$, in the same way as classical algorithms that deal with real number using fixed- or floating-point number representation with a sufficiently small precision.
In Algorithm~\ref{salg:random_feature_revised},
we represent computation of the function $Q^{(\tau)}$ as a quantum oracle $\mathcal{O}_\tau$.
This oracle $\mathcal{O}_\tau$ computes $Q^{(\tau)}$ while maintaining the superpositions (i.e., linear combinations) in a given quantum state, that is,
\begin{equation}
  \label{seq:oracle_tau_definition}
  \mathcal{O}_\tau\Big(\sum_{v}\alpha_v\Ket{v}\otimes\Ket{0}\Big)=\sum_{v}\alpha_v\ket{v}\otimes\ket{Q^{(\tau)}\left(v\right)},
\end{equation}
where $\alpha_v\in\mathbb{C}$ can be any coefficient of the given state, and $\ket{v}$ and $\ket{Q^{(\tau)}\left(v\right)}$ are computational-basis states corresponding to bit strings representing $v\in\mathcal{V}$ and $Q^{(\tau)}\left(v\right)\in\mathbb{R}$ in the fixed-point number representation with sufficient precision.
We use $\mathcal{O}_\tau$ for simplicity of the presentation, and unlike $\mathcal{O}_\rho$ given by~\eqref{seq:oracle_rho_definition}, $\mathcal{O}_\tau$ is not a black box in our quantum algorithm since we can implement $\mathcal{O}_\tau$ explicitly by a quantum circuit under the assumption in the main text, without using QRAM discussed in Sec.~\ref{sec:oracles}.
Using the assumption that classical computation can evaluate the function $Q^{(\tau)}$ efficiently in a short time denoted by
\begin{equation}
  \label{seq:requirement_T_tau}
T_\tau=O(\poly(D)),
\end{equation}
we can efficiently implement $\mathcal{O}_\tau$ in runtime $O(T_\tau)$;
in particular, if we can compute $Q^{(\tau)}$ by numerical libraries using arithmetics in runtime $T_\tau$,
then a quantum computer can also perform the same arithmetics to implement $\mathcal{O}_\tau$ in runtime $O(T_\tau)$~\cite{haner2018}.
Note that even if the numerical libraries evaluated $Q^{(\tau)}$ by means of a lookup table stored in RAM, quantum computers could instead use the QRAM\@.
In the following, for simplicity of the presentation, the runtime of $\mathcal{O}_\tau$ per query may also be denoted by $T_\tau$ (with abuse of notation) as our runtime analysis ignores constant factors.
Note that in the case of the Gaussian kernel and the Laplacian kernel, $Q^{(\tau)}$ is given in terms of special functions as shown in the main text,
and we have
\begin{equation}
  T_\tau=O(D),
\end{equation}
which satisfies~\eqref{seq:requirement_T_tau}.

In the rest of this section, we prove the following theorem that bounds the runtime of Algorithm~\ref{salg:random_feature_revised}.

\begin{theorem}[\label{sthm:sampling}Runtime of our quantum algorithm for sampling an optimized random feature]
  Given $D$-dimensional data discretized by $G>0$,
  for any learning accuracy ${\epsilon}>0$
  and any sampling precision $\Delta>0$,
  the runtime $T_1$ of Algorithm~\ref{salg:random_feature_revised} for sampling each optimized random feature
   $v_{{G}}\in\mathcal{V}_{{G}}$ from a distribution $Q(v_G)P^{(\tau)}(v_G)$ close to the optimized distribution $Q_\epsilon^\ast(v_G)P^{(\tau)}(v_G)$ with precision
  $\sum_{v_{{G}}\in\mathcal{V}_{{G}}}{|Q(v_G)P^{(\tau)}(v_G)-Q_\epsilon^\ast(v_G)P^{(\tau)}(v_G)|}\leqq\Delta$ is
  \begin{align*}
    T_1=O\left(D\log\left({G}\right)\log\log\left({G}\right)+T_\rho+T_\tau\right)\times\widetilde{O}\left(\frac{Q_{\max}^{\left(\tau\right)}}{\epsilon}\polylog\left(\frac{1}{\Delta}\right)\right),
  \end{align*}
  where $T_\rho$ and $T_\tau$ are the runtime of the quantum oracles $\mathcal{O}_\rho$ and $\mathcal{O}_\tau$ per query, and $Q_{\max}^{(\tau)}$, $\mathcal{O}_\rho$ and $\mathcal{O}_\tau$ are defined as~\eqref{seq:q_tau_g_max},~\eqref{seq:oracle_rho_definition}, and~\eqref{seq:oracle_tau_definition}, respectively.
\end{theorem}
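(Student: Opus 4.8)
The plan is to separate \emph{correctness} from \emph{runtime}. For correctness I would invoke Proposition~\ref{sprp:state}: once the state $\Ket{\Psi}^{XX^\prime}$ of~\eqref{seq:state_sampling} is prepared and the register $X^\prime$ is measured in the computational basis, the outcome $\tilde{x}$ is distributed \emph{exactly} as $Q_\epsilon^\ast(\tilde{x}/G)P^{(\tau)}(\tilde{x}/G)$. Since a perturbation of a state changes the induced measurement distribution by at most a comparable amount in total-variation distance, the whole task collapses to preparing $\Ket{\Psi}^{XX^\prime}$ to $\ell_2$-error $O(\Delta)$. I would then fix the error budget of every subroutine (block-encoding precision, QFT precision, and the QSVT polynomial-approximation precision) to an inverse polynomial in $\Delta$, so that each contributes only a $\polylog(1/\Delta)$ factor to the gate count; this is the origin of the $\polylog(1/\Delta)$ factor in $T_1$.

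The heart of the runtime argument is implementing a block encoding of $\hat{\SigmaOp}_\epsilon$ from~\eqref{seq:hat_Sigma_lambda} \emph{without} sparsity or low-rank structure. First I would invoke the perfect reconstruction of Proposition~\ref{sprp:perfect_reconstruction} to write $\kernel=\F_D^\dag\Q^{(\tau)}\F_D$, so that
\begin{equation*}
  \hat{\SigmaOp}_\epsilon=\frac{1}{Q_{\max}^{(\tau)}}\sqrt{\hat{\q}^{(\rho)}}\,\F_D^\dag\,\Q^{(\tau)}\,\F_D\,\sqrt{\hat{\q}^{(\rho)}}+\frac{\epsilon}{Q_{\max}^{(\tau)}}\mathbbm{1}.
\end{equation*}
Each factor here is either diagonal or unitary: the diagonal $\sqrt{\hat{\q}^{(\rho)}}$ and $\Q^{(\tau)}$ admit efficient block encodings built from the oracles $\mathcal{O}_\rho$ and $\mathcal{O}_\tau$ (at cost $O(T_\rho)$ and $O(T_\tau)$ together with reversible arithmetic), while $\F_D,\F_D^\dag$ are realized by the quantum Fourier transform in time $O(D\log(G)\log\log(G))$. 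I would then combine these with the standard product and linear-combination rules for block encodings from the QSVT framework~\cite{G1}, handling the additive $\tfrac{\epsilon}{Q_{\max}^{(\tau)}}\mathbbm{1}$ term by a linear combination of unitaries and, for bookkeeping of the sub-normalization constants, packaging the composite as the block encoding of a POVM operator as in the main text. The cost of one use of this composite block encoding is $O(D\log(G)\log\log(G)+T_\rho+T_\tau)$.

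Next I would apply QSVT to convert the block encoding of $\hat{\SigmaOp}_\epsilon$ into one of $\hat{\SigmaOp}_\epsilon^{-\frac{1}{2}}$. The key quantitative input is the spectrum of $\hat{\SigmaOp}_\epsilon$: the additive $\tfrac{\epsilon}{Q_{\max}^{(\tau)}}\mathbbm{1}$ term forces every eigenvalue to be at least $\epsilon/Q_{\max}^{(\tau)}$, while the normalization by $1/Q_{\max}^{(\tau)}$ bounds the top eigenvalue, since the eigenvalues of $\kernel$ are exactly the values $Q^{(\tau)}(v_G)$ by Proposition~\ref{sprp:perfect_reconstruction} and $\|\hat{\q}^{(\rho)}\|\leqq 1$, giving $\|\hat{\SigmaOp}_\epsilon\|\leqq 1+\epsilon/Q_{\max}^{(\tau)}=O(1)$. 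Hence the condition number is $O(Q_{\max}^{(\tau)}/\epsilon)$, which stays polynomial under the assumption $Q_{\max}^{(\tau)}=O(\poly(D))$. A polynomial of degree $\widetilde{O}\!\big((Q_{\max}^{(\tau)}/\epsilon)\polylog(1/\Delta)\big)$ approximates $x^{-1/2}$ on the relevant spectral interval, and QSVT uses the block encoding of $\hat{\SigmaOp}_\epsilon$ that many times~\cite{G1}; this is exactly the factor $\widetilde{O}((Q_{\max}^{(\tau)}/\epsilon)\polylog(1/\Delta))$ in the theorem. Finally I would assemble $\Ket{\Psi}^{XX^\prime}$ by preparing $\sum_{\tilde{x}}\Ket{\tilde{x}}^X\otimes\Ket{\tilde{x}}^{X^\prime}$ with Hadamards and CNOTs, applying the block encoding of $\hat{\SigmaOp}_\epsilon^{-\frac{1}{2}}$ on $X$ (with amplitude amplification onto the success subspace), and applying $\sqrt{\Q^{(\tau)}/Q_{\max}^{(\tau)}}\,\F_D^\dag\sqrt{\hat{\q}^{(\rho)}}$ on $X^\prime$ from the same block-encoding and QFT subroutines; multiplying the per-use cost by the number of QSVT repetitions yields the stated $T_1$.

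The main obstacle I anticipate is the block-encoding construction of $\hat{\SigmaOp}_\epsilon$ rather than the QSVT step, which is essentially off-the-shelf once the block encoding is in hand. Composing products and linear combinations multiplies and adds the sub-normalization factors, and an unfavorable factor would inflate both the effective condition number in QSVT and the amplitude-amplification overhead in the final state preparation; the delicate part is to verify that, under the assumptions $\tilde{k}(0,0)=\Omega(1)$ and $Q_{\max}^{(\tau)}=O(\poly(D))$, these constants stay bounded so that the only genuinely $\epsilon$-dependent blow-up is the $Q_{\max}^{(\tau)}/\epsilon$ condition-number factor. A secondary but routine obstacle is propagating the various precisions consistently so that their sum stays below $\Delta$ while keeping every overhead at the $\polylog(1/\Delta)$ level.
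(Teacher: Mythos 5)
Your overall strategy---reducing correctness to Proposition~\ref{sprp:state}, building a block encoding of $\hat{\SigmaOp}_\epsilon$ by composing block encodings of the diagonal operators and the QFTs with a linear combination handling the $\tfrac{\epsilon}{Q_{\max}^{(\tau)}}\mathbbm{1}$ term, and inverting via QSVT with condition number $O(Q_{\max}^{(\tau)}/\epsilon)$---is essentially the paper's proof; the paper even states that its POVM-circuit packaging (Lemmas~\ref{sprp:block_encoding_q_tau} and~\ref{sprp:block_encoding_sigma_lambda}) is just a presentational alternative to exactly this modular composition, and your spectral bounds and QSVT degree count match the proof of Theorem~\ref{sthm:sampling}.

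However, your final assembly of $\Ket{\Psi}^{XX^\prime}$ contains a step that would fail. You propose to prepare the uniform maximally entangled state $\propto\sum_{\tilde{x}}\Ket{\tilde{x}}^X\otimes\Ket{\tilde{x}}^{X^\prime}$ with Hadamards and CNOTs and then apply $\sqrt{\Q^{(\tau)}/Q_{\max}^{(\tau)}}\,\F_D^\dag\sqrt{\hat{\q}^{(\rho)}}$ on $X^\prime$ through block encodings. A block encoding of the sub-unitary diagonal operator $\sqrt{\hat{\q}^{(\rho)}}$ is a post-selected operation, and on this particular input the success probability of the whole chain is
\begin{equation*}
  \frac{1}{G^D}\tr\left[\sqrt{\hat{\q}^{(\rho)}}\,\F_D\,\frac{\Q^{(\tau)}}{Q_{\max}^{(\tau)}}\,\F_D^\dag\sqrt{\hat{\q}^{(\rho)}}\right]
  =\frac{\tr\hat{\SigmaOp}}{G^D\,Q_{\max}^{(\tau)}}
  =\frac{\tilde{k}(0,0)}{G^D\,Q_{\max}^{(\tau)}},
\end{equation*}
which is exponentially small in $D$: the empirical weights satisfy $\sum_{\tilde{x}}\hat{q}^{(\rho)}(\tilde{x})=1$ while the uniform state spreads its amplitude over all $G^D$ grid points. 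Even with amplitude amplification this costs $\Omega(G^{D/2})$ repetitions and destroys the claimed $O(D\log G)$ scaling. This is precisely why Algorithm~\ref{salg:random_feature_revised} begins by querying the state-preparation oracle $\mathcal{O}_\rho$ on $\Ket{0}^{X^\prime}$, which outputs the \emph{normalized} state $\sum_{\tilde{x}}\sqrt{\hat{q}^{(\rho)}(\tilde{x})}\Ket{\tilde{x}}^{X^\prime}$ with unit probability, and then copies into $X$ with transversal CNOTs: the data weighting must enter through the oracle acting on $\Ket{0}$, never through a post-selected diagonal operator acting on a uniform superposition. In the paper's route the only post-selected factor on $X^\prime$ is $\sqrt{\Q^{(\tau)}/Q_{\max}^{(\tau)}}$, whose success probability $\tr\hat{\SigmaOp}/Q_{\max}^{(\tau)}=\Omega(1/Q_{\max}^{(\tau)})$ is amplified in $O(\sqrt{Q_{\max}^{(\tau)}})$ rounds and absorbed into the $\widetilde{O}(Q_{\max}^{(\tau)}/\epsilon)$ factor. (By contrast, your use of a block encoding of $\sqrt{\hat{\q}^{(\rho)}}$ built from one query to $\mathcal{O}_\rho^\dag$ \emph{inside} the block encoding of $\hat{\SigmaOp}_\epsilon$ fed to QSVT is sound, since there only the sub-normalization constant matters, not the success probability on a fixed input state.)
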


\begin{algorithm}[H]
  \caption{\label{salg:random_feature_revised}Quantum algorithm for sampling an optimized random feature (quOptRF).}
  \begin{algorithmic}[1]
    \REQUIRE{A desired accuracy ${\epsilon}>0$ in the supervised learning, sampling precision $\Delta>0$, quantum oracles $\mathcal{O}_\rho$ in~\eqref{seq:oracle_rho_definition} and $\mathcal{O}_\tau$ in~\eqref{seq:oracle_tau_definition}, and $Q_{\max}^{\left(\tau\right)}>0$ in~\eqref{seq:q_tau_g_max}.}
    \ENSURE{An optimized random feature $v_{{G}}\in\mathcal{V}_{{G}}$ sampled from a probability distribution $Q\left(v_{{G}}\right)P^{\left(\tau\right)}\left(v_G\right)$ with ${\sum_{v_{{G}}\in\mathcal{V}_G}|Q\left(v_{{G}}\right)P^{\left(\tau\right)}\left(v_G\right)-Q_\epsilon^\ast\left(v_{{G}}\right)P^{\left(\tau\right)}\left(v_G\right)|}\leqq\Delta$.}
    \STATE{Initialize quantum registers $X$ and $X'$, load data onto  $X'$ by $\mathcal{O}_\rho$, and perform \textsc{CNOT} gates on $X$ and $X'$
      \begin{equation}
        \Ket{0}^X\otimes\Ket{0}^{X^\prime}\xrightarrow{\mathcal{O}_\rho}\sum_{\tilde{x}\in\tilde{\mathcal{X}}}\Ket{0}^X\otimes\sqrt{\hat{\q}^{\left(\rho\right)}}\Ket{\tilde{x}}^{X^\prime}\xrightarrow{\textsc{CNOT}}\sum_{\tilde{x}\in\tilde{\mathcal{X}}}\Ket{\tilde{x}}^X\otimes\sqrt{\hat{\q}^{\left(\rho\right)}}\Ket{\tilde{x}}^{X^\prime}.
      \end{equation}
    }
    \STATE{Perform $\F_D^\dag$ on $X^\prime$ by QFT~\cite{C5} to obtain
      \begin{equation}
      \sum_{\tilde{x}\in\tilde{\mathcal{X}}}\Ket{\tilde{x}}^X\otimes \F_D^\dag\sqrt{\hat{\q}^{\left(\rho\right)}}\Ket{\tilde{x}}^{X^\prime}.
      \end{equation}
    }
    \STATE{Apply the block encoding of $\sqrt{\frac{1}{Q_{\max}^{\left(\tau\right)}}\Q^{\left(\tau\right)}}$ (Lemma~\ref{sprp:block_encoding_q_tau}) to $X^\prime$ followed by amplitude amplification to obtain a state proportional to 
      \begin{equation}
        \sum_{\tilde{x}\in\tilde{\mathcal{X}}}\Ket{\tilde{x}}^X\otimes \sqrt{\frac{1}{Q_{\max}^{\left(\tau\right)}}\Q^{\left(\tau\right)}}\F_D^\dag\sqrt{\hat{\q}^{\left(\rho\right)}}\Ket{\tilde{x}}^{X^\prime}.
      \end{equation}
    }
    \STATE{Apply the block encoding of $\hat{\SigmaOp}_{\epsilon}^{-\frac{1}{2}}$ (Lemma~\ref{sprp:block_encoding_sigma_lambda}) to $X$ to obtain the quantum state $\Ket{\Psi}^{XX^\prime}$ in Proposition~\ref{sprp:state}.
    }
    \COMMENT{This step requires our technical contribution since \textit{no assumption on sparsity and low rank} is imposed on $\hat{\SigmaOp}_{\epsilon}$.}
    \STATE{Perform a measurement of $X'$ in the computational basis to obtain $\tilde{x}$ with probability $Q_\epsilon^\ast\left(\frac{\tilde{x}}{G}\right)P^{\left(\tau\right)}\left(\frac{\tilde{x}}{G}\right)$.}
    \STATE{\textbf{Return }$v_{{G}}=\frac{\tilde{x}}{{G}}$.}
\end{algorithmic}
\end{algorithm}

To prove Theorem~\ref{sthm:sampling}, in the following, we construct efficient implementations of block encodings; in particular, we first show a block encoding of $\sqrt{\frac{1}{Q_{\max}^{(\tau)}}\Q^{(\tau)}}$,
and then using this block encoding, we show that of $\hat{\SigmaOp}_\epsilon$.
Then, we will provide the runtime analysis of Algorithm~\ref{salg:random_feature_revised} using these block encodings.
In Algorithm~\ref{salg:random_feature_revised}, we combine these block encodings with two fundamental subroutines of quantum algorithms, namely, quantum Fourier transform (QFT)~\cite{C5,H2} and quantum singular value transformation (QSVT)~\cite{G1}.
Using QFT, we can implement the unitary operator $\F$ defined as~\eqref{seq:F} with precision $\Delta$ by a quantum circuit composed of $O\left(\log\left({G}\right)\log\left(\frac{\log G}{\Delta}\right)\right)$ gates~\cite{C5}.
Thus, we can implement $\F_{D}=F^{\otimes D}$ defined as~\eqref{seq:F_D} by a quantum circuit composed of gates of order
\begin{equation}
  \label{seq:runtime_F_D}
  O\left(D\log\left({G}\right)\log\left(\frac{\log G}{\Delta}\right)\right).
\end{equation}
Note that QFT in Ref.~\cite{C5} that we use in the following analysis has slightly better runtime than QFT in Ref.~\cite{H2} by a poly-logarithmic term in $\frac{1}{\Delta}$, but we may also use QFT in Ref.~\cite{H2} without changing any statement of our lemmas and theorems since this poly-logarithmic term is not dominant.
In our analysis, we multiply two numbers represented by $O\left(\log\left({G}\right)\right)$ bits using the algorithm shown in Ref.~\cite{harvey:hal-02070778} within time
\begin{equation}
  \label{seq:runtime_multiplication}
  O\left(\log\left({G}\right)\log\log\left({G}\right)\right),
\end{equation}
which we can perform also on quantum computer by implementing arithmetics using a quantum circuit~\cite{haner2018}.
Note that we could also use exact quantum Fourier transform~\cite{N4} or grammar-school-method multiplication instead of these algorithms in Refs.~\cite{C5,harvey:hal-02070778}, to decrease a constant factor in the runtime of our algorithm at the expense of logarithmically increasing the asymptotic scaling in terms of ${G}$ from $\log\left({G}\right)\log\log\left({G}\right)$ to ${\left(\log\left({G}\right)\right)}^2$.
In the following runtime analysis,
we use the definition~\eqref{seq:block_encoding_definition} of block encoding to clarify the dependency on precision $\Delta$,
and the quantum registers for storing real number use fixed-point number representation with sufficient precision $O(\Delta)$ to achieve the overall precision $\Delta$.

Our construction of block encodings of $\sqrt{\frac{1}{Q_{\max}^{(\tau)}}\Q^{(\tau)}}$ and $\hat{\SigmaOp}_\epsilon$ is based on a prescription of constructing a block encoding from a quantum circuit for implementing a measurement described by a positive operator-valued measure (POVM)~\cite{G1}.
In particular, for any precision $\Delta>0$ and any POVM operator $\Lam$, that is, an operator satisfying $0\leqq \Lam\leqq\mathbbm{1}$, let $\U$ be a unitary operator represented by a quantum circuit that satisfies for any state $\Ket{\psi}$
\begin{equation}
  \label{seq:povm_implementation}
  \left|\tr\left[\Ket{\psi}\Bra{\psi}\Lam\right]-\tr\left[\U\left(\Ket{0}\Bra{0}^{\otimes n}\otimes\Ket{\psi}\Bra{\psi}\right)\U^\dag\left(\Ket{0}\Bra{0}^{\otimes 1}\otimes\mathbbm{1}\right)\right]\right|\leqq \Delta,
\end{equation}
where $\Ket{0}^{\otimes n}$ is a fixed state of $n$ auxiliary qubits.
The quantum circuit $\U$ in~\eqref{seq:povm_implementation} means that $\U$ implements a quantum measurement represented by the POVM operator $\Lam$ with precision $\Delta$;
that is,
given any input state $\Ket{\psi}$ and $n$ auxiliary qubits initially prepared in $\Ket{0}^{\otimes n}$,
if we perform the circuit $\U$ to obtain a state $\U\left(\Ket{0}^{\otimes n}\otimes\Ket{\psi}\right)$ and perform a measurement of one of the qubits for the obtained state in the computational basis $\{\Ket{0},\Ket{1}\}$,
then we obtain a measurement outcome $0$ with probability
\begin{equation}
  \tr\left[\U\left(\Ket{0}\Bra{0}^{\otimes n}\otimes\Ket{\psi}\Bra{\psi}\right)\U^\dag\left(\Ket{0}\Bra{0}^{\otimes 1}\otimes\mathbbm{1}\right)\right].
\end{equation}
Then, it is known that we can construct a $\left(1,1+n,\Delta\right)$-block encoding of $\Lam$ using one $\U$, one $\U^\dag$, and one quantum logic gate (i.e., the controlled $\textsc{NOT}$ ($\textsc{CNOT}$) gate)~\cite{G1}.
The $\textsc{CNOT}$ gate is defined as a two-qubit unitary operator
\begin{equation}
  \textsc{CNOT}\coloneqq\Ket{0}\Bra{0}\otimes\mathbbm{1}+\Ket{1}\Bra{1}\otimes\sigma_x,
\end{equation}
where the first qubit is a controlled qubit, the second qubit is a target qubit, and $\sigma_x$ is a Pauli unitary operator
\begin{equation}
  \label{seq:sigma_x}
  \sigma_x\coloneqq\Ket{0}\Bra{1}+\Ket{1}\Bra{0}.
\end{equation}
The $\textsc{CNOT}$ gate acts as
\begin{equation}
  \textsc{CNOT}\left(\left(\alpha_0\Ket{0}+\alpha_1\Ket{1}\right)\otimes\Ket{0}\right)=\alpha_0\Ket{0}\otimes\Ket{0}+\alpha_1\Ket{1}\otimes\Ket{1}.
\end{equation}

For a given POVM operator $\Lam$, no general way of constructing the circuit representing $\U$ in~\eqref{seq:povm_implementation} has been shown in Ref.~\cite{G1};
in contrast, we here explicitly construct the circuit for a diagonal POVM operator $\Lam=\sqrt{\frac{1}{Q_{\max}^{(\tau)}}\Q^{(\tau)}}$ in the following lemma, using the quantum oracle $\mathcal{O}_\tau$.
Note that since a diagonal operator is sparse, a conventional way of implementing the block encoding of a sparse operator~\cite{G1} would also be applicable to construct a block encoding of $\sqrt{\frac{1}{Q_{\max}^{(\tau)}}\Q^{(\tau)}}$; however, our key contribution here is to use the circuit for the block encoding of $\sqrt{\frac{1}{Q_{\max}^{(\tau)}}\Q^{(\tau)}}$ as a building block of a more complicated block encoding, i.e., the block encoding of $\hat{\SigmaOp}_\epsilon$, which is not necessarily sparse or of low rank.

\begin{lemma}[\label{sprp:block_encoding_q_tau}Block encoding of a diagonal POVM operator]
  For any diagonal positive semidefinite operator $\Q^{(\tau)}$ defined as~\eqref{seq:q_tau_g},
  we can implement a $\left(1,O\left(D\log\left({G}\right)\polylog\left(\frac{1}{\Delta}\right)\right),\Delta\right)$-block encoding of $\sqrt{\frac{1}{Q_{\max}^{(\tau)}}\Q^{(\tau)}}$ by a quantum circuit composed of $O\left(D\log\left({G}\right)\log\log\left({G}\right)\polylog\left(\frac{1}{\Delta}\right)\right)$ gates and one query to the quantum oracle $\mathcal{O}_\tau$.
\end{lemma}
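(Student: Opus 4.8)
The plan is to reduce the lemma to the prescription of Ref.~\cite{G1}, recalled around~\eqref{seq:povm_implementation}, which turns a circuit implementing a POVM measurement into a block encoding of the corresponding POVM operator; thus the whole task becomes the explicit construction of a circuit $\U$ realizing the measurement associated with $\Lam\coloneqq\sqrt{\frac{1}{Q_{\max}^{(\tau)}}\Q^{(\tau)}}$ to precision $\Delta$. First I would verify that $\Lam$ is a valid POVM operator: by~\eqref{seq:q_tau_g} the operator $\Q^{(\tau)}$ is diagonal and positive semidefinite, and by the definition~\eqref{seq:q_tau_g_max} of $Q_{\max}^{(\tau)}$ each diagonal entry obeys $0\leqq Q^{(\tau)}(\tilde v/G)\leqq Q_{\max}^{(\tau)}$, so the eigenvalues of $\Lam$ are $\lambda_{\tilde v}=\sqrt{Q^{(\tau)}(\tilde v/G)/Q_{\max}^{(\tau)}}\in[0,1]$ and hence $0\leqq\Lam\leqq\mathbbm{1}$. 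Diagonality then gives $\tr[\Ket{\psi}\Bra{\psi}\Lam]=\sum_{\tilde v}|\Braket{\tilde v|\psi}|^2\lambda_{\tilde v}$, which is exactly the quantity the circuit $\U$ must reproduce as its probability of returning outcome $0$.

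I would then build $\U$ as a sequence of reversible stages on the $D\lceil\log_2 G\rceil$-qubit data register together with ancillas: (i) one call to $\mathcal{O}_\tau$ from~\eqref{seq:oracle_tau_definition}, mapping $\sum_{\tilde v}\psi_{\tilde v}\Ket{\tilde v}\Ket{0}$ to $\sum_{\tilde v}\psi_{\tilde v}\Ket{\tilde v}\Ket{Q^{(\tau)}(\tilde v/G)}$; (ii) reversible fixed-point arithmetic computing a rotation angle $\theta_{\tilde v}$ with $\cos\theta_{\tilde v}=\sqrt{\lambda_{\tilde v}}$ from $Q^{(\tau)}(\tilde v/G)$ and the supplied constant $Q_{\max}^{(\tau)}$, i.e.\ a division, square roots and an inverse cosine; (iii) a controlled $R_y(2\theta_{\tilde v})$ rotation on one fresh qubit prepared in $\Ket{0}$, giving amplitude $\sqrt{\lambda_{\tilde v}}$ on $\Ket{0}$ so that the probability of later measuring that qubit in $\Ket{0}$ equals $\sum_{\tilde v}|\psi_{\tilde v}|^2\lambda_{\tilde v}$, matching~\eqref{seq:povm_implementation}; and (iv) uncomputation of stages (ii) and (i), returning every ancilla except the rotation qubit to $\Ket{0}$ and disentangling it from the signal, so that the designated-qubit statistics are exactly those of $\Lam$. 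Feeding this $\U$ into the construction of Ref.~\cite{G1} produces the claimed $(1,1+n,\Delta)$-block encoding of $\Lam$ from one $\U$, one $\U^\dag$, and a single $\textsc{CNOT}$.

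Finally I would count resources and propagate errors. The value, angle, and scratch registers, together with the workspace of $\mathcal{O}_\tau$, use fixed-point precision $O(\log(1/\Delta))$ and amount to $n=O(D\log(G)\polylog(1/\Delta))$ ancillas, matching the stated block-encoding parameters. The gate cost is dominated by the arithmetic of stage (ii) acting on the $O(D\log G)$-bit registers: each multiplication costs $O(\log(G)\log\log(G))$ by~\eqref{seq:runtime_multiplication}, the square roots and inverse cosine require $\polylog(1/\Delta)$ such multiplications through standard iterative schemes, and the controlled rotation of stage (iii) decomposes into $O(\log(1/\Delta))$ bit-controlled single-qubit rotations, together giving the asserted $O(D\log(G)\log\log(G)\polylog(1/\Delta))$ gates plus the single $\mathcal{O}_\tau$ query. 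The delicate point, and the main thing requiring care, is the precision bookkeeping: I would fix the number of fixed-point bits at $O(\log(1/\Delta))$ and track how the rounding of each $\lambda_{\tilde v}$ and $\theta_{\tilde v}$ perturbs the left-hand side of~\eqref{seq:povm_implementation}, verifying that the accumulated error stays below $\Delta$, while also confirming that the uncomputation in stage (iv) is exact enough to leave no residual entanglement that would bias the measurement; the remaining steps are a routine assembly of standard reversible primitives.
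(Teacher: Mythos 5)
Your proposal takes essentially the same route as the paper's proof: reduce to the POVM-to-block-encoding prescription of Ref.~\cite{G1} recalled around~\eqref{seq:povm_implementation}, then realize the measurement for $\Lam=\sqrt{\frac{1}{Q_{\max}^{(\tau)}}\Q^{(\tau)}}$ by one query to $\mathcal{O}_\tau$, reversible arithmetic producing the angle $\arccos\bigl({(Q^{(\tau)}/Q_{\max}^{(\tau)})}^{1/4}\bigr)$, and a controlled rotation; your explicit uncomputation in stage~(iv) is harmless but unnecessary, since ancilla garbage does not affect the outcome-$0$ probability in~\eqref{seq:povm_implementation} and is undone by the $\U^\dag$ appearing in the block-encoding construction anyway. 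One correction to your resource bookkeeping: the oracle~\eqref{seq:oracle_tau_definition} takes the feature value $v$ itself in fixed-point representation, so before querying it the circuit must compute $\tilde{v}/G$ from $\tilde{v}$ componentwise (the paper's $\U_G$); these $D$ multiplications of $O(\log G)$-bit words are what produce the dominant $O(D\log(G)\log\log(G)\polylog(\nicefrac{1}{\Delta}))$ gate count, whereas your stage~(ii) angle arithmetic acts on a single scalar register of $O(\polylog(\nicefrac{1}{\Delta}))$ bits and is not the dominant term, contrary to what your count asserts.
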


\begin{figure}[t]
  \centering
  \includegraphics[width=5.5in]{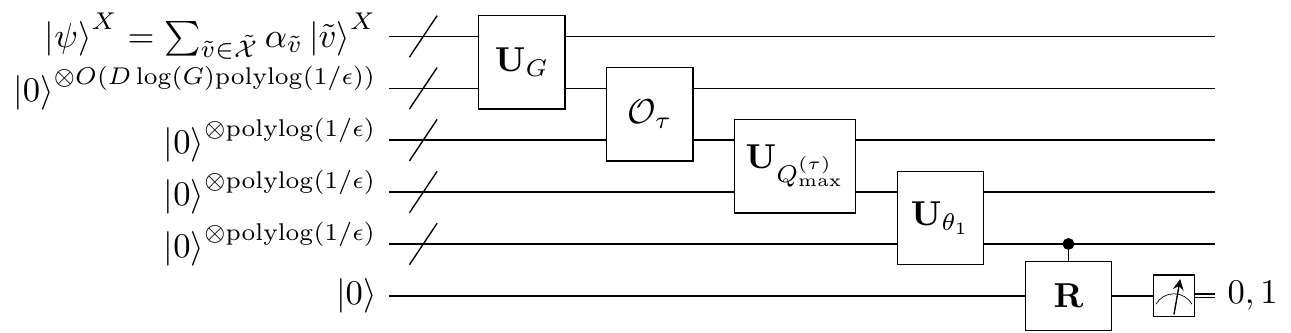}
  \caption{\label{fig:q_tau}A quantum circuit representing a unitary operator $\U$ that achieves~\eqref{seq:povm_implementation} for $\Lam=\sqrt{\frac{1}{Q_{\max}^{(\tau)}}\Q^{(\tau)}}$, which can be used for implementing a block encoding of $\sqrt{\frac{1}{Q_{\max}^{(\tau)}}\Q^{(\tau)}}$. This circuit achieves the transformation of quantum states shown in a chain starting from~\eqref{seq:0}. The last controlled gate represents $\textsc{C}\R$. Regarding the notations on quantum circuits, see, e.g.,~\cite{N4}.}
\end{figure}

\begin{proof}
  We construct a quantum circuit representing a unitary operator $\U$ that achieves~\eqref{seq:povm_implementation} for $\Lam=\sqrt{\frac{1}{Q_{\max}^{(\tau)}}\Q^{(\tau)}}$.
  We write the input quantum state as
  \begin{equation}
    \Ket{\psi}^X=\sum_{\tilde{v}\in\tilde{\mathcal{X}}}\alpha_{\tilde{v}}\Ket{\tilde{v}}^X\in\mathcal{H}^X.
  \end{equation}
  Define a function
  \begin{equation}
    \theta_1\left(q\right)\coloneqq\arccos\left({q}^{\frac{1}{4}}\right).
  \end{equation}
  Define unitary operators $\U_{G}$, $\U_{Q_{\max}^{(\tau)}}$, and $\U_{\theta_1}$ acting as
  \begin{align}
    \U_{G}:&\Ket{x}\otimes\Ket{0}\xrightarrow{\U_{G}}\Ket{x}\otimes\Ket{\frac{x}{G}},\\
    \U_{Q_{\max}^{(\tau)}}:&\Ket{x}\otimes\Ket{0}\xrightarrow{\U_{Q_{\max}^{(\tau)}}}\Ket{x}\otimes\Ket{\frac{x}{Q_{\max}^{(\tau)}}},\\
    \U_{\theta_1}:&\Ket{x}\otimes\Ket{0}\xrightarrow{\U_{\theta_1}}\Ket{x}\otimes\Ket{\theta_1(x)}.
  \end{align}
  Let $\R_\theta$ denote a unitary operator representing a one-qubit rotation
  \begin{equation}
    \label{seq:rotation}
    \R_\theta\coloneqq\left(\begin{matrix}\cos\theta & -\sin\theta \\
    \sin\theta & \cos\theta\end{matrix}\right),
  \end{equation}
  and a controlled rotation $\textsc{C}\R$ is defined as
  \begin{equation}
    \label{seq:controlled_rotation_q_tau}
    \textsc{C}\R=\sum_{\theta}\Ket{\theta}\Bra{\theta}\otimes \R_\theta.
  \end{equation}

  Using these notations, we show a quantum circuit representing $\U$ in Fig.~\ref{fig:q_tau}.
  This circuit achieves the following transformation up to precision $\Delta$
  \begin{align}
    \label{seq:0}
    &\Ket{\psi}\otimes\Ket{0}\otimes\Ket{0}\otimes\Ket{0}\otimes\Ket{0}\otimes\Ket{0}\\
    \label{seq:1}
    \xrightarrow{\U_{G}}&\sum_{\tilde{v}\in\tilde{\mathcal{X}}}\alpha_{\tilde{v}}\Ket{\tilde{v}}\otimes\Ket{\frac{\tilde{v}}{G}}\otimes\Ket{0}\otimes\Ket{0}\otimes\Ket{0}\otimes\Ket{0}\\
    \label{seq:oracle_tau_used}
    \xrightarrow{\mathcal{O}_\tau}&\sum_{\tilde{v}\in\tilde{\mathcal{X}}}\alpha_{\tilde{v}}\Ket{\tilde{v}}\otimes\Ket{\frac{\tilde{v}}{G}}\otimes\Ket{Q^{(\tau)}\left(\frac{\tilde{v}}{G}\right)}\otimes\Ket{0}\otimes\Ket{0}\otimes\Ket{0}\\
    \label{seq:2}
    \xrightarrow{\U_{Q_{\max}^{(\tau)}}}&\sum_{\tilde{v}\in\tilde{\mathcal{X}}}\alpha_{\tilde{v}}\Ket{\tilde{v}}\otimes\Ket{\frac{\tilde{v}}{G}}\otimes\Ket{Q^{(\tau)}\left(\frac{\tilde{v}}{G}\right)}\otimes\Ket{\frac{Q^{(\tau)}\left(\frac{\tilde{v}}{G}\right)}{Q_{\max}^{(\tau)}}}\otimes\Ket{0}\otimes\Ket{0}\\
    \label{seq:3}
    \xrightarrow{\U_{\theta_1}}&\sum_{\tilde{v}\in\tilde{\mathcal{X}}}\alpha_{\tilde{v}}\Ket{\tilde{v}}\otimes\Ket{\frac{\tilde{v}}{G}}\otimes\Ket{Q^{(\tau)}\left(\frac{\tilde{v}}{G}\right)}\otimes\Ket{\frac{Q^{(\tau)}\left(\frac{\tilde{v}}{G}\right)}{Q_{\max}^{(\tau)}}}\otimes\Ket{\theta_1\left(\frac{Q^{(\tau)}\left(\frac{\tilde{v}}{G}\right)}{Q_{\max}^{(\tau)}}\right)}\otimes\Ket{0}\\
    \label{seq:4}
    \xrightarrow{\textsc{C}\R}&\sum_{\tilde{v}\in\tilde{\mathcal{X}}}\alpha_{\tilde{v}}\Ket{\tilde{v}}\otimes\Ket{\frac{\tilde{v}}{G}}\otimes\Ket{Q^{(\tau)}\left(\frac{\tilde{v}}{G}\right)}\otimes\Ket{\frac{Q^{(\tau)}\left(\frac{\tilde{v}}{G}\right)}{Q_{\max}^{(\tau)}}}\otimes\Ket{\theta_1\left(\frac{Q^{(\tau)}\left(\frac{\tilde{v}}{G}\right)}{Q_{\max}^{(\tau)}}\right)}\otimes\nonumber\\
                                    &\quad\left({\left(\frac{1}{Q_{\max}^{(\tau)}}Q^{(\tau)}\left(\frac{\tilde{v}}{G}\right)\right)}^{\frac{1}{4}}\Ket{0}+\sqrt{1-\sqrt{\frac{1}{Q_{\max}^{(\tau)}}Q^{(\tau)}\left(\frac{\tilde{v}}{G}\right)}}\Ket{1}\right),
  \end{align}
  where the quantum registers for storing real number use fixed-point number representation with sufficient precision $O(\Delta)$ to achieve the overall precision $\Delta$ in~\eqref{seq:povm_implementation}.
  In~\eqref{seq:1}, each of the $D$ elements of the vector $\tilde{v}$ in the first quantum register is multiplied by $\frac{1}{G}$ using arithmetics, and the result is stored in the second quantum register.
  Since $\frac{1}{G}$ can be approximately represented with precision $O(\Delta)$ using $O\left(\log\left(\frac{1}{\Delta}\right)\right)$ bits, these $D$ multiplications take $O\left(D\log\left({G}\right)\log\log\left({G}\right)\polylog\left(\frac{1}{\Delta}\right)\right)$ time
  due to~\eqref{seq:runtime_multiplication}, which is dominant.
  The runtime of the quantum oracle $\mathcal{O}_\tau$ queried in~\eqref{seq:oracle_tau_used} is $T_\tau$.
  We can multiply $\frac{1}{Q_{\max}^{(\tau)}}$ in~\eqref{seq:2} and calculate the elementary function $\theta_1$ in~\eqref{seq:3} up to precision $O(\Delta)$ by arithmetics within time $O\left(\polylog\left(\frac{1}{\Delta}\right)\right)$~\cite{haner2018}.
  In~\eqref{seq:4}, we apply $\textsc{C}\R$ defined as~\eqref{seq:controlled_rotation_q_tau} to the last qubit controlled by the second last quantum register,
  which uses $O\left(\polylog\left(\frac{1}{\Delta}\right)\right)$ gates since $\Ket{\theta}$ stored in the second last register consists of $O\left(\polylog\left(\frac{1}{\Delta}\right)\right)$ qubits.
  The measurement of the last qubit of~\eqref{seq:4} in the computational basis $\{\Ket{0},\Ket{1}\}$ yields the outcome $0$ with probability
  \begin{equation}
    \label{seq:probability_sqrt_q_tau}
    \sum_{\tilde{v}\in\tilde{\mathcal{X}}}{\left|\alpha_{\tilde{v}}\right|}^2 \sqrt{\frac{1}{Q_{\max}^{(\tau)}}Q^{(\tau)}\left(\frac{\tilde{v}}{G}\right)}=\tr\left[\Ket{\psi}\Bra{\psi}\sqrt{\frac{1}{Q_{\max}^{(\tau)}}\Q^{(\tau)}}\right],
  \end{equation}
  which achieves~\eqref{seq:povm_implementation} for $\Lam=\sqrt{\frac{1}{Q_{\max}^{(\tau)}}\Q^{(\tau)}}$ within the claimed runtime.
\end{proof}

Using the block encoding of $\sqrt{\frac{1}{Q_{\max}^{(\tau)}}\Q^{(\tau)}}$ as a building block, we construct a block encoding of $\hat{\SigmaOp}_\epsilon$ in the following.
Note that while the following proposition provides a $\left(1,O\left(D\log\left({G}\right)\polylog\left(\frac{1}{\Delta}\right)\right),\Delta\right)$-block encoding of $\frac{1}{1+\left(\epsilon/Q_{\max}^{(\tau)}\right)}\hat{\SigmaOp}_\epsilon$, this block encoding is equivalently a $\left(1+\left(\epsilon/Q_{\max}^{(\tau)}\right),O\left(D\log\left({G}\right)\polylog\left(\frac{1}{\Delta}\right)\right),\left(1+\left(\epsilon/Q_{\max}^{(\tau)}\right)\right)\Delta\right)$-block encoding of $\hat{\SigmaOp}_\epsilon$ by definition.
In implementing the block encoding of $\hat{\SigmaOp}_\epsilon$, we use the quantum oracle $\mathcal{O}_\rho$ defined as~\eqref{seq:oracle_rho_definition} in addition to $\mathcal{O}_\tau$.

\begin{lemma}[\label{sprp:block_encoding_sigma_lambda}Block encoding of $\hat{\SigmaOp}_\epsilon$]
  For any $\epsilon>0$
  and any operator $\hat{\SigmaOp}_\epsilon$ given in the form of~\eqref{seq:hat_Sigma_lambda},
  we can implement a $\left(1,O\left(D\log\left({G}\right)\polylog\left(\frac{1}{\Delta}\right)\right),\Delta\right)$-block encoding of
  \begin{equation*}
    \frac{1}{1+\left(\epsilon/Q_{\max}^{(\tau)}\right)}\hat{\SigmaOp}_\epsilon
  \end{equation*}
  by a quantum circuit composed of $O\left(D\log\left({G}\right)\log\log\left({G}\right)\polylog\left(\frac{1}{\Delta}\right)\right)$ gates, one query to the quantum oracle $\mathcal{O}_\rho^\dag$, i.e., the inverse of $\mathcal{O}_\rho$, and one query to the quantum oracle $\mathcal{O}_\tau$.
\end{lemma}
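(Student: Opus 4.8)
The plan is to use the perfect reconstruction (Proposition~\ref{sprp:perfect_reconstruction}) to rewrite $\hat{\SigmaOp}_\epsilon$ as a product of operators with individually easy block encodings, and then to assemble these into a single POVM circuit of the form~\eqref{seq:povm_implementation}, exactly as in Lemma~\ref{sprp:block_encoding_q_tau}. Substituting $\kernel=\F_D^\dag\Q^{(\tau)}\F_D$ from Proposition~\ref{sprp:perfect_reconstruction} into~\eqref{seq:hat_Sigma_lambda}, I would set
\begin{equation*}
  \A\coloneqq\sqrt{\tfrac{1}{Q_{\max}^{(\tau)}}\Q^{(\tau)}}\,\F_D\,\sqrt{\hat{\q}^{(\rho)}},\qquad c\coloneqq\tfrac{\epsilon}{Q_{\max}^{(\tau)}},
\end{equation*}
so that $\hat{\SigmaOp}_\epsilon=\A^\dag\A+c\,\mathbbm{1}$. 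Each of the three factors of $\A$ has operator norm at most $1$ (the diagonal factors because $Q^{(\tau)}\leqq Q_{\max}^{(\tau)}$ and $\hat{q}^{(\rho)}(\tilde{x})\leqq 1$, and $\F_D$ because it is unitary), so $\|\A\|_\infty\leqq 1$ and hence $0\leqq\tfrac{1}{1+c}\hat{\SigmaOp}_\epsilon\leqq\mathbbm{1}$; this is the POVM operator $\Lam$ whose block encoding I will build.

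The central observation is that for any normalized $\Ket{\psi}$,
\begin{equation*}
  \tr\left[\Ket{\psi}\Bra{\psi}\,\tfrac{1}{1+c}\hat{\SigmaOp}_\epsilon\right]=\tfrac{1}{1+c}\left\|\A\Ket{\psi}\right\|_2^2+\tfrac{c}{1+c},
\end{equation*}
a convex combination of the heralded ``filtering'' probability $\|\A\Ket{\psi}\|_2^2$ and the constant $1$. I would realize $\A$ as a product of three block encodings. For the unitary factor $\F_D$ I use the QFT, costing $O(D\log(G)\log(\tfrac{\log G}{\Delta}))$ gates. For $\sqrt{\tfrac{1}{Q_{\max}^{(\tau)}}\Q^{(\tau)}}$ I use Lemma~\ref{sprp:block_encoding_q_tau} verbatim (one query to $\mathcal{O}_\tau$). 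The new ingredient is a block encoding of $\sqrt{\hat{\q}^{(\rho)}}$ from a single query to $\mathcal{O}_\rho^\dag$: copy the input register into a fresh ancilla $A$ by \textsc{CNOT}s, apply $\mathcal{O}_\rho^\dag$ to $A$, and postselect $A$ onto $\Ket{0}$. Since $\Bra{0}\mathcal{O}_\rho^\dag\Ket{\tilde{x}}=\overline{\Bra{\tilde{x}}\mathcal{O}_\rho\Ket{0}}=\sqrt{\hat{q}^{(\rho)}(\tilde{x})}$ by~\eqref{seq:oracle_rho_definition}, the postselected amplitude on $\Ket{\tilde{x}}$ is exactly $\sqrt{\hat{q}^{(\rho)}(\tilde{x})}$, giving a $(1,O(D\log G),0)$-block encoding of $\sqrt{\hat{\q}^{(\rho)}}$. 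Composing the three (the ancilla being ``good'' iff all flag registers read $\Ket{0}$) yields a block encoding of $\A$, whose all-flags-good probability is $\|\A\Ket{\psi}\|_2^2=\Bra{\psi}\A^\dag\A\Ket{\psi}$.

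To add the $c\,\mathbbm{1}$ term I introduce one control qubit, rotate it to $\sqrt{\tfrac{1}{1+c}}\Ket{0}+\sqrt{\tfrac{c}{1+c}}\Ket{1}$ (a classically precomputable angle, $O(\polylog(\tfrac{1}{\Delta}))$ gates), apply the block encoding of $\A$ conditioned on $\Ket{0}$, and leave the flag ancillas untouched (deterministic success) conditioned on $\Ket{1}$. Projecting all flags onto $\Ket{0}$ then succeeds with probability $\tfrac{1}{1+c}\|\A\Ket{\psi}\|_2^2+\tfrac{c}{1+c}=\tr[\Ket{\psi}\Bra{\psi}\Lam]$ for every $\Ket{\psi}$; collecting the flags into one output qubit by a multiply-controlled gate gives a circuit $\U$ satisfying~\eqref{seq:povm_implementation} for $\Lam=\tfrac{1}{1+c}\hat{\SigmaOp}_\epsilon$. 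The cited POVM-to-block-encoding prescription converts $\U$ into the claimed $(1,O(D\log(G)\polylog(\tfrac{1}{\Delta})),\Delta)$-block encoding of $\tfrac{1}{1+c}\hat{\SigmaOp}_\epsilon$ using one $\U$, one $\U^\dag$, and one \textsc{CNOT}. The dominant gate cost is the $O(D\log(G)\log\log(G)\polylog(\tfrac{1}{\Delta}))$ arithmetic of Lemma~\ref{sprp:block_encoding_q_tau} together with the QFT, with one query each to $\mathcal{O}_\rho^\dag$ and $\mathcal{O}_\tau$, matching the statement.

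The step I expect to be the main obstacle is the bookkeeping for the composite heralded circuit: confirming that interleaving the heralded filter for $\sqrt{\hat{\q}^{(\rho)}}$, the unitary $\F_D$, and the heralded filter of Lemma~\ref{sprp:block_encoding_q_tau} produces a genuine block encoding of $\A$ (so that its all-flags-good probability is $\Bra{\psi}\A^\dag\A\Ket{\psi}$), that the control-qubit branch adds precisely $c\,\mathbbm{1}$ with no surviving cross terms after the final projection, and that the $O(\Delta)$ per-component precisions (QFT, arithmetic, and $\mathcal{O}_\rho$) accumulate into the overall $\Delta$. The conceptual crux, by contrast, is the opening use of Proposition~\ref{sprp:perfect_reconstruction} to trade the non-sparse, full-rank $\kernel$ for the diagonal $\Q^{(\tau)}$ conjugated by QFTs, which is exactly what lets the construction dispense with sparsity and low-rank assumptions.
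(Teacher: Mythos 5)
Your proposal is correct and follows essentially the same route as the paper's proof: the paper likewise uses the perfect reconstruction to decompose $\hat{\SigmaOp}_\epsilon$ into $\frac{1}{Q_{\max}^{(\tau)}}\sqrt{\hat{\q}^{(\rho)}}\F_D\Q^{(\tau)}\F_D^\dag\sqrt{\hat{\q}^{(\rho)}}+\frac{\epsilon}{Q_{\max}^{(\tau)}}\mathbbm{1}$, realizes $\sqrt{\hat{\q}^{(\rho)}}$ by exactly your $\mathcal{O}_\rho^\dag$-plus-postselection trick, reuses the Lemma~\ref{sprp:block_encoding_q_tau} circuitry for the $\Q^{(\tau)}$ factor (with $\arccos\sqrt{q}$ in place of $\arccos q^{1/4}$), adds the $\epsilon$-identity term through a rotated ancilla ($\R_{\theta_3(\epsilon)}$, your control qubit), and converts the resulting POVM circuit into a block encoding via the prescription of~\cite{G1}. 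The only substantive difference is that the paper runs all oracle queries, QFTs, and arithmetic unconditionally and conditions only the final flag-setting rotation $\U_R$ on the ancilla, so that---unlike your conditioning of the entire block encoding of $\A$---no controlled versions of $\mathcal{O}_\rho^\dag$ or $\mathcal{O}_\tau$ are needed; this is a minor bookkeeping repair of the kind you yourself flagged.
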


\begin{figure}[t]
  \centering
  \includegraphics[width=5.5in]{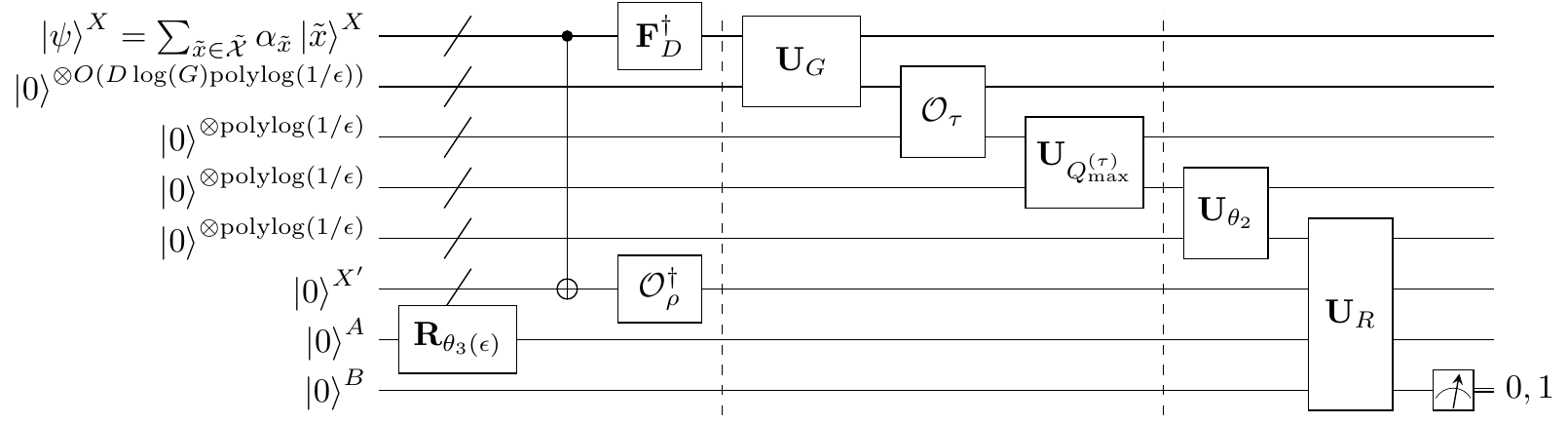}
  \caption{\label{fig:sigma_lambda}A quantum circuit representing a unitary operator $\U$ that achieves~\eqref{seq:povm_implementation} for $\Lam=\hat{\SigmaOp}_\epsilon$, which can be used for implementing a block encoding of $\hat{\SigmaOp}_\epsilon$. The notations are the same as those in Fig.~\ref{fig:q_tau}. The first gate acting on two quantum registers $X$ and $X^\prime$ collectively represents $\textsc{CNOT}$ gates acting transversally on each of the qubits of these registers. A part of this circuit sandwiched by two vertical dashed lines is the same as the corresponding part in Fig.~\ref{fig:q_tau}.
  Additionally, the circuit performs a preprocessing of the input state before performing the part corresponding to Fig.~\ref{fig:q_tau}, which achieves the transformation of quantum states shown in a chain starting from~\eqref{seq:state_transformation}. Also, the circuit performs the final gates $\U_{\theta_2}$ and $\U_R$ after the part corresponding to Fig.~\ref{fig:q_tau}, which are followed by a measurement described by the analysis starting from~\eqref{seq:probability}.}
\end{figure}

\begin{proof}
  We construct a quantum circuit representing a unitary operator $\U$ that achieves~\eqref{seq:povm_implementation} for $\Lam=\hat{\SigmaOp}_\epsilon$.
  We use the same notations as those in the proof of Lemma~\ref{sprp:block_encoding_q_tau} except the following notations.
  The input state is written in this proof as
  \begin{equation}
    \label{seq:psi}
    \Ket{\psi}^X=\sum_{\tilde{x}\in\tilde{\mathcal{X}}}\alpha_{\tilde{x}}\Ket{\tilde{x}}^X\in\mathcal{H}^X.
  \end{equation}
  Define functions
  \begin{align}
    \theta_2\left(q\right)&\coloneqq\arccos\left(\sqrt{q}\right),\\
    \theta_3\left(\epsilon\right)&\coloneqq\arccos\left(\sqrt{\frac{\epsilon/Q_{\max}^{(\tau)}}{1+\left(\epsilon/Q_{\max}^{(\tau)}\right)}}\right).
  \end{align}
  Define a unitary operator $\U_{\theta_2}$ acting as
  \begin{align}
    \label{seq:U_theta_2}
    \U_{\theta_2}:&\Ket{x}\otimes\Ket{0}\xrightarrow{U_{\theta_2}}\Ket{x}\otimes\Ket{\theta_2(x)}.
  \end{align}
  Let $\U_\rho$ denote a unitary operator representing a quantum circuit for implementing the oracle $\mathcal{O}_\rho$.
  Then, the unitary operator representing its inverse $\mathcal{O}_\rho^\dag$ is given by $\U_\rho^\dag$.
  Define a unitary operator
  \begin{align}
    \label{seq:controlled_rotation_sigma_lambda}
    \U_R=&\left(\mathbbm{1}\otimes\mathbbm{1}^{X^\prime}\otimes\Ket{0}\Bra{0}^A\otimes\mathbbm{1}^B\right)\nonumber\\
         &\quad+\left(\mathbbm{1}\otimes\left(\mathbbm{1}^{X^\prime}-\Ket{0}\Bra{0}^{X^\prime}\right)\otimes\Ket{1}\Bra{1}^A\otimes \sigma_x^B\right)\nonumber\\
    &\quad+\left(\sum_{\theta}\Ket{\theta}\Bra{\theta}\otimes\Ket{0}\Bra{0}^{X^\prime}\otimes\Ket{1}\Bra{1}^A\otimes \R_\theta^B\right),
  \end{align}
  where the first quantum register may store a real number $\theta$ in the fixed-point number representation with sufficient precision to achieve the overall precision $\Delta$ in~\eqref{seq:povm_implementation}, the second quantum register $\mathcal{H}^{X^\prime}$ is isomorphic to the quantum register $\mathcal{H}^X$, i.e., is composed of the same number of qubits as $\mathcal{H}^X$, the third quantum register $\mathcal{H}^A$ is one auxiliary qubit, and the fourth quantum register $\mathcal{H}^B$ is another auxiliary qubit.
  The operators  $\sigma_x^B$ and $\R_\theta^B$ on $\mathcal{H}^B$ are defined as~\eqref{seq:sigma_x} and~\eqref{seq:rotation}, respectively.
  If the state of $A$ is $\Ket{0}^A$, the first term of~\eqref{seq:controlled_rotation_sigma_lambda} does not change the state on $B$, and if $\Ket{1}^A$, the second and third terms of~\eqref{seq:controlled_rotation_sigma_lambda} act as follows:
  unless the state of $X^\prime$ is $\Ket{0}^{X^\prime}$,
  $\sigma_x^B$ in the second term of~\eqref{seq:controlled_rotation_sigma_lambda} flips $\Ket{0}^B$ to $\Ket{1}^B$,
  and if the state of $X^\prime$ is $\Ket{0}^{X^\prime}$,
  $\R_\theta^B$ in the third term of~\eqref{seq:controlled_rotation_sigma_lambda} acts in the same way as~\eqref{seq:4}.

  Using these notations, we show a quantum circuit representing $\U$ in Fig.~\ref{fig:sigma_lambda}.
  While a part of the circuit in Fig.~\ref{fig:sigma_lambda} sandwiched by two vertical dashed lines is the same as the corresponding part in Fig.~\ref{fig:q_tau},
  this circuit additionally performs a preprocessing of the input state before performing the part corresponding to Fig.~\ref{fig:q_tau}, and the final gates $\U_{\theta_2}$ and $\U_R$ in Fig.~\ref{fig:sigma_lambda} after the part corresponding to Fig.~\ref{fig:q_tau} are also different.
  This preprocessing implements the following transformation with sufficient precision $O(\Delta)$ to achieve the overall precision $\Delta$
  \begin{align}
    \label{seq:state_transformation}
    &\Ket{\psi}^X\otimes\Ket{0}\otimes\Ket{0}\otimes\Ket{0}\otimes\Ket{0}\otimes\Ket{0}^{X^\prime}\otimes\Ket{0}^A\otimes\Ket{0}^B\\
    \label{seq:cnot_used}
    \xrightarrow{\textsc{CNOT}}&\sum_{\tilde{x}\in\tilde{\mathcal{X}}}\alpha_{\tilde{x}}\Ket{\tilde{x}}^X\otimes\Ket{0}\otimes\Ket{0}\otimes\Ket{0}\otimes\Ket{0}\otimes\Ket{\tilde{x}}^{X^\prime}\otimes\Ket{0}^A\otimes\Ket{0}^B\\
    \label{seq:f_d_used}
    \xrightarrow{\F_{D}^\dag\otimes\mathcal{O}_\rho^\dag\otimes \R_{\theta_3\left(\epsilon\right)}}&\sum_{\tilde{x}\in\tilde{\mathcal{X}}}\alpha_{\tilde{x}}\F_{D}^\dag\Ket{\tilde{x}}^X\otimes\Ket{0}\otimes\Ket{0}\otimes\Ket{0}\otimes\Ket{0}\otimes \U_\rho^\dag\Ket{\tilde{x}}^{X^\prime}\otimes\nonumber\\
                                                                                                             &\quad\left(\sqrt{\frac{\epsilon/Q_{\max}^{(\tau)}}{1+\left(\epsilon/Q_{\max}^{(\tau)}\right)}}\Ket{0}^A+\sqrt{\frac{1}{1+\left(\epsilon/Q_{\max}^{(\tau)}\right)}}\Ket{1}^A\right)\otimes\Ket{0}^B.
  \end{align}
  In~\eqref{seq:cnot_used}, we use $O\left(D\log\left({G}\right)\right)$ $\textsc{CNOT}$ gates acting on each of the $O\left(D\log\left({G}\right)\right)$ qubits of the quantum registers $X$ and $X^\prime$.
  In~\eqref{seq:f_d_used}, $\F_{D}^\dag$ is implemented by $O\left(D\log\left({G}\right)\log\left(\frac{\log G}{\Delta}\right)\right)$ gates as shown in~\eqref{seq:runtime_F_D},
  $\mathcal{O}_\rho^\dag$ takes time $T_\rho$, and a fixed one-qubit rotation $\R_{\theta_3\left(\epsilon\right)}$ defined as~\eqref{seq:rotation} is implemented with precision $O(\Delta)$ using $O\left(\polylog\left(\frac{1}{\Delta}\right)\right)$ gates~\cite{N4}.

  Then, after performing the same part as in Fig.~\ref{fig:q_tau}, which is dominant, the circuit in Fig.~\ref{fig:sigma_lambda} performs $\U_{\theta_2}$ defined as~\eqref{seq:U_theta_2} and $\U_R$ defined as~\eqref{seq:controlled_rotation_sigma_lambda}.
  We can implement $\U_{\theta_2}$ in the same way as~\eqref{seq:3}, i.e., $\U_{\theta_1}$ in Lemma~\ref{sprp:block_encoding_q_tau}, using $O\left(\polylog\left(\frac{1}{\Delta}\right)\right)$ gates.
  We can implement $\U_R$ using $O\left(D\log\left({G}\right)\polylog\left(\frac{1}{\Delta}\right)\right)$ gates since $\Ket{\theta}$ is stored in $O\left(\polylog\left(\frac{1}{\Delta}\right)\right)$ qubits and $\mathcal{H}^{X^\prime}$ consists of $O\left(D\log\left({G}\right)\right)$ qubits.

  After performing $\U_R$, we perform a measurement of the last qubit $\mathcal{H}^B$ in the computational basis $\{\Ket{0}^B,\Ket{1}^B\}$.
  To calculate the probability of obtaining the outcome $0$ in this measurement of $\mathcal{H}^B$,
  suppose that we performed a measurement of the one-qubit register $\mathcal{H}^A$ in the computational basis $\left\{\Ket{0}^A,\Ket{1}^A\right\}$.
  Then, we would obtain the outcome $\Ket{0}^A$ with probability $\frac{\epsilon/Q_{\max}^{(\tau)}}{1+\left(\epsilon/Q_{\max}^{(\tau)}\right)}$, and the outcome $\Ket{1}^A$ with probability $\frac{1}{1+\left(\epsilon/Q_{\max}^{(\tau)}\right)}$.
  Conditioned on the outcome $\Ket{0}^A$, the measurement of $\mathcal{H}^B$ yields the outcome $\Ket{0}^B$ with probability $1$ correspondingly to the first term of~\eqref{seq:controlled_rotation_sigma_lambda}.
  Conditioned on $\Ket{1}^A$, owing to the third term of~\eqref{seq:controlled_rotation_sigma_lambda}, the measurement of $\mathcal{H}^B$ yields the outcome $\Ket{0}^B$ with probability
  \begin{align}
    \label{seq:probability}
    &\sum_{\tilde{x}^\prime\in\tilde{\mathcal{X}}}{\left|\sum_{\tilde{x}\in\tilde{\mathcal{X}}}\alpha_{\tilde{x}}\Braket{\tilde{x}^\prime|\F_{D}^\dag|\tilde{x}}\Braket{0|\U_\rho^\dag|\tilde{x}}\sqrt{\frac{1}{Q_{\max}^{(\tau)}}Q^{(\tau)}\left(\frac{\tilde{x}^\prime}{G}\right)}\right|}^2\nonumber\\
    &=\sum_{\tilde{x}^\prime\in\tilde{\mathcal{X}}}\frac{1}{Q_{\max}^{(\tau)}}Q^{(\tau)}\left(\frac{\tilde{x}^\prime}{G}\right){\left|\sum_{\tilde{x}\in\tilde{\mathcal{X}}}\alpha_{\tilde{x}}\Braket{\tilde{x}^\prime|\F_{D}^\dag|\tilde{x}}\Braket{0|\U_\rho^\dag|\tilde{x}}\right|}^2.
  \end{align}
  Note that the second term of~\eqref{seq:controlled_rotation_sigma_lambda} has no contribution in~\eqref{seq:probability} because $\sigma_x^B$ flips $\Ket{0}^B$ to $\Ket{1}^B$.
  Due to ${\left|\Braket{0|\U_\rho^\dag|\tilde{x}}\right|}={\left|\Bra{\tilde{x}}\left(\sum_{\tilde{x}^\prime}\sqrt{\hat{q}^{(\rho)}\left(\tilde{x}^\prime\right)}\Ket{\tilde{x}^\prime}\right)\right|}=\sqrt{\hat{q}^{(\rho)}(\tilde{x})}$, we have
  \begin{align}
    \label{seq:probability2}\eqref{seq:probability}&=\sum_{\tilde{x}^\prime\in\tilde{\mathcal{X}}}\frac{1}{Q_{\max}^{(\tau)}}Q^{(\tau)}\left(\frac{\tilde{x}^\prime}{G}\right){\left|\sum_{\tilde{x}\in\tilde{\mathcal{X}}}\alpha_{\tilde{x}}\Braket{\tilde{x}^\prime|\F_{D}^\dag|\tilde{x}}\sqrt{\hat{q}^{(\rho)}\left(\tilde{x}\right)}\right|}^2\nonumber\\
    &=\sum_{\tilde{x}^\prime\in\tilde{\mathcal{X}}}\frac{1}{Q_{\max}^{(\tau)}}Q^{(\tau)}\left(\frac{\tilde{x}^\prime}{G}\right){\left|\sum_{\tilde{x}\in\tilde{\mathcal{X}}}\alpha_{\tilde{x}}\Braket{\tilde{x}^\prime|\F_{D}^\dag\sqrt{\hat{\q}^{(\rho)}}|\tilde{x}}\right|}^2.
  \end{align}
  By definition~\eqref{seq:psi} of $\Ket{\psi}$, we have
  \begin{align}
    \label{seq:probability3}\eqref{seq:probability2}&=\sum_{\tilde{x}^\prime\in\tilde{\mathcal{X}}}\frac{1}{Q_{\max}^{(\tau)}}Q^{(\tau)}\left(\frac{\tilde{x}^\prime}{G}\right){\left|\Braket{\tilde{x}^\prime|\F_{D}^\dag\sqrt{\hat{\q}^{(\rho)}}|\psi}\right|}^2\nonumber\\
    &=\sum_{\tilde{x}^\prime\in\tilde{\mathcal{X}}}\frac{1}{Q_{\max}^{(\tau)}}Q^{(\tau)}\left(\frac{\tilde{x}^\prime}{G}\right)\Braket{\tilde{x}^\prime|\F_{D}^\dag\sqrt{\hat{\q}^{(\rho)}}\Ket{\psi}\Bra{\psi}\sqrt{\hat{\q}^{(\rho)}}\F_{D}|\tilde{x}^\prime}\nonumber\\
    &=\frac{1}{Q_{\max}^{(\tau)}}\tr\left[\F_{D}^\dag\sqrt{\hat{\q}^{(\rho)}}\Ket{\psi}\Bra{\psi}\sqrt{\hat{\q}^{(\rho)}}\F_{D}\left(\sum_{\tilde{x}^\prime\in\tilde{\mathcal{X}}}Q^{(\tau)}\left(\frac{\tilde{x}^\prime}{G}\right)\Ket{\tilde{x}^\prime}\Bra{\tilde{x}^\prime}\right)\right].
  \end{align}
  By definition~\eqref{seq:q_tau_g} of $\Q^{(\tau)}$, we obtain
  \begin{align}
    \label{seq:probability4}\eqref{seq:probability3}&=\frac{1}{Q_{\max}^{(\tau)}}\tr\left[\F_{D}^\dag\sqrt{\hat{\q}^{(\rho)}}\Ket{\psi}\Bra{\psi}\sqrt{\hat{\q}^{(\rho)}}\F_{D} \Q^{(\tau)}\right]\nonumber\\
    &=\frac{1}{Q_{\max}^{(\tau)}}\tr\left[\Ket{\psi}\Bra{\psi}\sqrt{\hat{\q}^{(\rho)}}\F_{D} \Q^{(\tau)}\F_{D}^\dag\sqrt{\hat{\q}^{(\rho)}}\right]\nonumber\\
    &=\frac{1}{Q_{\max}^{(\tau)}}\tr\left[\Ket{\psi}\Bra{\psi}\sqrt{\hat{\q}^{\left(\rho\right)}}\kernel\sqrt{\hat{\q}^{\left(\rho\right)}}\right],
  \end{align}
  where the last equality follows from the perfect reconstruction of the kernel $k$ shown in Proposition~\ref{sprp:perfect_reconstruction}.
  Therefore, since a measurement of the auxiliary qubit $\mathcal{H}^A$ in the computational basis $\left\{\Ket{0}^A,\Ket{1}^A\right\}$ yields outcome $0$ and $1$ with probability $\frac{\epsilon/Q_{\max}^{(\tau)}}{1+\left(\epsilon/Q_{\max}^{(\tau)}\right)}$ and $\frac{1}{1+\left(\epsilon/Q_{\max}^{(\tau)}\right)}$ respectively,
  the circuit in Fig.~\ref{fig:sigma_lambda} yields the outcome $0$ with probability
  \begin{align}
    &\frac{\epsilon/Q_{\max}^{(\tau)}}{1+\left(\epsilon/Q_{\max}^{(\tau)}\right)}\times 1+\frac{1}{1+\left(\epsilon/Q_{\max}^{(\tau)}\right)}\times\left(\frac{1}{Q_{\max}^{(\tau)}}\tr\left[\Ket{\psi}\Bra{\psi}\sqrt{\hat{\q}^{\left(\rho\right)}}\kernel\sqrt{\hat{\q}^{\left(\rho\right)}}\right]\right)\nonumber\\
    &=\frac{1}{1+\left(\epsilon/Q_{\max}^{(\tau)}\right)}\times \tr\left[\Ket{\psi}\Bra{\psi}\left(\frac{\epsilon}{Q_{\max}^{(\tau)}}\mathbbm{1}\right)\right]\nonumber\\
    &\quad+\frac{1}{1+\left(\epsilon/Q_{\max}^{(\tau)}\right)}\times\tr\left[\Ket{\psi}\Bra{\psi}\left(\frac{1}{Q_{\max}^{(\tau)}}\sqrt{\hat{\q}^{\left(\rho\right)}}\kernel\sqrt{\hat{\q}^{\left(\rho\right)}}\right)\right]\nonumber\\
    &=\frac{1}{1+\left(\epsilon/Q_{\max}^{(\tau)}\right)}\times \tr\left[\Ket{\psi}\Bra{\psi}\left(\frac{1}{Q_{\max}^{(\tau)}}\sqrt{\hat{\q}^{\left(\rho\right)}}\kernel\sqrt{\hat{\q}^{\left(\rho\right)}}+\frac{\epsilon}{Q_{\max}^{(\tau)}}\mathbbm{1}\right)\right]\nonumber\\
    &=\tr\left[\Ket{\psi}\Bra{\psi}\left(\frac{1}{1+\left(\epsilon/Q_{\max}^{(\tau)}\right)}\hat{\SigmaOp}_\epsilon\right)\right],
  \end{align}
  where the last equality follows from the definition~\eqref{seq:hat_Sigma_lambda} of $\hat{\SigmaOp}_\epsilon$,
  which achieves~\eqref{seq:povm_implementation} for $\Lam=\frac{1}{1+\left(\epsilon/Q_{\max}^{(\tau)}\right)}\hat{\SigmaOp}_\epsilon$ within a claimed runtime.
\end{proof}

Using the block encodings in Lemmas~\ref{sprp:block_encoding_q_tau} and~\ref{sprp:block_encoding_sigma_lambda}, we prove Theorem~\ref{sthm:sampling} as follows.

\begin{proof}[Proof of Theorem~\ref{sthm:sampling}]
  We prove that Algorithm~\ref{salg:random_feature_revised} has the claimed runtime guarantee. The dominant step of Algorithm~\ref{salg:random_feature_revised} is Step~5, as shown in the following.

  In Step~2, after the initialization of $\Ket{0}^X\otimes\Ket{0}^{X^\prime}$, we prepare $\sum_{\tilde{x}\in\tilde{\mathcal{X}}}\Ket{0}^X\otimes\sqrt{\hat{\q}^{(\rho)}}\Ket{\tilde{x}}^{X^\prime}$ by one query to the oracle $\mathcal{O}_\rho$ defined as~\eqref{seq:oracle_rho_definition}, followed by $O\left(D\log\left({G}\right)\right)$ $\textsc{CNOT}$ gates to prepare $\sum_{\tilde{x}\in\tilde{\mathcal{X}}}\Ket{\tilde{x}}^X\otimes\sqrt{\hat{\q}^{(\rho)}}\Ket{\tilde{x}}^{X^\prime}$, since $\mathcal{H}^X$ consists of $O\left(D\log\left({G}\right)\right)$ qubits.
  Step~3 performs $\F_{D}^\dag$, which is implemented using $O\left(D\log\left({G}\right)\log\left(\frac{\log G}{\Delta}\right)\right)$ gates as shown in~\eqref{seq:runtime_F_D}.
  Step~4 is implemented by the block encoding of $\sqrt{\frac{1}{Q_{\max}^{(\tau)}}\Q^{(\tau)}}$ within time $O\left(D\log\left({G}\right)\log\log\left({G}\right)\polylog\left(\frac{1}{\Delta}\right)+T_\tau\right)$ as shown in Lemma~\ref{sprp:block_encoding_q_tau}.
  The runtime at this moment is $O\left(D\log\left({G}\right)\log\log\left({G}\right)\polylog\left(\frac{1}{\Delta}\right)+T_\rho+T_\tau\right)$.
  After applying the block encoding of $\sqrt{\frac{1}{Q_{\max}^{(\tau)}}\Q^{(\tau)}}$, we obtain a quantum state represented as a linear combination including a term
  \begin{equation}
    \label{seq:state_to_amplify}
    \sum_{\tilde{x}\in\tilde{\mathcal{X}}}\Ket{\tilde{x}}^X\otimes \sqrt{\frac{1}{Q_{\max}^{(\tau)}}\Q^{(\tau)}}\F_{D}^\dag\sqrt{\hat{\q}^{\left(\rho\right)}}\Ket{\tilde{x}}^{X^\prime},
  \end{equation}
  and the norm of this term is
  \begin{align}
  &\left\|\sum_{\tilde{x}\in\tilde{\mathcal{X}}}\Ket{\tilde{x}}^X\otimes \sqrt{\frac{1}{Q_{\max}^{(\tau)}}\Q^{(\tau)}}\F_{D}^\dag\sqrt{\hat{\q}^{\left(\rho\right)}}\Ket{\tilde{x}}^{X^\prime}\right\|_2\nonumber\\
  &=\sqrt{\frac{\tr\left[\sqrt{\hat{\q}^{\left(\rho\right)}} \F_{D} \Q^{(\tau)}\F_{D}^\dag\sqrt{\hat{\q}^{\left(\rho\right)}}\right]}{Q_{\max}^{(\tau)}}}\nonumber\\
  &=\sqrt{\frac{\tr\left[\F_{D} \Q^{(\tau)}\F_{D}^\dag\hat{\q}^{\left(\rho\right)}\right]}{Q_{\max}^{(\tau)}}}\nonumber\\
  &=\sqrt{\frac{\tr\hat{\SigmaOp}}{Q_{\max}^{(\tau)}}},
  \end{align}
  where the last equality uses $\hat{\SigmaOp}=\kernel\hat{\q}^{(\rho)}=\F_D \Q^{(\tau)} \F_D^\dag \hat{\q}^{(\rho)}$ obtained from Proposition~\ref{sprp:perfect_reconstruction}.
  For any translation-invariant kernel $\tilde{k}\left(x^\prime,x\right)=\tilde{k}_\mathrm{TI}\left(x^\prime-x\right)$, we can evaluate $\tr\hat{\SigmaOp}$ as
  \begin{equation}
    \tr\hat{\SigmaOp}=\tr \left[\kernel \hat{\q}^{\left(\rho\right)}\right]=\tilde{k}_\mathrm{TI}\left(0\right)\tr \hat{\q}^{\left(\rho\right)}=\tilde{k}(0,0)=\Omega(1),
  \end{equation}
  where we use the assumption $\tilde{k}(0,0)=\Omega(k(0,0))=\Omega(1)$.
  Thus, to obtain the normalized quantum state proportional to the term~\eqref{seq:state_to_amplify}, Step~4 is followed by amplitude amplification~\cite{BrassardHoyer} that repeats the above steps $O\left(\sqrt{\frac{Q_{\max}^{(\tau)}}{\tr\hat{\SigmaOp}}}\right)=O\left(\sqrt{Q_{\max}^{(\tau)}}\right)$ times.
  Therefore, at the end of Step~4 including the amplitude amplification, the runtime is
  \begin{equation}
    \label{seq:step_4}
    O\left(\left(D\log\left({G}\right)\log\log\left({G}\right)\polylog\left(\frac{1}{\Delta}\right)+T_\rho+T_\tau\right)\times\sqrt{Q_{\max}^{(\tau)}}\right).
  \end{equation}

  Step~5 is performed by implementing a block encoding of $\hat{\SigmaOp}_\epsilon^{-\frac{1}{2}}$, which is obtained from quantum singular value transformation (QSVT)~\cite{G1} of the block encoding of $\frac{1}{1+(\epsilon/Q_{\max}^{(\tau)})}\hat{\SigmaOp}_\epsilon$ constructed in Lemma~\ref{sprp:block_encoding_sigma_lambda}.
  The block encoding of $\frac{1}{1+(\epsilon/Q_{\max}^{(\tau)})}\hat{\SigmaOp}_\epsilon$ can be implemented in time
  $O\left(D\log\left({G}\right)\log\log\left({G}\right)\polylog\left(\frac{1}{\Delta}\right)+T_\rho+T_\tau\right)$
  as shown in Lemma~\ref{sprp:block_encoding_sigma_lambda}.
  Then, the QSVT combined with variable-time amplitude amplification~\cite{ambainis2012variable,childs2017quantum,chakraborty2018} yields a block encoding of ${\left(\frac{1}{1+(\epsilon/Q_{\max}^{(\tau)})}\hat{\SigmaOp}_\epsilon\right)}^{-\frac{1}{2}}$, which can be applied to any given quantum state up to $\Delta$ precision using the block encoding of $\frac{1}{1+(\epsilon/Q_{\max}^{(\tau)})}\hat{\SigmaOp}_\epsilon$ repeatedly $\widetilde{O}\left({\left(\frac{Q_{\max}^{(\tau)}}{\epsilon}+1\right)}\polylog\left(\frac{1}{\Delta}\right)\right)$ times~\cite{G1}.
  This repetition includes the runtime required for the amplitude amplification,
  and $\frac{Q_{\max}^{(\tau)}}{\epsilon}+1$ is the condition number of $\frac{1}{1+(\epsilon/Q_{\max}^{(\tau)})}\hat{\SigmaOp}_\epsilon$ since it holds that
  \begin{equation}
    \frac{1}{1+(\epsilon/Q_{\max}^{(\tau)})}\frac{\epsilon}{Q_{\max}^{(\tau)}}\mathbbm{1}\leqq\frac{1}{1+(\epsilon/Q_{\max}^{(\tau)})}\hat{\SigmaOp}_\epsilon\leqq\mathbbm{1}.
  \end{equation}
  Thus, Step~5 including amplitude amplification can be implemented in time
  \begin{align}
    \label{seq:step_5}
  &O\left(D\log\left({G}\right)\log\log\left({G}\right)\polylog\left(\frac{1}{\Delta}\right)+T_\rho+T_\tau\right)\times\widetilde{O}\left({\left(\frac{Q_{\max}^{(\tau)}}{\epsilon}+1\right)}\polylog\left(\frac{1}{\Delta}\right)\right)\nonumber\\
  &=O\left(D\log\left({G}\right)\log\log\left({G}\right)+T_\rho+T_\tau\right)\times\widetilde{O}\left(\frac{Q_{\max}^{(\tau)}}{\epsilon}\polylog\left(\frac{1}{\Delta}\right)\right).
  \end{align}
  Therefore from~\eqref{seq:step_4} and~\eqref{seq:step_5}, we obtain the total runtime at the end of Step~5 including amplitude amplification
  \begin{align}
  &O\left(D\log\left({G}\right)\log\log\left({G}\right)+T_\rho+T_\tau\right)\times\widetilde{O}\left({\left(\sqrt{Q_{\max}^{(\tau)}}+\frac{Q_{\max}^{(\tau)}}{\epsilon}\right)}\polylog\left(\frac{1}{\Delta}\right)\right)\nonumber\\
  &=O\left(D\log\left({G}\right)\log\log\left({G}\right)+T_\rho+T_\tau\right)\times\widetilde{O}\left(\frac{Q_{\max}^{(\tau)}}{\epsilon}\polylog\left(\frac{1}{\Delta}\right)\right),
  \end{align}
  which yields the conclusion.
\end{proof}

\section{\label{sec:4}Overall runtime of learning with optimized random features}

In this section, we show the algorithm for the learning with optimized random features and bound its runtime.

Algorithm~\ref{salg:data_approximation} shows the whole algorithm that achieves the learning using the optimized random features.
In Algorithm~\ref{salg:data_approximation},
we sample the optimized random features efficiently by Algorithm~\ref{salg:random_feature_revised}, and then perform stochastic gradient descent (SGD)~\cite{pmlr-v99-jain19a} shown in Algorithm~\ref{salg:sgd}.
As explained in the main text,
the SGD achieves linear regression to obtain coefficients of the estimate $\hat{f}_{M,v_m,\alpha_m}=\sum_{m=0}^{M-1}\alpha_m\varphi(v_m,\cdot)\approx f$, i.e.,
\begin{equation}
  \alpha=\left(\begin{matrix}\alpha_0\\
      \vdots\\
  \alpha_{M-1}\end{matrix}\right)\in\mathbb{R}^M,
\end{equation}
where the optimal coefficient $\alpha$ minimizes the generalization error
\begin{equation}
  \label{seq:generalization_error}
  I\left(\alpha\right)\coloneqq\sum_{\tilde{x}\in\tilde{\mathcal{X}}}p^{(\rho)}(\tilde{x})\Big|f(\tilde{x})-\sum_{m=0}^{M-1}\alpha_{m}\varphi\left(v_m,\tilde{x}\right)\Big|^2,
\end{equation}
and the examples of data are IID sampled according to $p^{(\rho)}(\tilde{x})\coloneqq \int_{\Delta_{\tilde{x}}}d\rho(x)$.

We remark that rather than the linear regression based on least-squares of $I$ in~\eqref{seq:generalization_error},~\citet{B1} analyzes regularized least-squares regression exploiting $Q_\epsilon^\ast$, but it may be hard to compute description of $Q_\epsilon^\ast$. To circumvent this hardness of using $Q_\epsilon^\ast$ in regularization, we could replace the regularization in Ref.~\cite{B1} with $L_2$ regularization $R(\alpha)=\lambda\|\alpha\|_2^2$. Then, due to strong convexity~\cite{pmlr-v99-jain19a}, SGD reducing $I+R$ to $O(\epsilon)$ terminates after $O(\frac{1}{\epsilon\lambda})$ iterations, while further research is needed to clarify how the $L_2$ regularization affects the learning accuracy compared to minimizing $I$ without this regularization.
In this paper, we consider the linear regression minimizing $I$ to simplify the analysis of the required runtime for achieving the desired learning accuracy.

We prove the following theorem that bounds the runtime of Algorithm~\ref{salg:data_approximation}.
The sketch of the proof is as follows.
To bound the runtime of Algorithm~\ref{salg:data_approximation},
we show that the required number $T$ of iterations for the SGD~\cite{pmlr-v99-jain19a}, i.e., Algorithm~\ref{salg:sgd}, to return $\alpha$ minimizing $I$ to accuracy $O(\epsilon)$ with high probability greater than $1-\delta$ is
\begin{equation}
\label{seq:T}
T=O\left(\frac{1}{\epsilon^2 Q_{\min}^2}\log\left(\frac{1}{\delta}\right)\right),
\end{equation}
where $Q_{\min}$ is the minimum of $Q(v_0),\ldots,Q(v_{M-1})$ in Theorem~\ref{sthm:sampling}
\begin{equation}
\label{seq:Q_min}
  Q_{\min}\coloneqq\min\left\{Q(v_m):m\in\left\{0,\ldots,M-1\right\}\right\},
\end{equation}
and the parameter region $\mathcal{W}$ of $\alpha$ in Algorithm~\ref{salg:sgd} is chosen as an $M$-dimensional ball of center $0$ and of radius $O\left(\frac{1}{\sqrt{MQ_{\min}}}\right)$.
Note that step sizes used for SGD~\cite{pmlr-v99-jain19a} shown in Algorithm~\ref{salg:sgd} are chosen depending on the number $T$ of iterations so as to achieve~\eqref{seq:T}, but we can also use step sizes independent of $T$ at expense of as small as poly-logarithmic slowdown in terms of $\epsilon$, $Q_{\min}$ compared to~\eqref{seq:T}~\cite{H3}.
In the $t$th iteration of the SGD for each $t\in\left\{1,\ldots,T\right\}$,
we calculate an unbiased estimate $\hat{g}^{\left(t\right)}$ of the gradient $\nabla I$.
Using the $t$th IID sampled data $\left(\tilde{x}_t,y_t\right)$ and a uniformly sampled random integer $m\in\left\{0,\ldots,M-1\right\}$,
we show that we can calculate $\hat{g}^{\left(t\right)}$ within time $O(MD)$ in addition to one query to each of the classical oracles
\begin{equation}
  \label{seq:classical_oracles}
 \mathcal{O}_{\tilde{x}}(n)=\tilde{x}_n,\quad \mathcal{O}_y(n)=y_n
\end{equation}
to get $\left(\tilde{x}_t,y_t\right)$ in time $T_{\tilde{x}}$ and $T_y$, respectively; that is, the runtime per iteration of the SGD is $O(MD+T_{\tilde{x}}+T_y)$.
Combining Algorithm~\ref{salg:random_feature_revised} with this SGD, we achieve the learning by Algorithm~\ref{salg:data_approximation} within the following overall runtime.

\begin{theorem}
  [\label{sthm:runtime_all}Overall runtime of learning with optimized random features]
  The runtime $T_2$ of Algorithm~\ref{salg:data_approximation} for learning with optimized random features is
  \begin{equation*}
    T_2=O\left(MT_1\right)+O\left(\left(MD+T_{\tilde{x}}+T_y\right)\frac{1}{{\epsilon}^2 Q_{\min}^2}\log\left(\frac{1}{\delta}\right)\right),
  \end{equation*}
  where $T_1$ appears in Theorem~\ref{sthm:sampling}, the first term is the runtime of sampling $M$ optimized random features by Algorithm~\ref{salg:random_feature_revised}, and the second term is runtime of the SGD\@.
\end{theorem}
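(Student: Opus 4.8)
The plan is to split the overall runtime into the two independent stages of Algorithm~\ref{salg:data_approximation}—first sampling the $M$ features, then running the SGD—and bound each separately before summing. The first stage is immediate: by Theorem~\ref{sthm:sampling}, Algorithm~\ref{salg:random_feature_revised} produces a single optimized random feature in time $T_1$, so drawing the $M$ features required by~\eqref{eq:M_bound} costs $O(MT_1)$, which is exactly the first term of $T_2$. The substance of the proof therefore lies in the second stage.

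For the SGD I would first set up the optimization problem. The objective $I(\alpha)$ in~\eqref{seq:generalization_error} is quadratic, hence convex, in $\alpha\in\mathbb{R}^M$, and since we minimize $I$ directly rather than a strongly convex regularized surrogate, the relevant rate is the plain convex-SGD rate $O(\mathrm{diam}^2\cdot\sigma^2/\epsilon^2)$, where $\sigma^2$ bounds the second moment of the stochastic gradient and $\mathrm{diam}$ is the diameter of the feasible set $\mathcal{W}$. I would take $\mathcal{W}$ to be the ball of center $0$ and radius $O(1/\sqrt{MQ_{\min}})$, first verifying that the minimizer of $I$ over the coefficients realizing $\hat f_{M,v_m,\alpha_m}\approx f$ lies inside $\mathcal{W}$, so that projection onto $\mathcal{W}$ introduces no bias at the optimum. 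The diameter is then $O(1/\sqrt{MQ_{\min}})$ by construction, giving $\mathrm{diam}^2=O(1/(MQ_{\min}))$. The remaining ingredient is a bound on $\sigma^2$: forming $\hat g^{(t)}$ by sampling one coordinate $m\in\{0,\dots,M-1\}$ uniformly contributes a factor $M$, while the importance reweighting inherited from the optimized sampling distribution contributes a further factor $1/Q_{\min}$, so I expect $\sigma^2=O(M/Q_{\min})$. Plugging these into the convex-SGD rate and boosting the in-expectation guarantee to a high-probability one by a standard amplification, at the cost of the $\log(1/\delta)$ factor, should reproduce the iteration count $T=O\!\left(\frac{1}{\epsilon^2 Q_{\min}^2}\log\!\left(\frac{1}{\delta}\right)\right)$ asserted in~\eqref{seq:T}, invoking Ref.~\cite{pmlr-v99-jain19a}.

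Next I would bound the per-iteration cost. In iteration $t$ one retrieves $(\tilde x_t,y_t)$ through the classical oracles $\mathcal{O}_{\tilde x}$ and $\mathcal{O}_y$ of~\eqref{seq:classical_oracles}, at cost $T_{\tilde x}+T_y$, then forms the unbiased estimate $\hat g^{(t)}$. Computing $\hat g^{(t)}$ requires the residual $f(\tilde x_t)-\sum_{m'}\alpha_{m'}\varphi(v_{m'},\tilde x_t)$, hence evaluation of all $M$ features $\varphi(v_{m'},\tilde x_t)=\mathrm{e}^{-2\pi\mathrm{i}v_{m'}\cdot\tilde x_t}$; since each inner product $v_{m'}\cdot\tilde x_t$ costs $O(D)$, the total is $O(MD)$, so the per-iteration cost is $O(MD+T_{\tilde x}+T_y)$. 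Multiplying this by the iteration count $T$ from~\eqref{seq:T} yields the second term of $T_2$, and adding the sampling cost $O(MT_1)$ completes the bound.

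The main obstacle I anticipate is the clean derivation of the iteration bound~\eqref{seq:T}. Two points demand care: confirming that the true minimizer of $I$ sits within the bounded domain $\mathcal{W}$ of radius $O(1/\sqrt{MQ_{\min}})$, so that the projected SGD remains consistent, and tracking precisely how the $M$ and $Q_{\min}$ dependencies—introduced respectively by uniform coordinate sampling and by the importance reweighting of the optimized features—propagate through $\mathrm{diam}^2$ and $\sigma^2$ to combine into exactly the $1/(\epsilon^2 Q_{\min}^2)$ scaling. The per-iteration accounting and the first term are comparatively routine once Theorem~\ref{sthm:sampling} is in hand.
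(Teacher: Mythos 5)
Your proposal follows essentially the same route as the paper's proof: the same two-stage decomposition, the same projected-SGD framework of Ref.~\cite{pmlr-v99-jain19a} (which in fact already supplies the high-probability bound $\epsilon=O\bigl(dL\sqrt{\log(1/\delta)/T}\bigr)$, so no separate amplification step is needed), the same feasible region $\mathcal{W}$ of radius $O\bigl(1/\sqrt{MQ_{\min}}\bigr)$, the same gradient bound $\sigma^2=O(M/Q_{\min})$, and hence the same iteration count~\eqref{seq:T} and per-iteration cost $O(MD+T_{\tilde{x}}+T_y)$. However, the step you explicitly defer---``confirming that the true minimizer of $I$ sits within $\mathcal{W}$''---is the one genuine gap, and it is not a routine verification: it requires an external input. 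The radius $O\bigl(1/\sqrt{MQ_{\min}}\bigr)$ cannot be extracted from $I$ alone; the paper obtains it from Ref.~\cite{B1}, which guarantees that when the $M$ features are drawn from the (approximately) optimized distribution $Q(v_m)P^{(\tau)}(v_m)$, learning to accuracy $O(\epsilon)$ is achievable by coefficients whose \emph{importance-weighted} norm satisfies $\sum_{m}Q(v_m)\alpha_m^2=\|\beta\|_2^2=O(1/M)$, where $\beta_m=\sqrt{Q(v_m)}\,\alpha_m$; only then does $Q(v_m)\geqq Q_{\min}$ yield $\|\alpha\|_2^2=O\bigl(1/(MQ_{\min})\bigr)$, i.e., a near-optimal $\alpha$ inside $\mathcal{W}$. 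Without this bound, nothing rules out that every coefficient vector achieving accuracy $O(\epsilon)$ lies outside your ball---in which case projected SGD converges to a suboptimal point---or else $\mathcal{W}$ must be enlarged and the claimed $1/Q_{\min}^2$ scaling is lost. Your intuition that the $1/Q_{\min}$ factor comes from importance reweighting is correct, but it has to be anchored in this specific result of Ref.~\cite{B1}.

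A secondary issue: your gradient estimator changes identity mid-argument. In the variance paragraph you sample a uniform coordinate $m$, so the residual term is $y_t-M\alpha_m\mathrm{e}^{2\pi\mathrm{i}v_m\cdot\tilde{x}_t}$ (this is the paper's estimator~\eqref{seq:hat_g_F}); in the cost paragraph you instead form the full residual $y_t-\sum_{m'}\alpha_{m'}\varphi(v_{m'},\tilde{x}_t)$. Either choice works: the full-residual estimator also costs only $O(MD)$ per iteration, since the sum over $m'$ is identical for all $M$ components and can be computed once and reused (your observation here is actually cleaner than the paper, which asserts that forgoing coordinate sampling would force $O(M^2D)$ per iteration), and its norm bound $O\bigl(\sqrt{M}+M\|\alpha\|_2\bigr)=O\bigl(\sqrt{M/Q_{\min}}\bigr)$ is no worse. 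But you must commit to one estimator and carry it through both the variance bound and the cost accounting; as written, the $\sigma^2$ you plug into the convex-SGD rate belongs to a different estimator than the one you price.
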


\begin{remark}
[Omission of some parameters from the informal statement of Theorem~\ref{sthm:runtime_all} in the main text]
  In the informal statement of Theorem~\ref{sthm:runtime_all} in the main text, we omit the dependency on $Q_{\min}$ and $\frac{1}{\delta}$ from the runtime.
  In the parameter region of sampling optimized random features that are weighted by importance and that nearly minimize the required number $M$ of features, the minimal weight $Q_{\min}$ of these features is expected to be sufficiently large compared to $\epsilon$, not dominating the runtime, while we include $Q_{\min}$ in our runtime analysis in Supplementary Material to bound the worst-case runtime.
  In addition, the dependency on $\frac{1}{\delta}$ is logarithmic in Theorem~\ref{sthm:runtime_all}.
  For these reasons, we simplify the presentation in the main text by  omitting $Q_{\min}$ and $\frac{1}{\delta}$.
\end{remark}

\begin{algorithm}[H]
  \caption{\label{salg:data_approximation}Algorithm for learning with optimized random features.}
  \begin{algorithmic}[1]
    \REQUIRE{Inputs to Algorithms~\ref{salg:random_feature_revised} and~\ref{salg:sgd}, required number $M$ of features for achieving  the learning to accuracy $O({\epsilon})$.}
    \ENSURE{Optimized random features $v_0,\ldots,v_{M-1}$ and coefficients $\alpha_0,\ldots,\alpha_{M-1}$ for $\sum_m\alpha_m\varphi(v_m,\cdot)$ to achieve the learning to accuracy $O(\epsilon)$ with probability greater than $1-\delta$.}
    \FOR{$m\in\left\{0,\ldots,M-1\right\}$}
    \STATE{$v_m\gets\textrm{quOptRF}$.}~\COMMENT{by Algorithm~\ref{salg:random_feature_revised}.}
    \ENDFOR%
    \STATE{Minimize $I(\alpha)$ to accuracy $O({\epsilon})$ by \textrm{SGD} to obtain $\alpha_0,\ldots,\alpha_{M-1}$.}~\COMMENT{by Algorithm~\ref{salg:sgd}.}
    \STATE{\textbf{Return }$v_0,\ldots,v_{M-1},\alpha_0,\ldots,\alpha_{M-1}$.}
  \end{algorithmic}
\end{algorithm}

\begin{algorithm}[H]
  \caption{\label{salg:sgd}Stochastic gradient descent (SGD).}
  \begin{algorithmic}[1]
    \REQUIRE{A function $I:\mathcal{W}\to\mathbb{R}$, a projection $\Pi$ to a convex parameter region $\mathcal{W}\subset\mathbb{R}^M$ specified by $Q_{\min}$ in~\eqref{seq:Q_min}, number of iterations $T\in\mathbb{N}$ specified by~\eqref{seq:T}, an initial point $\alpha^{\left(1\right)}\in\mathcal{W}$, $T$-dependent hyperparameters representing step sizes $\left(\eta^{\left(t\right)}:t=1,\ldots,T\right)$ given in Ref.~\cite{pmlr-v99-jain19a}, classical oracle functions $\mathcal{O}_{\tilde{x}},\mathcal{O}_y$ in~\eqref{seq:classical_oracles} for calculating $\hat{g}^{(t)}$.}
    \ENSURE{Approximate solution $\alpha$ minimizing $I(\alpha)$.}
    \FOR{$t\in\left\{1,\ldots,T\right\}$}
    \STATE{Calculate an unbiased estimate $\hat{g}^{\left(t\right)}$ of the gradient of $I$ satisfying $\mathbb{E}\left[\hat{g}^{(t)}\right]=\nabla I(\alpha^{(t)})$.}
    \STATE{$\alpha^{\left(t+1\right)}\gets\Pi(\alpha^{\left(t\right)}-\eta^{\left(t\right)} \hat{g}^{\left(t\right)})$.}
    \ENDFOR%
    \STATE{\textbf{Return } $\alpha\gets\alpha^{(T+1)}$.}
  \end{algorithmic}
\end{algorithm}

\begin{proof}
  We bound the runtime of each step of Algorithm~\ref{salg:data_approximation}.
  In Step~2, using Algorithm~\ref{salg:random_feature_revised} repeatedly $M$ times, we obtain $M$ optimized random features within time
  \begin{equation}
    \label{seq:algorithm_1}
    O(MT_1),
  \end{equation}
  where $T_1$ is the runtime of Algorithm~\ref{salg:random_feature_revised} given by Theorem~\ref{sthm:sampling}.
  As for Step~4, we bound the runtime of the SGD in Algorithm~\ref{salg:sgd}.
  In the following, we show that the runtime of each iteration of the SGD is $O\left(MD+T_{\tilde{x}}+T_y\right)$, and the required number of iterations in the SGD is upper bounded by $O\left(\frac{1}{{\epsilon}^2 Q_{\min}^2}\log\left(\frac{1}{\delta}\right)\right)$.

  We analyze the runtime of each iteration of the SGD\@.
  The dominant step in the $t$th iteration for each $t\in\left\{0,\ldots,T-1\right\}$ is the calculation of an unbiased estimate $\hat{g}^{(t)}$ of the gradient $\nabla I$, where $I$ is given by~\eqref{seq:generalization_error}.
The gradient of $I$ is given by
\begin{align}
  \label{seq:nabla_I}
  \nabla I\left(\alpha\right)&=
  \sum_{\tilde{x}\in\tilde{\mathcal{X}}}p^{(\rho)}\left(\tilde{x}\right)\left(\begin{matrix}2\Re\left[ \mathrm{e}^{-2\pi\mathrm{i}v_0\cdot \tilde{x}}\left(f\left(\tilde{x}\right)-\sum_{m=0}^{M-1}\alpha_m\mathrm{e}^{2\pi\mathrm{i}v_m\cdot \tilde{x}}\right)\right]\\
                 \vdots\\
                 2\Re\left[ \mathrm{e}^{-2\pi\mathrm{i}v_{M-1}\cdot \tilde{x}}\left(f\left(\tilde{x}\right)-\sum_{m=0}^{M-1}\alpha_m\mathrm{e}^{2\pi\mathrm{i}v_m\cdot \tilde{x}}\right)\right]
         \end{matrix}\right)\nonumber\\
                             &=\sum_{m=0}^{M-1}\frac{1}{M}\sum_{\tilde{x}\in\tilde{\mathcal{X}}}p^{(\rho)}\left(\tilde{x}\right)\left(\begin{matrix}2\Re\left[ \mathrm{e}^{-2\pi\mathrm{i}v_0\cdot \tilde{x}}\left(f\left(\tilde{x}\right)-M\alpha_m\mathrm{e}^{2\pi\mathrm{i}v_m\cdot \tilde{x}}\right)\right]\\
                 \vdots\\
                 2\Re\left[\mathrm{e}^{-2\pi\mathrm{i}v_{M-1}\cdot \tilde{x}}\left(f\left(\tilde{x}\right)-M\alpha_m\mathrm{e}^{2\pi\mathrm{i}v_m\cdot \tilde{x}}\right)\right]
         \end{matrix}\right),
\end{align}
where $\Re$ represents the real part.
In the $t$th iteration, Algorithm~\ref{salg:sgd} estimates the gradient at a point denoted by
\begin{equation}
  \alpha^{(t)}=\left(\begin{matrix}
      \alpha_{0}^{(t)}\\
      \vdots\\
      \alpha_{M-1}^{(t)}
  \end{matrix}\right)\in\left\{\alpha^{\left(1\right)},\ldots,\alpha^{\left(T\right)}\right\}.
\end{equation}

Using a pair of given data points $\left(\tilde{x}_t,y_t=f\left(\tilde{x}_t\right)\right)\in\left\{\left(\tilde{x}_{0},y_{0}\right),\left(\tilde{x}_{1},y_{1}\right),\ldots,\right\}$ sampled with probability $p^{(\rho)}\left(\tilde{x}\right)$ as observations of an independently and identically distributed (IID) random variable, and an integer $m\in\left\{0,\ldots,M-1\right\}$ uniformly sampled with probability $\frac{1}{M}$, we give an unbiased estimate $\hat{g}^{\left(t\right)}$ of this gradient at each point $\alpha^{(t)}$ by
\begin{align}
\label{seq:hat_g_F}
\hat{g}^{\left(t\right)}&=
\left(\begin{matrix}2\Re\left[ \mathrm{e}^{-2\pi\mathrm{i}v_0\cdot \tilde{x}_t}\left(y_t-M\alpha^{(t)}_m\mathrm{e}^{2\pi\mathrm{i}v_m\cdot \tilde{x}_t}\right)\right]\\
                 \vdots\\
                 2\Re\left[\mathrm{e}^{-2\pi\mathrm{i}v_{M-1}\cdot \tilde{x}_t}\left(y_t-M\alpha^{(t)}_m\mathrm{e}^{2\pi\mathrm{i}v_m\cdot \tilde{x}_t}\right)\right]
         \end{matrix}\right).
\end{align}
By construction, we have
\begin{equation}
  \mathbb{E}\left[\hat{g}^{(t)}\right]=\nabla I\left(\alpha^{(t)}\right).
\end{equation}
We obtain $\tilde{x}_t$ using the classical oracle $\mathcal{O}_{\tilde{x}}$ within time $T_{\tilde{x}}$, and $y_t=f\left(\tilde{x}_t\right)$ using the classical oracle $\mathcal{O}_{y}$ within time $T_{y}$.
As for $m$, since we can represent the integer $m$ using $\lceil\log_2\left(M\right)\rceil$ bits,
where $\lceil x\rceil$ is the least integer greater than or equal to $x$,
we can sample $m$ from a uniform distribution using a numerical library for generating a random number within time
$O\left(\polylog\left(M\right)\right)$.
Note that even in case it is expensive to use randomness in classical computation, quantum computation can efficiently sample $m$ of $\lceil\log_2\left(M\right)\rceil$ bits from the uniform distribution within time $O(\log\left(M\right))$.
In this quantum algorithm, $\lceil\log_2\left(M\right)\rceil$ qubits are initially prepared in $\Ket{0}^{\otimes \lceil\log_2\left(M\right)\rceil}$, and the Hadamard gate $H=\frac{1}{\sqrt{2}}\left(\begin{smallmatrix} 1 & 1 \\ 1 & -1 \end{smallmatrix}\right)$ is applied to each qubit to obtain
\begin{equation}
  \frac{1}{\sqrt{2^{\lceil\log_2\left(M\right)\rceil}}}{\left(\Ket{0}+\Ket{1}\right)}^{\otimes \lceil\log_2\left(M\right)\rceil},
\end{equation}
followed by a measurement of this state in the computational basis to obtain a $\lceil\log_2\left(M\right)\rceil$-bit outcome sampled from the uniform distribution.
Given $\tilde{x}_t$, $y_t$, and $m$, we can calculate each of the $M$ element of $\hat{g}$ in~\eqref{seq:hat_g_F} within time $O(D)$ for calculating the inner product of $D$-dimensional vectors, and hence the calculation of all the $M$ elements takes time $O(MD)$.
Therefore, each iteration takes time
\begin{equation}
  \label{seq:sgd_each_iteration}
  O\left(T_{\tilde{x}}+T_y+\polylog(M)+MD\right)=O\left(MD+T_{\tilde{x}}+T_y\right).
\end{equation}
Note that without sampling $m$, we would need $O(M^2 D)$ runtime per iteration because each of the $M$ elements of the gradient in~\eqref{seq:nabla_I} includes the sum over $M$ terms; thus, the sampling of $m$ is crucial for achieving our $O(MD)$ runtime.

To bound the required number of iterations, we use an upper bound of the number of iterations in Algorithm~\ref{salg:sgd} given in Ref.~\cite{pmlr-v99-jain19a}, which shows that if we have the following:
\begin{itemize}
  \item for any $\alpha\in\mathcal{W}$,
\begin{align}
  \left\|\nabla I(\alpha)\right\|_2\leqq L,
\end{align}
\item the unbiased estimate $\hat{g}$ for any point $\alpha\in\mathcal{W}$ almost surely satisfies
\begin{align}
  \left\|\hat{g}\right\|_2\leqq L,
\end{align}
\item the diameter of $\mathcal{W}$ is bounded by
\begin{equation}
  \diam\mathcal{W}\leqq d,
\end{equation}
\end{itemize}
then, after $T$ iterations, with high probability greater than $1-\delta$, Algorithm~\ref{salg:sgd} returns $\alpha$ satisfying
\begin{equation}
  \label{seq:bound_iteration}
  \epsilon= O\left(dL\sqrt{\frac{\log\left(\frac{1}{\delta}\right)}{T}}\right),
\end{equation}
where we write
\begin{equation}
  \epsilon=I(\alpha)-\min_{\alpha\in\mathcal{W}}\left\{I(\alpha)\right\}.
\end{equation}
In the following, we bound $d$ and $L$ in~\eqref{seq:bound_iteration} to clarify the upper bound of the required number of iterations $T$ in our setting.

To show a bound of $d$,
recall the assumption that we are given a sufficiently large number $M$ of features for achieving the learning in our setting.
Then,~\citet{B1} has shown that
with the $M$ features sampled from the weighted probability distribution $Q(v_m)P^{(\tau)}(v_m)$ by Algorithm~\ref{salg:random_feature_revised},
the learning to the accuracy $O(\epsilon)$ can be achieved with coefficients satisfying
\begin{equation}
  \|\beta\|_2^2= O\left(\frac{1}{M}\right),
\end{equation}
where $\beta={(\beta_0,\ldots,\beta_{M-1})}^\mathrm{T}$ is given for each $m$ by
\begin{equation}
  \beta_m=\sqrt{Q(v_m)}\alpha_m.
\end{equation}
This bound yields
\begin{equation}
  \label{seq:bach_bound}
  \sum_{m=0}^{M-1}Q(v_m)\alpha_m^2= O\left(\frac{1}{M}\right).
\end{equation}
In the worst case, a lower bound of the left-hand side is
\begin{equation}
  \label{seq:lower_bound}
  \sum_{m=0}^{M-1}Q(v_m)\alpha_m^2\geqq Q_{\min}\left\|\alpha\right\|_2^2.
\end{equation}
From~\eqref{seq:bach_bound} and~\eqref{seq:lower_bound}, we obtain an upper bound of the norm of $\alpha$ minimizing $I$
\begin{equation}
  \label{seq:alpha_bound}
  \left\|\alpha\right\|_2^2=O\left(\frac{1}{MQ_{\min}}\right).
\end{equation}
Thus, it suffices to choose the parameter region $\mathcal{W}$ of $\alpha$ as an $M$-dimensional ball of center $0$ and of radius $O\left(\frac{1}{\sqrt{MQ_{\min}}}\right)$, which yields the diameter
\begin{equation}
  \label{seq:bound_d}
  d=O\left(\frac{1}{\sqrt{MQ_{\min}}}\right).
\end{equation}

As for a bound of $L$, we obtain from~\eqref{seq:hat_g_F} and~\eqref{seq:alpha_bound}
\begin{equation}
  \left\|\hat{g}\right\|_2=O\left(M\left\|\alpha\right\|_2+\sqrt{M}\right)=O\left(\sqrt{\frac{M}{Q_{\min}}}+\sqrt{M}\right)=O\left(\sqrt{\frac{M}{Q_{\min}}}\right),
\end{equation}
where we take the worst case of small $Q_{\min}$, and we use bounds
\begin{align}
  \sqrt{\sum_{m=0}^{M-1} {\left|M\alpha_m \mathrm{e}^{2\pi\mathrm{i}v_m\cdot \tilde{x}_t}\right|}^2}&=O(M\|\alpha\|_2),\\
  \sqrt{\sum_{m=0}^{M-1} y_t^2}&=O(\sqrt{M}).
\end{align}
Since the upper bound of $\left\|\hat{g}\right\|_2$ is larger than $\left\|\nabla I(\alpha)\right\|_2$,
we have
\begin{equation}
  \label{seq:bound_L}
  L=O\left(\sqrt{\frac{M}{Q_{\min}}}\right).
\end{equation}

Therefore, using~\eqref{seq:bound_d} and~\eqref{seq:bound_L}, we bound the right-hand side of~\eqref{seq:bound_iteration}
\begin{equation}
  \epsilon=O\left(dL\sqrt{\frac{\log\left(\frac{1}{\delta}\right)}{T}}\right)=O\left(\frac{1}{Q_{\min}}\sqrt{\frac{\log\left(\frac{1}{\delta}\right)}{T}}\right).
\end{equation}
Thus, it follows that
\begin{equation}
  \label{seq:sgd_number_of_iterations}
  T=O\left(\frac{1}{\epsilon^2 Q_{\min}^2}\log\left(\frac{1}{\delta}\right)\right).
\end{equation}

Combining~\eqref{seq:algorithm_1},~\eqref{seq:sgd_each_iteration}, and~\eqref{seq:sgd_number_of_iterations}, we obtain the claimed overall runtime.
\end{proof}

\bibliographystyle{unsrtnat}

\bibliography{a}

\end{document}